\documentclass[10pt,a4paper]{amsart}
\usepackage[margin=20mm]{geometry}
\usepackage[numbers]{natbib}
\usepackage{hyperref}

\usepackage{amssymb,amsmath}
\usepackage{amsthm}
\usepackage{bbm}
\usepackage{stmaryrd}
\usepackage[usenames,dvipsnames]{xcolor}
\usepackage{color}

\input xy
\xyoption{all} 


\numberwithin{equation}{section}

\newtheorem{theorem}{Theorem}[section]
\newtheorem{corollary}[theorem]{Corollary}
\newtheorem{lemma}[theorem]{Lemma}
\newtheorem{proposition}[theorem]{Proposition}

\theoremstyle{definition}
\newtheorem{definition}[theorem]{Definition}
\newtheorem{remark}[theorem]{Remark}
\newtheorem{example}[theorem]{Example}

\newenvironment{warning}[1][Warning.]{\begin{trivlist}
\item[\hskip \labelsep {\bfseries #1}]}{\end{trivlist}}
\newenvironment{statement}[1][Statement.]{\begin{trivlist}
\item[\hskip \labelsep {\bfseries #1}]\it  }{ \end{trivlist}}

\newcommand{\Id}{\mathbbmss{1}}

\DeclareMathOperator{\Vect}{Vect}

\DeclareMathOperator{\Span}{Span}

\font\black=cmbx10 \font\sblack=cmbx7 \font\ssblack=cmbx5 \font\blackital=cmmib10  \skewchar\blackital='177
\font\sblackital=cmmib7 \skewchar\sblackital='177 \font\ssblackital=cmmib5 \skewchar\ssblackital='177
\font\sanss=cmss10 \font\ssanss=cmss8 
\font\sssanss=cmss8 scaled 600 \font\blackboard=msbm10 \font\sblackboard=msbm7 \font\ssblackboard=msbm5
\font\caligr=eusm10 \font\scaligr=eusm7 \font\sscaligr=eusm5  \font\fraktur=eufm10
\font\sfraktur=eufm7 \font\ssfraktur=eufm5 
\font\bsymb=cmsy10 scaled\magstep2
\def\all#1{\setbox0=\hbox{\lower1.5pt\hbox{\bsymb
       \char"38}}\setbox1=\hbox{$_{#1}$} \box0\lower2pt\box1\;}
\def\exi#1{\setbox0=\hbox{\lower1.5pt\hbox{\bsymb \char"39}}
       \setbox1=\hbox{$_{#1}$} \box0\lower2pt\box1\;}

\def\tx#1{{\fam0\relax#1}}

\newfam\bifam
\textfont\bifam=\blackital \scriptfont\bifam=\sblackital \scriptscriptfont\bifam=\ssblackital

\newfam\blfam
\textfont\blfam=\black \scriptfont\blfam=\sblack \scriptscriptfont\blfam=\ssblack

\newfam\bbfam
\textfont\bbfam=\blackboard \scriptfont\bbfam=\sblackboard \scriptscriptfont\bbfam=\ssblackboard

\newfam\ssfam
\textfont\ssfam=\sanss \scriptfont\ssfam=\ssanss \scriptscriptfont\ssfam=\sssanss
\def\sss#1{{\fam\ssfam\relax#1}}

\newfam\clfam
\textfont\clfam=\caligr \scriptfont\clfam=\scaligr \scriptscriptfont\clfam=\sscaligr

\newfam\frfam
\textfont\frfam=\fraktur \scriptfont\frfam=\sfraktur \scriptscriptfont\frfam=\ssfraktur

\def\hpb#1{\setbox0=\hbox{${#1}$}
    \copy0 \kern-\wd0 \kern.2pt \box0}
\def\vpb#1{\setbox0=\hbox{${#1}$}
    \copy0 \kern-\wd0 \raise.08pt \box0}

\def\pmb#1{\setbox0\hbox{${#1}$} \copy0 \kern-\wd0 \kern.2pt \box0}
\def\pmbb#1{\setbox0\hbox{${#1}$} \copy0 \kern-\wd0
      \kern.2pt \copy0 \kern-\wd0 \kern.2pt \box0}
\def\pmbbb#1{\setbox0\hbox{${#1}$} \copy0 \kern-\wd0
      \kern.2pt \copy0 \kern-\wd0 \kern.2pt
    \copy0 \kern-\wd0 \kern.2pt \box0}
\def\pmxb#1{\setbox0\hbox{${#1}$} \copy0 \kern-\wd0
      \kern.2pt \copy0 \kern-\wd0 \kern.2pt
      \copy0 \kern-\wd0 \kern.2pt \copy0 \kern-\wd0 \kern.2pt \box0}
\def\pmxbb#1{\setbox0\hbox{${#1}$} \copy0 \kern-\wd0 \kern.2pt
      \copy0 \kern-\wd0 \kern.2pt
      \copy0 \kern-\wd0 \kern.2pt \copy0 \kern-\wd0 \kern.2pt
      \copy0 \kern-\wd0 \kern.2pt \box0}


\mathchardef\za="710B  
\mathchardef\zb="710C  
\mathchardef\zg="710D  
\mathchardef\zd="710E  
\mathchardef\zve="710F 
\mathchardef\zz="7110  
\mathchardef\zh="7111  
\mathchardef\zvy="7112 
\mathchardef\zi="7113  
\mathchardef\zk="7114  
\mathchardef\zl="7115  
\mathchardef\zm="7116  
\mathchardef\zn="7117  
\mathchardef\zx="7118  
\mathchardef\zp="7119  
\mathchardef\zr="711A  
\mathchardef\zs="711B  
\mathchardef\zt="711C  
\mathchardef\zu="711D  
\mathchardef\zvf="711E 
\mathchardef\zq="711F  
\mathchardef\zc="7120  
\mathchardef\zw="7121  
\mathchardef\ze="7122  
\mathchardef\zy="7123  
\mathchardef\zf="7124  
\mathchardef\zvr="7125 
\mathchardef\zvs="7126 
\mathchardef\zf="7127  
\mathchardef\zG="7000  
\mathchardef\zD="7001  
\mathchardef\zY="7002  
\mathchardef\zL="7003  
\mathchardef\zX="7004  
\mathchardef\zP="7005  
\mathchardef\zS="7006  
\mathchardef\zU="7007  
\mathchardef\zF="7008  
\mathchardef\zW="700A  
\mathchardef\zC="7009  

\newcommand{\be}{\begin{equation}}
\newcommand{\ee}{\end{equation}}

\newcommand{\bea}{\begin{eqnarray}}
\newcommand{\eea}{\end{eqnarray}}
\def\*{{\textstyle *}}
\newcommand{\R}{{\mathbb R}}

\newcommand{\C}{{\mathbb C}}
\newcommand{\Z}{{\mathbb Z}}

\newcommand{\s}{{\textstyle *}}







\def\Vect{\sss{Vect}}




\def\sT{{\sss T}}

\def\xi{\tx{i}}


\def\s*{{\scriptstyle *}}

\def\cO{\mathcal{O}}


\newcommand{\beas}{\begin{eqnarray*}}
\newcommand{\eeas}{\end{eqnarray*}}

\def\half{\frac{1}{2}}

\title{Odd Connections on Supermanifolds: Existence and Relation with Affine Connections }

   \author{Andrew James Bruce$^\dag$ \& Janusz Grabowski$^\ddag$}
 \address{ $^\dag$Mathematics Research Unit, University of Luxembourg,  Esch-sur-Alzette, Luxembourg \\
   \newline  $^\ddag$ Institute of Mathematics, Polish Academy of Sciences, Poland}
   \email{andrewjamesbruce@googlemail.com,~jagrab@impan.pl}


\date{\today}
 \begin{document}

\begin{abstract}
The notion of an \emph{odd quasi-connection} on a supermanifold, which is loosely an affine connection that carries non-zero Grassmann parity, is  examined. Their torsion and curvature are defined, however, in general, they are not tensors.  A special  class of such generalised connections, referred to as \emph{odd connections} in this paper, have torsion and curvature  tensors.  Part of the structure is an odd involution of the tangent bundle of the supermanifold and this puts drastic restrictions on the supermanifolds that admit odd connections. In particular, they must have equal number of even and odd dimensions. Amongst other results, we show that an odd connection is defined, up to an odd tensor field of type $(1,2)$, by an affine connection and an odd endomorphism of the tangent bundle. Thus, the theory of odd connections and  affine connections are not completely separate  theories.  As an example relevant to physics, it is shown that  $\mathcal{N}=1$ super-Minkowski spacetime admits a natural odd connection.
\par
\smallskip\noindent
{\bf Keywords:}
Supermanifolds;~Affine Connections;~ Quasi-Connections;~Lie Supergroups.\par
\smallskip\noindent
{\bf MSC 2010:}~16W50;~17B66;~53B05;~58A50. 	
\end{abstract}

 \maketitle

\setcounter{tocdepth}{2}
 \tableofcontents

\section{Introduction}
 It hardly needs to be mentioned, but the notion of a connection in its various guises is of central importance in differential geometry and geometric approaches to physics.   A prominent example of the r\^{o}le of connections in modern mathematics is the construction of characteristic classes of principle bundles via  Chern--Weil  theory. In physics, connections are related to gauge fields and are vital in geometric approaches to relativistic mechanics, general  relativity and  other geometric approaches to gravity such as metric-affine gravity,  Fedosov's  deformation quantisation, adiabatic evolution via the Berry phase, and so on. For an overview of connections in classical and quantum field theory the reader may consult \cite{Mangiarotti:2000}. Over the years there have been many generalisations of a connection on a manifold given in the literature, including the generalisation to Lie algebroids, Courant algebroids (see \cite{Gualtieri:2010}) and connections adapted to  non-negatively graded manifolds (see \cite{Bruce:2019}), to name a few.  In the noncommutative setting, we have, for example, linear connections on bimodules over almost commutative algebras (see \cite{Ciupala:2003}).  The situation with connections in general with noncommutative geometry  is more subtle and depends on the approach taken. A brief discussion of this and the notion of $q$-deformed Levi-Civita connections can be found in the preprint \cite{Arnlind:2020}.  \par
 Supersymmetry has been an attractive subject to theoretical physicists since its inception in the early 1970s. This is, in part, because supersymmetric field theories often have desirable mathematical properties, such as milder divergences and  in very special cases the theories can be finite. Supersymmetry also  removes the tachyon from the spectrum of string theories and naturally leads to a theory of gravity when promoted to a local gauge theory. Alongside the developments in physics, supergeometry, i.e., $\Z_2$-graded geometry, has become a respectable branch of mathematics with deep and surprising links with not just physics, but also classical differential geometry, homological and homotopical algebra, to name a few. A prime example of the aforementioned mathematical insight is Voronov's approach to Drinfeld doubles for Lie bialgebroids (see \cite{Voronov:2002}). The `operational' use of supergeometry goes back to the early days of supersymmetry with the superspace methods of Salam \& Strathdee \cite{Salam:1974}. Superspace methods provide an elegant way of constructing supersymmetric actions and are routinely used today. To give mathematical rigour to the notion of a superspace,  Berezin \& Leties \cite{Berezin:1976} defined a supermanifold in terms of algebraic geometry, specifically using locally superringed spaces. Much of the fundamental work on supergeometry was carried out between 1965 and 1975 by Berezin and his collaborators.  However, we must stress that the theory of supermanifolds is well-motivated independently of supersymmetry. For instance, any geometric formulation of pseudo-classical theories with fermions, ghost fields, antifields etc requires supergeometry.  \par
The notion of a connection, particularly Koszul's algebraic notion, generalises to the category of supermanifolds rather directly, in essence, one needs to insert the correct plus and minus signs into the classical definitions. Connections on supermanifolds appear in the context of Fedosov supermanifolds \cite{Geyer:2004}, the BV-formalism \cite{Batalin:2008} and natural quantisation of supermanifolds \cite{Leuther:2011}, for example. It is well-known that the fundamental theorem of Riemannian geometry generalised to supermanifolds equipped with either an even or odd Riemannian metric, see for example \cite{Monterde:1996}. As a historical remark, one of the earliest papers on supergravity is rooted in Riemannian supergeometry, though at the time the theory of supermanifolds was in its infancy (see \cite{Aronwitt:1976}).   Importantly from the perspective of this paper, an affine connection on a supermanifold is an even object. That is, the parity of the connection itself is zero.  In this paper, we address the notion of an affine connection on a supermanifold that is odd, i.e., carries Grassmann parity one.  Such a concept has not appeared in the literature before. \par
Our approach to odd connections on supermanifolds is very similar to the notion of a quasi-connection as first defined by Y-C. Wong \cite{Wong:1962} in 1962, which is related to the notion of a connection  on a  Lie algebroid  as first introduced by  Mackenzie in the transitive case \cite{Mackenzie:1987}, and  a connection over a vector bundle map as defined by Cantrijn \&  Langerock \cite{Cantrijn:2003}. However the presence of a $\Z_2$-grading and, in particular,  the fact that we want odd objects means that we cannot directly translate all of Wong's constructions to our setting. Similarly, our notion of an odd connection is not simply a specialisation of a Lie algebroid connection. For a review of quasi-connections and further references, the reader may consult Etayo \cite{Etayo:1993}. We remark that the notions we put forward are not to be confused with Quillen's notion of a superconnection (see \cite{Quillen:1985}).   \par
The motivation for this work stems from the philosophy that alongside the Grassmann even generalisations of classical notions in differential geometry,    Grassmann odd analogues can also be found. Although odd structures have no classical counterpart, they should still be treated on equal footing as even structures.  As prime examples, we have even and odd Riemannian structures, symplectic/Poisson structures and contact/Jacobi structures. Most of these odd structures have found some application in physics, in particular, odd symplectic/Poisson structures are central to the BV-formalism and its generalisations, see for example \cite{Kazinski:2005, Khudaverdian:1991,Khudaverdian:2004,Khudaverdian:2002,Lyakhovich:2004,Schwarz:1993}. The notable exception here are odd Riemannian structures, which so far have not found an application in physics.  Furthermore,  odd counterparts of superconformal transformations that twist the parity of the standard basis of the module of vector fields on $\C^{1|1}$, known as  $\textrm{TPt}$ transformations, were first proposed in \cite{Duplij:1991, Duplij:1997}, and, in the same papers, led to an odd generalisation of super Riemann surfaces.   These  generalisations spawned an odd analogue of a complex structure and  were motivated by developments in two-dimensional superconformal field theory and its applications to string theory. With these observations in mind, the natural question of the notion of a connection on a supermanifold that carries non-zero Grassmann parity arises.  Alongside this, if a good concept of an odd connection exists, then do any of the supermanifolds of interest in physics admit such things?   Is there any relation with supersymmetry as formulated in superspace?
\smallskip

\noindent \textbf{Main Results:}  Loosely, an odd quasi-connection consists of an odd linear map $\nabla : \Vect(M) \times \Vect(M) \rightarrow \Vect(M)$ and an odd endomorphism $ \rho : \Vect(M)  \rightarrow  \Vect(M) $, together with a compatibility condition between the two,  which is just a graded Leibniz rule, see Definition \ref{Def:OddQuaCon} for details. There are natural generalisations of the torsion and curvature for odd quasi-connections, see Definition \ref{Def:Torsion} and Definition \ref{Def:RieCurv}. In general, these are not tensors. Amongst other results, we have the following.
\begin{enumerate}
 \setlength\itemsep{1em}
\item If $\rho$ is an odd involution, then the torsion and curvature are tensors, see Theorem \ref{Thm:TorCurTensors}. Such odd quasi-connections we refer to as \emph{odd connections}. An important result here is that odd connections and affine connections  are not entirely separate notions. In particular, an odd connection is defined by an affine connection and an odd involution, up to an odd tensor field. Conversely, an odd connection canonically defines an affine connection. Essentially, if $\bar{\nabla}$ is an affine connection, then
$$\nabla_X := \bar{\nabla}_{\rho(X)}$$
 is an odd connection and, as $\rho^2 = \Id$, the converse statement is also true.  See Proposition \ref{Prop:CanGen} and Proposition \ref{Prop:ConGenII} for details.
\item The   curvature and torsion of an odd connection satisfy a generalised version of the algebraic Bianchi identity, see Theorem \ref{Thm:Bianchi1}. We view this identity as a compatibility between the  curvature and torsion.
\item We prove that $n|n$-dimensional Lie supergroups always admit an odd connection, see Theorem \ref{Thm:ExodConLieGrp}.  More generally than this,  $n|n$-dimensional parallelisable supermanifolds  always admit odd connections, see Theorem \ref{Thm:OddConPara}.
\item We show that $d=4$, $\mathcal{N} =1$ super-Minkowski spacetime comes equipped with a natural odd connection that we refer to as the {SUSY odd connection} (see Definition \ref{Def:SUSYOddCon}). Moreover, this odd connection is flat but has non-zero torsion, see Proposition \ref{Prop:SMinkFlat} and Proposition  \ref{Prop:SMinkTorsion}.
\item The example of super-Minkowski space-time leads to the notion of an odd Weitzenb\"{o}ck connection (see Definition \ref{Def:OddWeitzCon}) on a  $n|n$-dimensional parallelisable supermanifold. We show that such connections only depend on the existence of an odd involution and as such $n|n$-dimensional parallelisable supermanifold always admit odd Weitzenb\"{o}ck connection, see Proposition \ref{Prop:ExistWeitz}. Furthermore, it is shown that an odd Weitzenb\"{o}ck connection is compatible with an odd Riemannian metric, see  Proposition \ref{Prop:OddWeitComOddMet}.
\end{enumerate}
 In short, we have a reasonable theory of odd connections on a supermanifold,  even if the theory is not completely distinct from the theory of affine connections. Moreover, some of the supermanifolds of physical interest can be equipped with such structures.
\medskip

\noindent \textbf{Notation and preliminary concepts:} We will assume that the reader has a grasp of the basic theory of supermanifolds. For overviews of the general theory the reader may consult, for example, \cite{Carmeli:2011,Manin:1997,Varadrajan:2004}. We understand a \emph{supermanifold} $M := (|M|, \:  \cO_{M})$ of dimension $n|m$  to be a supermanifold as defined by Berezin \& Leites  \cite{Berezin:1976,Leites:1980}, i.e., as a locally superringed space that is locally isomorphic to $\mathbb{R}^{n|m} := \big (\R^{n}, C^{\infty}(\R^{n})\otimes \Lambda(\zx^{1}, \cdots \zx^{m}) \big)$. Here, $\Lambda(\zx^{1}, \cdots \zx^{m})$ is the Grassmann algebra (over $\R$) with $m$ generators.   Associated with any supermanifold is the sheaf morphism $\epsilon_{-} : \cO_M(-) \rightarrow C^\infty_{|M|}(-)$, which means that the underlying topological space $|M|$ is, in fact, a smooth manifold. This manifold we refer to as the \emph{reduced manifold}. Morphisms of supermanifolds are morphisms as  superringed spaces. That is, a morphism $\phi : M \rightarrow N$ consists of a pair $ \phi = (|\phi|, \phi^*  )$, where $|\phi| : |M| \rightarrow |N|$ is a continuous map (in fact, smooth) and  $\phi^*$  is a family of superring morphisms $\phi^*_{|V|} : \cO_N(|V|) \rightarrow \cO_M\big( |\phi|^{-1}(|V|)\big)$, for every open $|V| \subset |N|$, that respect the restriction maps. Given any point on $|M|$ we can always find a `small enough' open neighbourhood $|U|\subseteq |M|$ such that we can  employ local coordinates $x^{a} := (x^{\mu} , \zx^{i})$ on $M$.
It is well-known that morphisms between supermanifolds are completely described by their local coordinate expressions. In particular, changes of local coordinates we will write, using the standard abuses of notation, as $x^{a'} = x^{a'}(x)$.
The (global) sections of the structure sheaf we will refer to as \emph{functions}. The  supercommutative algebra of functions we will denote as $C^{\infty}(M)$.     The Grassmann parity of an object $A$ will be denoted by `tilde', i.e., $\widetilde{A} \in \Z_{2}$.  By `even' or `odd' we will be referring to the  Grassmann parity of an object. Note that as we are dealing with real supermanifolds in the locally ringed space approach, partitions of unity and bump functions always exist (see \cite[Lemma 3.1.7 and Corollary 3.1.8]{Leites:1980}).\par
The \emph{tangent sheaf} $\mathcal{T}M$ of a supermanifold $M$ is  defined as the sheaf of derivations of sections of the structure sheaf. Naturally, this is  a sheaf of locally free $\cO_{M}$-modules of rank $n|m$. Global sections of the tangent sheaf are  referred to as \emph{vector fields}. We denote the $\cO_{M}(|M|)$-module of vector fields as $\Vect(M)$. The total space of the tangent sheaf $\sT M$ we will refer to  as the \emph{tangent bundle}.

\section{Odd Quasi-Connections, their Torsion and Curvature}
\subsection{Odd Quasi-Connections}
Modifying the definition of a quasi-connection as first given by Wong \cite{Wong:1962}, to the setting of supermanifolds and odd maps of modules, we propose the following definition.
\begin{definition}\label{Def:OddQuaCon}
An \emph{odd quasi-connection} on a supermanifold $M$ is a pair $(\nabla, \, \rho)$, where
$$\nabla : \Vect(M) \times \Vect(M) \longrightarrow \Vect(M)$$
is a bi-linear map, written as $(X,Y) \mapsto \nabla_X Y$, and
$$ \rho : \Vect(M)  \longrightarrow  \Vect(M) $$
is an odd $C^\infty(M)$-module endomorphism, that satisfy the following for all (homogeneous) $X$ and $Y \in \Vect(M)$ and $f \in C^\infty(M)$:
\begin{enumerate}
\item $\widetilde{\nabla_X Y} = \widetilde{X} + \widetilde{Y} + 1$,
\item $\nabla_{f\,X}Y = (-1)^{\widetilde{f}} f \, \nabla_X Y$,
\item $\nabla_X f\,Y =  \rho(X)f \, Y + (-1)^{(\widetilde{X} +1) \widetilde{f}} f \, \nabla_X Y$.
\end{enumerate}
\end{definition}
\begin{remark}
The reader should also note the similarity and differences with a Lie algebroid connection where the anchor map plays the analogue r\^{o}le to the odd endomorphism in the above definition. Also, note that at this stage there are no further conditions on the odd endomorphism.
\end{remark}
\begin{remark}
The notion of an odd quasi-connection could be reformulated as an \emph{even} map $\nabla : \Vect(M) \rightarrow  \Omega^1(M) \otimes \Vect(M)$ that satisfies the Leibniz rule $\nabla(f\,X) = \rho(f)\otimes X + f\; \nabla X$, where $\rho : C^\infty(M) \rightarrow \Omega^1(M)$ is an even map, it serves as a replacement to the de Rham differential in the standard setting of connections. We choose not to adopt this point of view in order to mimic the existing constructions related to quasi-connections following Wong \cite{Wong:1962}. Moreover, in physics one usually requires a connection understood as a covariant derivative.
\end{remark}
\begin{proposition}\label{Prop:AffineModule}
The set of all odd quasi-connections on a supermanifold $M$ is and affine space and a $C^\infty_0(M)$-module.
\end{proposition}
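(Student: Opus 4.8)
The plan is to verify the two claims separately, each time exhibiting the relevant algebraic structure directly and checking that axioms (1)--(3) of Definition \ref{Def:OddQuaCon} are preserved. For the affine space statement, I would first observe that the set of odd quasi-connections is nonempty only after one fixes the odd endomorphism $\rho$; so the cleanest formulation is that, for each fixed odd $C^\infty(M)$-module endomorphism $\rho$, the set of $\nabla$ making $(\nabla,\rho)$ an odd quasi-connection is an affine space modelled on the odd $C^\infty(M)$-linear maps $\Vect(M)\times\Vect(M)\to\Vect(M)$, i.e. the odd tensor fields of type $(1,2)$. Indeed, if $(\nabla,\rho)$ and $(\nabla',\rho)$ are both odd quasi-connections with the \emph{same} $\rho$, then their difference $S_X Y:=\nabla_X Y-\nabla'_X Y$ kills the inhomogeneous term in axiom (3), so $S$ is $C^\infty(M)$-linear in $Y$; axiom (2) gives $C^\infty(M)$-linearity in $X$; and axiom (1) forces $S$ to be odd. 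Conversely, adding such an odd $(1,2)$-tensor to a given $\nabla$ preserves all three axioms, because axioms (2) and (3) only constrain the $X$- and $Y$-scaling behaviour, which a tensor leaves unchanged, and the parity shift in (1) matches. Hence the space of all odd quasi-connections fibres over the (vector) space of odd $\rho$'s with each fibre an affine space over the odd $(1,2)$-tensors; I would either state the result in this fibred form, or — if the authors intend a single global affine structure — note that the total set is the total space of an affine bundle over the linear space of odd endomorphisms.

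For the module statement, I would introduce $C^\infty_0(M)$ as the subalgebra of even functions and define, for $g\in C^\infty_0(M)$ and an odd quasi-connection $(\nabla,\rho)$, the scaled pair $g\cdot(\nabla,\rho):=(g\nabla,\, g\rho)$, where $(g\nabla)_X Y:=g\,\nabla_X Y$ and $(g\rho)(X):=g\,\rho(X)$. The point is to check this is again an odd quasi-connection: since $g$ is even, $g\rho$ is still an odd $C^\infty(M)$-module endomorphism; axiom (1) is unaffected because multiplication by an even function preserves parity; axiom (2) follows since $g$ commutes past everything with no sign (it is even); and axiom (3) is the only one requiring care — one computes
\[
(g\nabla)_X(fY)=g\,\nabla_X(fY)=g\big(\rho(X)f\,Y+(-1)^{(\widetilde X+1)\widetilde f}f\,\nabla_X Y\big)=(g\rho)(X)f\,Y+(-1)^{(\widetilde X+1)\widetilde f}f\,(g\nabla)_X Y,
\]
using that $g$ is even so it can be moved next to $\rho(X)f$ and next to $f\,\nabla_X Y$ without extra signs. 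Bilinearity of the addition of pairs (componentwise, $(\nabla,\rho)+(\nabla',\rho')=(\nabla+\nabla',\rho+\rho')$) over $C^\infty_0(M)$ and the module axioms are then immediate from the corresponding pointwise identities in $\Vect(M)$.

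The main obstacle — and the point I would be most careful to state precisely — is the interaction between the two structures and the rôle of $\rho$. Addition of pairs is only well-defined as stated if one allows $\rho$ to vary, so the ``affine space'' assertion must be read fibrewise over the choice of $\rho$ (fixed $\rho$ $\Rightarrow$ genuine affine space over odd $(1,2)$-tensors), while the ``$C^\infty_0(M)$-module'' assertion uses the full linear structure on pairs, scaling $\rho$ along with $\nabla$. So the honest statement is: the set of odd quasi-connections is a $C^\infty_0(M)$-submodule of the product module $\{\text{bilinear }\nabla\}\times\{\text{odd endomorphisms }\rho\}$ (one checks it is closed under even-scalar multiplication and under addition using the computations above), and within it each $\rho$-fibre is an affine space over the odd $(1,2)$-tensor fields. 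Everything else is a routine sign-check, the only recurring subtlety being that all scalars used in the module structure are even, which is exactly what makes the verifications sign-free.
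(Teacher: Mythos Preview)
Your argument is correct, and for the $C^\infty_0(M)$-module claim it coincides with the paper's: both of you scale the pair $(\nabla,\rho)\mapsto(f\nabla,f\rho)$ by an even function and check axiom (3) by the same one-line computation, and addition of pairs is handled the same way.

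For the affine-space claim, however, you take a different route. The paper does \emph{not} fix $\rho$: it simply shows that for any two odd quasi-connections $(\nabla,\rho)$ and $(\nabla',\rho')$ and any $t\in\R$, the affine combination of pairs
\[
\bigl(t\nabla+(1-t)\nabla',\; t\rho+(1-t)\rho'\bigr)
\]
is again an odd quasi-connection --- a check of axiom (3) identical in form to the module verification. Thus the set of all pairs is an affine (in fact linear, since the zero pair qualifies) subspace of the ambient space of pairs, and your concern that ``the set is nonempty only after one fixes $\rho$'' is misplaced. Your fibrewise statement --- that for each fixed $\rho$ the set of compatible $\nabla$'s is an affine space modelled on the odd $(1,2)$-tensors --- is a sharper and more structured result, and it anticipates Proposition \ref{Prop:DifferBanal} later in the paper; but it is not what Proposition \ref{Prop:AffineModule} asserts or what its proof establishes. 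What the paper's global approach buys is brevity and a direct match to the proposition as stated; what your approach buys is an explicit identification of the translation space on each fibre, which is genuinely more informative.
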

\begin{proof}
To show that we have the structure of an affine space, let $(\nabla, \, \rho)$ and $(\nabla', \, \rho')$ be odd quasi-connections on a supermanifold $M$. Then we claim that
$$(t\,\nabla + (1-t)\nabla', \, t \,\rho + (1-t)\rho' )$$
 is an odd quasi-connection for all $t \in \R$. It is easy to verify that the defining properties of an odd quasi-connection are satisfied. In particular, the parity is obvious and the other two properties follow from short computations. We leave details to the reader. \par
 Similarly, to show that we have  module we need to argue that
$$(f \, \nabla + \nabla' , \, f \,\rho + \rho')$$
 is an odd quasi-connection for an arbitrary $f \in C^\infty_0(M)$.  The function $f$ must be degree  zero in order to preserve the Grassmann parity. The remaining two properties follow from short computations. We again leave details to the reader.
\end{proof}
An important property of affine connections is that they are local operators, which implies that they have well-defined local expressions. The same is true of odd quasi-connections. This  is almost obvious in light of the fact  that odd quasi-connections are linear operators and satisfy a Leibniz rule.
\begin{proposition}
An odd quasi-connections $(\nabla, \, \rho)$ on a supermanifold $M$ is a local operator.
\end{proposition}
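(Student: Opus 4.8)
The statement to prove is that an odd quasi-connection $(\nabla,\rho)$ is a local operator, meaning that if a vector field (in either slot) vanishes on an open set $|U|\subseteq|M|$, then $\nabla_XY$ vanishes on $|U|$ as well. The plan is to reduce this, as is standard for connections, to the Leibniz rule plus the existence of bump functions, which the excerpt has already guaranteed for real supermanifolds in the locally ringed space approach.

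First I would treat locality in the $Y$-slot. Suppose $Y|_{|U|}=0$ and fix a point $p\in|U|$. Choose an even bump function $g\in C^\infty(M)$ with $g(p)=1$ and $\operatorname{supp}(g)\subseteq|U|$; then $gY=0$ identically on $M$, since $g$ kills $Y$ outside $|U|$ and $Y$ itself vanishes inside $|U|$. Applying property (3) of Definition~\ref{Def:OddQuaCon} to $0=\nabla_X(gY)=\rho(X)g\,Y+(-1)^{(\widetilde X+1)\widetilde g}g\,\nabla_XY$ and evaluating at $p$: the first term vanishes because $Y(p)=0$, so $g(p)\,\nabla_XY(p)=\nabla_XY(p)=0$. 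Since $p\in|U|$ was arbitrary, $\nabla_XY|_{|U|}=0$.

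Next I would treat locality in the $X$-slot, which uses property (2) instead. Suppose $X|_{|U|}=0$ and again fix $p\in|U|$ with an even bump function $g$ as above, so $gX=0$ on $M$. Then $0=\nabla_{gX}Y=(-1)^{\widetilde g}g\,\nabla_XY$ by property (2); since $\widetilde g=0$ this gives $g\,\nabla_XY=0$ identically, and evaluating at $p$ yields $\nabla_XY(p)=0$. Hence $\nabla_XY|_{|U|}=0$. Combining the two cases, $\nabla_X Y$ on $|U|$ depends only on the restrictions $X|_{|U|}$ and $Y|_{|U|}$, which is precisely the assertion that $(\nabla,\rho)$ is a local operator; one then notes that $\nabla$ therefore descends to the structure sheaf and has well-defined expressions over coordinate domains.

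I do not expect a genuine obstacle here: the argument is the verbatim super-analogue of the classical proof, and the only subtlety — that bump functions and partitions of unity exist on real supermanifolds — has already been recorded in the preliminaries (citing \cite[Lemma 3.1.7 and Corollary 3.1.8]{Leites:1980}). The one point requiring minor care is bookkeeping of signs when invoking property (3): one must restrict to \emph{even} bump functions so that $\widetilde g=0$ and the prefactors $(-1)^{\widetilde g}$ and $(-1)^{(\widetilde X+1)\widetilde g}$ are trivial, which is harmless since bump functions are naturally even. A secondary cosmetic point is that "local operator" should be stated as locality separately in each argument; the proof above handles both, so no real difficulty arises.
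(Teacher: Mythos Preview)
Your proof is correct and follows essentially the same approach as the paper: both arguments use an even bump function supported in the given open set to reduce locality in each slot to properties (2) and (3) of Definition~\ref{Def:OddQuaCon}, with the paper treating the $X$-slot first and you treating the $Y$-slot first. Your explicit remark that the bump function should be taken even so that the sign prefactors trivialise is a helpful clarification that the paper leaves implicit.
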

\begin{proof}
The proof follows in the same way as it does in the classical setting by using a bump function. Let $p \in |V|$ (open) and let  $|W| \subset |V| \subset |M|$ be a compact neighbourhood of $p$. We know that there exists a bump function $\gamma \in C^\infty_0(M)$ which restricts to $1$ on $|W|$ and whose support is included in $|V|$.  Hence, if $\gamma X$ vanishes on $|V|$, for some vector field $X$, then it also vanishes on $|M|\setminus \textnormal{supp}(\gamma)$.  \par
Then
$$0 = (\nabla_{\gamma X}Y)|_{|W|}=  (\gamma \,\nabla_{ X}Y)|_{|W|} = \gamma|_{|W|} \,(\nabla_{X} Y)|_{|W|} = (\nabla_{ X}Y)|_{|W|}. $$
Hence $(\nabla_{X}Y)|_{|V|} =0$ if $X|_{|V|} =0$. Similarly,
$$0 = (\nabla_ X \gamma Y)|_{|W|} =(\rho(X) \gamma \, Y)|_{|W|} + (\gamma \, \nabla_ X Y)|_{|W|} = (\rho(X) \gamma)|_{|W|} \, Y|_{|W|} + \gamma|_{|W|} \,( \nabla_ X Y)|_{|W|} =  (\nabla_{ X}Y)|_{|W|}.$$
Hence $(\nabla_{X}Y)|_{|V|} =0$ if $Y|_{|V|} =0$.
\end{proof}
An odd quasi-connection  has the following local form
\begin{equation}\label{Eqn:LocForOddQuaiCon}
\nabla_X Y = (-1)^{\widetilde{X} + \widetilde{a}}\, X^a\left (  \rho_a^{\,\,b}\frac{\partial Y^c}{\partial x^b} + (-1)^{(\widetilde{a} +1)(\widetilde{Y} + \widetilde{b})} \, Y^b \, \Gamma_{ba}^{\,\,\, \,c} \right) \frac{\partial}{\partial x^c},
\end{equation}
where $\widetilde{\rho_a^{\,\,b}} = \widetilde{a} + \widetilde{b} +1$ and $\widetilde{\Gamma_{ba}^{\,\,\, \,c}} = \widetilde{a} +\widetilde{b} +\widetilde{c} +1$.
\begin{proposition}\label{Prop:TransOddQuasiCon}
Under a  change of coordinates $x^{a'} = x^{a'}(x)$ the local structure functions of an odd quasi-connection transform as
\begin{align*}
(-1)^{\widetilde{a}'}\, \rho_{a'}^{\,\,b'} & = (-1)^{\widetilde{a}}\, \left(\frac{\partial x^a}{\partial x^{a'}} \right)\rho_a^{\,\,b}\left(\frac{\partial x^{b'}}{\partial x^b} \right) ,\\
 (-1)^{\widetilde{a}'}\,\Gamma_{b'a'}^{\,\,\,\,\, \,d'} &= (-1)^{ (\widetilde{a}+1)(\widetilde{b} + \widetilde{b}') + \widetilde{a}}\,  \left(\frac{\partial x^a}{\partial x^{a'}} \right)\left(\frac{\partial x^b}{\partial x^{b'}}  \right)\Gamma_{ba}^{\,\,\, \,c} \left(\frac{\partial x^{d'}}{\partial x^c} \right)\\
&+ (-1)^{\widetilde{a}}\, \left(\frac{\partial x^a}{\partial x^{a'}} \right)\rho_a^{\,\,c}\left(\frac{\partial x^{c'}}{\partial x^c} \right) \frac{\partial^2 x^d}{\partial x^{c'}\partial x^{b'}}\left(\frac{\partial x^{d'}}{\partial x^d} \right).
\end{align*}
\end{proposition}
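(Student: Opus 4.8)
The plan is to derive the two transformation rules directly from the coordinate-free definition, by comparing the local expression \eqref{Eqn:LocForOddQuaiCon} written in the $x$-coordinates with the one written in the $x'$-coordinates. First I would establish the transformation rule for $\rho$. Since $\rho$ is a genuine $C^\infty(M)$-module endomorphism of $\Vect(M)$, it is an odd tensor of type $(1,1)$, so its components transform by the standard rule for such a tensor, with the sign bookkeeping dictated by the fact that $\partial/\partial x^a$ transforms with a Jacobian $\partial x^{a'}/\partial x^a$ sitting on the left. Writing $\rho\left(\partial/\partial x^{a}\right) = \rho_a{}^b\,\partial/\partial x^b$ and using $\partial/\partial x^{a} = (\partial x^{a'}/\partial x^{a})\,\partial/\partial x^{a'}$ together with the parity $\widetilde{\rho_a{}^b} = \widetilde a + \widetilde b + 1$ to move the Jacobians past the components, one reads off the first displayed identity. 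This step is essentially the classical computation for $(1,1)$-tensors with an extra global parity shift, so it is routine.

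The substantive part is the rule for $\Gamma$. Here I would compute $\nabla_{\partial_{a'}}\,\partial_{b'}$ in two ways. On one hand, by definition of the $x'$-components, it equals (up to the sign prefactor in \eqref{Eqn:LocForOddQuaiCon}) $\Gamma_{b'a'}{}^{d'}\,\partial_{d'}$. On the other hand, I substitute $\partial_{a'} = (\partial x^a/\partial x^{a'})\,\partial_a$ and $\partial_{b'} = (\partial x^b/\partial x^{b'})\,\partial_b$, use property (2) of Definition \ref{Def:OddQuaCon} to pull the function $\partial x^a/\partial x^{a'}$ out of the first slot (picking up a sign from its parity), and use property (3) — the twisted Leibniz rule, whose defining feature is precisely that the derivative term involves $\rho(X)f$ rather than $Xf$ — to differentiate the function $\partial x^b/\partial x^{b'}$ sitting in front of $\partial_b$. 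The Leibniz term produces $\rho\big(\partial x^a/\partial x^{a'}\,\partial_a\big)\!\left(\partial x^b/\partial x^{b'}\right)\partial_b$, and applying the already-established $\rho$-transformation rule (or, equivalently, $\rho(\partial_a) = \rho_a{}^c\,\partial_c$ acting on $\partial x^b/\partial x^{b'}$ via $\rho_a{}^c\,\partial_c(\partial x^b/\partial x^{b'}) = \rho_a{}^c\,\partial^2 x^b/\partial x^c\partial x^{b'}$) yields exactly the inhomogeneous second term, after re-expressing $\partial^2 x^b/\partial x^c\partial x^{b'}$ in terms of $\partial^2 x^d/\partial x^{c'}\partial x^{b'}$ via the chain rule and inserting a resolution of the identity $(\partial x^{c'}/\partial x^c)(\partial x^d/\partial x^{c'})\cdots$ The homogeneous term $\Gamma_{ba}{}^c\,(\text{Jacobians})$ comes from the $f\,\nabla_X Y$ part of property (3) together with property (2).

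The main obstacle is purely the sign accounting: one must carefully track the Koszul signs arising when the Jacobian matrices $\partial x^a/\partial x^{a'}$ — which are homogeneous of parity $\widetilde a + \widetilde{a'}$, and nonzero only when $\widetilde a = \widetilde{a'}$ in the block-diagonal case but potentially of either parity component-wise — are commuted past the odd quantities $\rho_a{}^b$, $\Gamma_{ba}{}^c$, and past each other, and when the derivations $\partial_a$ act on products. The cleanest way to control this is to fix homogeneous indices throughout, use that a Jacobian entry $\partial x^a/\partial x^{a'}$ is nonzero only when $\widetilde a = \widetilde{a'}$ so that $(-1)^{\widetilde a} = (-1)^{\widetilde{a'}}$ can be freely swapped, and reduce every sign to the combination already appearing in \eqref{Eqn:LocForOddQuaiCon}; one then matches the prefactor $(-1)^{\widetilde{a'}}$ on the left against $(-1)^{\widetilde a}$ times the collected signs on the right. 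I expect that after this bookkeeping the overall sign in the homogeneous term collapses to $(-1)^{(\widetilde a + 1)(\widetilde b + \widetilde{b'}) + \widetilde a}$ and the inhomogeneous term to $(-1)^{\widetilde a}$, as claimed, with no residual phases; I would present only the two computations of $\nabla_{\partial_{a'}}\partial_{b'}$ and the final sign reconciliation, leaving the intermediate rearrangements to the reader as the paper does elsewhere.
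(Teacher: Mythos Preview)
Your overall strategy --- compare the local expression in the two coordinate systems and read off the transformation rules, with the inhomogeneous term arising from the $\rho(X)f$ part of the Leibniz rule --- is exactly what the paper does (the paper keeps general $X,Y$ and applies the chain rule inside \eqref{Eqn:LocForOddQuaiCon}; you specialise to coordinate vector fields and invoke properties (2)--(3) directly, which amounts to the same computation organised slightly differently).

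There is, however, a genuine error in your proposed sign bookkeeping. You write that ``a Jacobian entry $\partial x^a/\partial x^{a'}$ is nonzero only when $\widetilde a = \widetilde{a'}$ so that $(-1)^{\widetilde a} = (-1)^{\widetilde{a'}}$ can be freely swapped.'' This is false for general coordinate changes on a supermanifold: an even coordinate may depend on products of pairs of odd coordinates (e.g.\ $t' = t + \theta_1\theta_2$), so the Jacobian is not block diagonal and off-parity entries can be nonzero. The entry $\partial x^a/\partial x^{a'}$ is homogeneous of parity $\widetilde a + \widetilde{a'}$, but that parity need not be even. If you collapse $\widetilde a$ and $\widetilde{a'}$ you will not recover the stated sign $(-1)^{(\widetilde a+1)(\widetilde b+\widetilde{b'})+\widetilde a}$, in which $\widetilde b$ and $\widetilde{b'}$ appear independently precisely because they need not agree on a summed index. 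The fix is simply to drop this shortcut and carry both $\widetilde a,\widetilde{a'}$ (and $\widetilde b,\widetilde{b'}$, etc.) through the commutations honestly, exactly as the paper does; the rest of your plan then goes through.
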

\begin{proof}
The proof follows in more-or-less the same way as it does for affine connections on manifolds.  Directly from \eqref{Eqn:LocForOddQuaiCon}  and using the chain rule we have
\begin{align*}
\nabla_X Y & = (-1)^{\widetilde{X} + \widetilde{a}}\, X^a\left (  \rho_a^{\,\,b}\frac{\partial Y^c}{\partial x^b} + (-1)^{(\widetilde{a} +1)(\widetilde{Y} + \widetilde{b})} \, Y^b \, \Gamma_{ba}^{\,\,\, \,c} \right) \frac{\partial}{\partial x^c}\\
& = (-1)^{\widetilde{X} + \widetilde{a}}\, X^{a'}\frac{\partial x^{a}}{\partial x^{a'}}\left (  \rho_a^{\,\,b}  \frac{\partial x^{b'}}{\partial x^{b}}\frac{\partial}{\partial x^{b'}}\left (Y^{c'} \frac{\partial x^{c}}{\partial x^{c'}} \right)  + (-1)^{(\widetilde{a} +1)(\widetilde{Y} + \widetilde{b})} \, Y^{b'}\frac{\partial x^{b}}{\partial x^{b'}} \, \Gamma_{ba}^{\,\,\, \,c} \right)\frac{\partial x^{d'}}{\partial x^{c}} \frac{\partial}{\partial x^{d'}}\\
& = (-1)^{\widetilde{X} + \widetilde{a}}\, X^{a'}\left( \frac{\partial x^{a}}{\partial x^{a'}}\rho_a^{\,\,b}  \frac{\partial x^{b'}}{\partial x^{b}}\frac{\partial Y^{d'}}{\partial x^{b'}} + (-1)^{(\widetilde{Y} + \widetilde{b}')(\widetilde{a}' +1)}\, Y^{b'} \frac{\partial x^{a}}{\partial x^{a'}}\rho_a^{\,\,c}  \frac{\partial x^{c'}}{\partial x^{c}} \frac{\partial^2 x^b}{\partial x^{c'}\partial x^{b'}}\frac{\partial x^{d'}}{\partial x^{b}} \right.\\
& \left. +  \,(-1)^{(\widetilde{a}'+1)(\widetilde{Y} + \widetilde{b}')+ (\widetilde{a} +1)(\widetilde{b}+\widetilde{b}')}\, Y^{b'} \frac{\partial x^{a}}{\partial x^{a'}}\frac{\partial x^{b}}{\partial x^{b'}} \, \Gamma_{ba}^{\,\,\, \,c} \frac{\partial x^{d'}}{\partial x^{c}}  \right)\frac{\partial}{\partial x^{d'}}\, ,\\
& = (-1)^{\widetilde{X} + \widetilde{a}'}\, X^{a'}\left (  \rho_{a'}^{\,\,b'}\frac{\partial Y^{d'}}{\partial x^{b'}} + (-1)^{(\widetilde{a}' +1)(\widetilde{Y} + \widetilde{b}')} \, Y^{b'} \, \Gamma_{b'a'}^{\,\,\,\,\, \,d'} \right) \frac{\partial}{\partial x^{d'}}.
\end{align*}
In the second term of the third line we have relabelled some of the contracted indices. Comparing the primed and unprimed coefficients established the required transformation rules. \\
\end{proof}
Naturally, and almost be definition, the odd endomorphism $\rho$ is a tensor of type $(1,1)$. The \emph{odd Chirstoffel symbols} $\Gamma_{ba}^{\,\,\, \,\,c}$ transform in almost the same was as their classical counterparts, as completely expected.

\subsection{The Torsion and Curvature}
We now proceed to generalise the notion of torsion and  curvature to odd quasi-connections.  The warning here is that the torsion  and  curvature will \emph{not}, in general, be tensors.  We have to be content, for the moment, with multi-linear maps (as vector spaces) in the definitions of torsion and curvature.
\begin{definition}\label{Def:Torsion}
Let $(\nabla, \, \rho)$ be an odd quasi-connection on a supermanifold $M$. The \emph{torsion}  of   $(\nabla, \, \rho)$ is is defined as the bi-linear map
$$T: \Vect(M) \times \Vect(M) \longrightarrow   \Vect(M)$$
given by
$$T(X,Y) := \nabla_X Y + (-1)^{\widetilde{X} \widetilde{Y}} \, \nabla_Y X + (-1)^{\widetilde{X}} \, \rho \big([\rho(X), \rho(Y)] \big),$$
for all (homogeneous) $X$ and $Y \in \Vect(M)$.
\end{definition}
It is easy to see that the torsion satisfies
\begin{enumerate}
\item $\widetilde{T(X,Y)} =  \widetilde{X} + \widetilde{Y} +1$, and
\item $T(X,Y) = (-1)^{\widetilde{X} \widetilde{Y}} \, T(Y,X)$,
\end{enumerate}
for all (homogeneous) $X$ and $Y \in \Vect(M)$.
\begin{definition}\label{Def:Torsionless}
An odd quasi-connection $(\nabla, \, \rho)$ is said to be \emph{torsionless} or \emph{torsion-free} if its associated torsion is the zero map.
\end{definition}
\begin{definition}\label{Def:RieCurv}
Let $(\nabla, \, \rho)$ be an odd quasi-connection on a supermanifold. The  curvature of  $(\nabla, \, \rho)$ is defined as the multi-linear map
$$R : \Vect(M) \times \Vect(M) \times \Vect(M) \longrightarrow \Vect(M)$$
given by
$$(X,Y,Z) \mapsto R(X,Y)Z := [\nabla_X, \nabla_Y]Z - \nabla_{\rho[\rho(X), \rho(Y)]}Z,$$
for all homogeneous $X,Y$ and $Z \in \Vect(M)$.
\end{definition}
It is easy to check that the curvature satisfies
\begin{enumerate}
\item $\widetilde{R(X,Y)Z} =  \widetilde{X} + \widetilde{Y} +\widetilde{Z}$,
\item $R(X,Y)Z = -(-1)^{(\widetilde{X}+1)( \widetilde{Y}+1)} \, R(Y,X)Z$, and
\item $R(X,Y)f\,Z = (-1)^{(\widetilde{X}+ \widetilde{Y}) \, \widetilde{f}} \,f\, R(Y,X)Z $,
\end{enumerate}
for all (homogeneous) $X,Y$ and $Z \in \Vect(M)$ and $f \in C^\infty(M)$. \par
We observe that, just as in the standard case of affine connections on a (super)manifold, that for any fixed pair $(X,Y)$ of homogeneous vector fields, we have a linear map (in the sense of a $C^\infty(M)$-module)
$$R(X,Y) : \Vect(M) \longrightarrow \Vect(M)$$
of Grassmann parity $ \widetilde{X} + \widetilde{Y}$.
\begin{definition}\label{Def:Flat}
An odd quasi-connection $(\nabla, \, \rho)$ is said to be \emph{flat} if its associated  curvature is the zero map.
\end{definition}
 We will refer to an odd quasi-connection whose odd endomorphism $\rho: \Vect(M) \rightarrow \Vect(M)$ is the zero map as an  \emph{odd banal quasi-connection}. Note that the third defining property of an odd quasi-connection reduces to
$\nabla_X f \,Y =  (-1)^{(\widetilde{X}+1) \widetilde{f}} \, f \, \nabla_X Y $. Thus, odd banal quasi-connections are precisely  odd tensors of type $(1,2)$.
\begin{proposition}\label{Prop:DifferBanal}
Let $(\nabla, \rho)$ and $(\nabla', \rho)$ be a pair of odd quasi-connections on a supermanifold $M$ with the same odd endomorphism $\rho$. Then the difference of the two odd quasi-connections is an odd banal quasi-connection.
\end{proposition}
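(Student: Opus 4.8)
The plan is to compute the difference $D := \nabla - \nabla'$ as a bi-linear map $\Vect(M) \times \Vect(M) \to \Vect(M)$, $D(X,Y) := \nabla_X Y - \nabla'_X Y$, and to verify directly that $D$ is $C^\infty(M)$-bilinear (with the appropriate signs), i.e.\ that $(D, 0)$ satisfies the defining properties of an odd quasi-connection with zero odd endomorphism. By the observation immediately preceding the proposition, such objects are exactly the odd tensors of type $(1,2)$, so this suffices.

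First I would note the parity: since $\widetilde{\nabla_X Y} = \widetilde{\nabla'_X Y} = \widetilde{X} + \widetilde{Y} + 1$, the difference $D(X,Y)$ is homogeneous of parity $\widetilde{X} + \widetilde{Y} + 1$, so property (1) of Definition \ref{Def:OddQuaCon} holds. Next, property (2): $\nabla_{f X} Y = (-1)^{\widetilde{f}} f \, \nabla_X Y$ and likewise for $\nabla'$, so $D(fX, Y) = (-1)^{\widetilde{f}} f\, D(X,Y)$, i.e.\ $C^\infty(M)$-linearity in the first slot (up to the sign inherent in the odd setting). The only step with any content is property (3) in the second slot: here the key point is that $\nabla$ and $\nabla'$ have the \emph{same} odd endomorphism $\rho$, so the inhomogeneous term $\rho(X)f\, Y$ appears identically in $\nabla_X(fY)$ and $\nabla'_X(fY)$ and cancels in the difference, leaving
\[
D(X, fY) = \nabla_X(fY) - \nabla'_X(fY) = (-1)^{(\widetilde{X}+1)\widetilde{f}} f\,\bigl(\nabla_X Y - \nabla'_X Y\bigr) = (-1)^{(\widetilde{X}+1)\widetilde{f}} f\, D(X,Y).
\]
This is precisely the reduced form of property (3) with $\rho = 0$ noted in the text, which characterises odd banal quasi-connections.

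The main (and really the only) obstacle is bookkeeping: one must make sure the shared $\rho$ is genuinely what makes the anomalous Leibniz term cancel — if the two connections had different odd endomorphisms the difference $(\rho - \rho')(X) f\, Y$ would survive and $D$ would fail to be a tensor (it would instead be an odd quasi-connection with endomorphism $\rho - \rho'$, cf.\ the affine structure of Proposition \ref{Prop:AffineModule}). Since both connections share $\rho$, this term vanishes and no sign subtleties beyond those already displayed arise. I would conclude that $(D, 0)$ is an odd banal quasi-connection, hence an odd tensor field of type $(1,2)$, which is the assertion.
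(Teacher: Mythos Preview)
Your proof is correct and takes essentially the same approach as the paper's: the paper's argument is just the one-line verification of property~(3), showing that the shared $\rho$ makes the $\rho(X)f\,Y$ terms cancel so that $(\nabla_X - \nabla'_X)fY = (-1)^{(\widetilde{X}+1)\widetilde{f}} f\,(\nabla_X - \nabla'_X)Y$. You have simply been more thorough in also explicitly checking parity and $C^\infty(M)$-linearity in the first slot.
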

\begin{proof}
We just need to check the following
$$(\nabla_X - \nabla'_X)f Y = \rho(X)f Y - \rho(X)f Y + (-1)^{(\widetilde{X}+1) \widetilde{f}} \,f \,(\nabla_X - \nabla'_X) Y = (-1)^{(\widetilde{X}+1) \widetilde{f}} \,f \,(\nabla_X - \nabla'_X) Y,$$
which is exactly the definition of an odd Banal quasi-connection.
\end{proof}

 \begin{definition}
 A \emph{odd involutive quasi-connection} on a supermanifold $M$ is an odd quasi-connection $(\nabla, \rho)$ on $M$ such that the odd endomorphism $\rho: \Vect(M) \rightarrow \Vect(M)$ is an involution.
 \end{definition}
\begin{remark}
Using the  nomenclature first introduced by Manin \cite[page 219]{Manin:1997}, a supermanifold equipped with an odd involution on its  module of vector fields is  said to be a \emph{$\Pi$-symmetric supermanifold}. The analogy with supersymmetry should not be missed. The odd involution exchanges a (homogeneous) vector field with one of a different Grassmann parity and applied twice we recover the initial vector field.  This is a kind of ``supersymmetry''.
\end{remark}
\begin{theorem}\label{Thm:TorCurTensors}
Let $( \nabla, \, \rho)$ be an odd quasi-connection on a supermanifold $M$. Let us assume that the associated torsion and  curvature are not both  zero maps.  The  torsion and  curvature of $( \nabla, \, \rho)$ are tensors on $M$ if and only if the odd quasi-connection is either banal or involutive.
\end{theorem}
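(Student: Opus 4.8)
The plan is to prove the two directions separately, using the explicit formulas for torsion and curvature together with the defining Leibniz rule (property (3) of Definition \ref{Def:OddQuaCon}).

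\textbf{Sufficiency.} First I would check that if $\rho$ is banal (i.e. $\rho = 0$) or an involution, then $T$ and $R$ are $C^\infty(M)$-linear in each argument. For the banal case this is immediate: with $\rho = 0$, the torsion reduces to $T(X,Y) = \nabla_X Y + (-1)^{\widetilde X \widetilde Y}\nabla_Y X$, and since an odd banal quasi-connection is an odd tensor of type $(1,2)$ (as noted just before Proposition \ref{Prop:DifferBanal}), tensoriality is clear; similarly $R(X,Y)Z = [\nabla_X,\nabla_Y]Z$, and a direct two-line computation using $\nabla_X f Y = (-1)^{(\widetilde X+1)\widetilde f} f\nabla_X Y$ shows the mixed terms cancel. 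For the involutive case, I would verify $C^\infty(M)$-linearity of $T$ in, say, the second slot: expand $T(X, fY)$ using property (3) of Definition \ref{Def:OddQuaCon} on $\nabla_X(fY)$ and property (2) on $\nabla_{fY} X$, and expand $\rho([\rho(X),\rho(fY)]) = \rho([\rho(X), f\rho(Y)])$ using the Leibniz rule for the Lie bracket of vector fields (here $\rho(fY) = f\rho(Y)$ since $\rho$ is an even $C^\infty(M)$-module map — wait, $\rho$ is \emph{odd}, so $\rho(fY) = (-1)^{\widetilde f}f\rho(Y)$; I will track this sign carefully). The derivative terms coming from $\nabla_X(fY)$ produce a $\rho(X)f\cdot Y$ term; the derivative term from $\rho([\rho(X), f\rho(Y)])$ produces $\rho\big((\rho(X)f)\rho(Y)\big) = (\rho(X)f)\,\rho^2(Y) = (\rho(X)f)\,Y$ precisely because $\rho^2 = \Id$ — and these cancel against each other up to the bookkeeping signs. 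The same mechanism handles linearity in the first slot and the analogous computation for $R$ (where one uses that $\nabla_{\rho[\rho(X),\rho(Y)]}$ involves $\rho$ applied to the anchor-like bracket, so $\rho^2 = \Id$ again makes the anomalous terms telescope).

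\textbf{Necessity.} This is the part I expect to be the main obstacle: I must show that if $\rho$ is \emph{neither} zero nor an involution, then at least one of $T$, $R$ fails to be a tensor. The strategy is to isolate, in the local expression for $T(X, fY) - (\text{tensorial part})$, the ``anomaly'' term that obstructs tensoriality, and show it is governed by the operator $\rho^2 - \text{(something)}$. Concretely, repeating the sufficiency computation without assuming $\rho^2 = \Id$, the failure of $C^\infty(M)$-linearity of $T$ in the second argument is measured by a term proportional to $\big(\rho^2(Y) - Y\big)\cdot(\rho(X)f)$ — more precisely by $\rho\big((\rho(X)f)\,\rho(Y)\big)$ versus the $\nabla$-derivative terms, whose mismatch is $\rho(X)f \cdot \big(\rho^2(Y) - Y\big)$ up to sign. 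Hence $T$ is tensorial in the second slot iff $\rho\circ\rho = \Id$ on the image of $\rho$, or more carefully iff for all $X, Y$ the vector field $\rho^2(Y)$ equals $Y$ whenever it is paired against a nonzero $\rho(X)f$. Since one can choose $f$ with $\rho(X)f$ an arbitrary function (locally, as $\rho(X)$ is a nonzero vector field when $\rho \neq 0$), this forces $\rho^2 = \Id$ on $\mathrm{Im}\,\rho$. A short supplementary argument then upgrades ``$\rho^2 = \Id$ on $\mathrm{Im}\,\rho$'' together with $\rho \neq 0$ to ``$\rho^2 = \Id$ on all of $\Vect(M)$'': if $\rho^2 = \Id$ on $\mathrm{Im}\,\rho$ then $\rho^3 = \rho$, so $\rho(\rho^2 - \Id) = 0$, i.e. $\mathrm{Im}(\rho^2-\Id)\subseteq \Ker\rho$; one then plays this off against the curvature anomaly, whose vanishing similarly forces control over $\rho$ on $\Ker\rho$, to conclude that either $\rho = 0$ (banal) or $\rho^2 = \Id$ (involutive). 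I would do the curvature computation in parallel: the obstruction to $R(X,Y)(fZ)$ being $C^\infty(M)$-linear lives in a term proportional to $\nabla_{(\rho^2 - \Id)[\ldots]} Z$ or to $\rho^2$ applied inside the bracket, so that $R$ tensorial again forces $\rho^2 = \Id$ where $\rho$ is nonzero.

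\textbf{Assembly.} Finally I would combine the two computations: tensoriality of $T$ and $R$ $\Longrightarrow$ $\rho^2 = \Id$ on $\mathrm{Im}\,\rho$ and compatible behaviour on $\Ker\rho$ $\Longrightarrow$ ($\rho = 0$) or ($\rho^2 = \Id$); conversely Proposition \ref{Prop:DifferBanal} and the direct check give the reverse implication. The hypothesis that $T$ and $R$ are not both zero is used to exclude the degenerate situation where the anomaly term vanishes identically for trivial reasons (e.g. if $\nabla$ is itself so degenerate that every term is zero), ensuring that ``tensorial'' genuinely constrains $\rho$; I would note explicitly where this hypothesis enters. The main delicacy throughout is sign bookkeeping — in particular the interaction of the odd parity of $\rho$ with the Leibniz rules for $\nabla$ and for the vector-field bracket — so I would set up the local-coordinate anomaly term once (building on \eqref{Eqn:LocForOddQuaiCon}) and reuse it for both $T$ and $R$.
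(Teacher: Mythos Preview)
Your overall strategy---compute the failure of $C^\infty(M)$-linearity explicitly and read off the condition on $\rho$---is exactly what the paper does, and your sufficiency argument is fine. But in the necessity direction you misread your own anomaly formula and send yourself on an unnecessary detour.

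You correctly identify the torsion anomaly as (up to sign) $\rho(X)f\cdot\big(Y-\rho^2(Y)\big)$. Here $Y$ is the \emph{arbitrary} second argument of $T$, not something in the image of $\rho$. So vanishing of this expression for all $X,Y,f$ says immediately: either $\rho(X)f=0$ for all $X,f$ (i.e.\ $\rho=0$), or $Y=\rho^2(Y)$ for all $Y\in\Vect(M)$. There is nothing to ``upgrade'' from $\mathrm{Im}\,\rho$; the chain of reasoning through $\rho^3=\rho$, $\Ker\rho$, and the interplay with the curvature anomaly is superfluous. The paper's proof simply writes down the anomaly and observes this dichotomy.

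For the curvature you propose to test the slot $Z$, but $R(X,Y)(fZ)=(-1)^{(\widetilde X+\widetilde Y)\widetilde f}f\,R(X,Y)Z$ holds for \emph{every} odd quasi-connection (this is property~(3) listed just after Definition~\ref{Def:RieCurv}), so there is no obstruction there. The nontrivial check is linearity in $X$ or $Y$: the paper computes
\[
R(X,fY)Z \;=\; (-1)^{\widetilde f\,\widetilde X}\,f\,R(X,Y)Z \;+\; (-1)^{\widetilde f}\,\rho(X)f\cdot\nabla_{\,Y-\rho^2(Y)\,}Z,
\]
and the anomalous term is again governed by $Y-\rho^2(Y)$ with $Y$ arbitrary. (Your guessed form ``$\nabla_{(\rho^2-\Id)[\ldots]}Z$'' is in fact the shape of \emph{this} anomaly, not of one coming from the $Z$-slot.) So once you fix the slot, the curvature computation runs in parallel with the torsion one and gives the same dichotomy without any extra algebra.
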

\begin{proof}
We proceed to prove the theorem by checking the anomalous terms of the tensorial property of the torsion and  curvature. \par
\begin{itemize}
\item \textbf{Torsion} As $T$ is symmetric, it suffices to check the tensorial property  for one of the arguments. Thus we consider
\begin{align*}
T(X,fY) & = \nabla_X f\, Y + (-1)^{\widetilde{X} \widetilde{Y} + \widetilde{X} \widetilde{f}} \, \nabla_{fY} X + (-1)^{\widetilde{X}} \, \rho[\rho(X), \rho(fY)]\\
 &= \rho(X)f \, Y +  (-1)^{(\widetilde{X}+1)\widetilde{f}}\, f\, \nabla_X Y + (-1)^{\widetilde{X}} \, \rho[\rho(X),(-1)^{\widetilde{f}} f \rho(Y)]  \\
 & = \rho(X)f \, Y +  (-1)^{(\widetilde{X}+1)\widetilde{f}}\, f\, \nabla_X Y +  (-1)^{\widetilde{X}} \, \rho\big ( (-1)^{\widetilde{f}}\,\rho(X)f \, \rho(Y) + (-1)^{\widetilde{X}\widetilde{f}}\,f \, [\rho(X), \rho(Y)]  \big)\\
 & = (-1)^{(\widetilde{X}+1)\widetilde{f}}\, f\, T(X,Y) + \rho(X)f \, (Y - \rho(\rho(Y))).
\end{align*}
\item\textbf{Curvature} Due to the symmetry of $R(X,Y)$ it is sufficient to check the tensorial property for one of the arguments. Thus we consider
\begin{align*}
R(X, fY)Z & = [\nabla_X, \nabla_{fY}]Z - \nabla_{\rho[\rho(X), \rho(fY)]}Z\\
& = \nabla_X\big((-1)^{\widetilde{f}} f \, \nabla_Y Z \big ) - (-1)^{(\widetilde{X}+1)(\widetilde{Y}+1) + \widetilde{X} \widetilde{f}} \, f \, \nabla_Y \nabla_X Z - \nabla_{\rho[\rho(X),(-1)^{\widetilde{f}} f\rho(Y)]}Z\\
&=  (-1)^{\widetilde{f}} \, \rho(X)f \, \nabla_Y Z
 + (-1)^{\widetilde{X} \widetilde{f}} \, f \nabla_X \nabla_Y Z -(-1)^{(\widetilde{X}+1)(\widetilde{Y} +1) + \widetilde{X}\widetilde{f}} \, f \, \nabla_Y \nabla_X Z  \\
 & - \nabla_{\rho \big ( (-1)^{\widetilde{f}} \, \rho(X)f \, \rho(Y) + (-1)^{\widetilde{f}\, \widetilde{X}}\, f \, [\rho(X), \rho(Y)] \big )}Z\\
 & = (-1)^{\widetilde{f} \widetilde{X}}\, f \, R(X,Y)Z + (-1)^{\widetilde{f}}\, \rho(X)f \, \nabla_{\left(Y - \rho( \rho(Y))\right) }Z.
  \end{align*}
\end{itemize}
 In both cases the anomalies vanish if either $\rho =0$ or $\rho^2 = \Id$. The only if follows as the anomalous terms must vanish for all arbitrary vector fields and functions.
\end{proof}
\begin{remark}
It is clear that if both the torsion and  curvature are zero, then they are trivially tensors. So, we exclude this from our considerations.
\end{remark}

\subsection{Odd Connections as Odd Involutive Quasi-Connections}
From the previous subsection, it is clear that odd banal quasi-connections and odd involutive  quasi-connections have a rather privileged r\^{o}le in the theory of odd quasi-connections in that their torsion and curvature are geometric objects, i.e., they are tensors.  However, the banal case is  not so interesting and hence the name (which we hijacked from \cite{Etayo:1993}). We will focus attention for the remainder of this paper to odd involutive quasi-connections. In light of this, we will change nomenclature slightly.

\medskip

\noindent \textbf{Nomenclature:} By \emph{odd connection}, we explicitly mean an odd involutive quasi-connection.
\begin{remark}
The existence of an odd involution places heavy restrictions on the supermanifold. In particular, we must have a $n|n$-dimensional supermanifold such that the $C^\infty(M)$-module of vector fields admits a generating set consisting of $n$ even and $n$ odd vector fields.
\end{remark}
 \begin{example}[The canonical odd connection on $\R^{n|n}$]\label{Exp:CanOddCon}
 Consider the linear supermanifold $\R^{n|n}$ equipped with global coordinates $(x^a, \,\zx^b)$, of Grassmann parity $0$ and $1$, respectively. We define the canonical odd involution by its action on the partial derivatives, i.e.,
\begin{align*}
\rho \left( \frac{\partial}{\partial x^a} \right) =  \frac{\partial}{\partial \zx^a}, && \rho \left( \frac{\partial}{\partial \zx^b} \right) =  \frac{\partial}{\partial x^b}\,.
\end{align*}
We decompose any homogeneous vector field as
$$X = X^a(x,\zx)\frac{\partial}{\partial x^a} + \bar{X}^a(x,\zx)\frac{\partial}{\partial \zx^a},$$
so that the canonical odd connection on $\R^{n|n}$ is given by
$$\nabla_X Y:= ( -1)^{\widetilde{X}}    \left ( X^a \frac{\partial Y^b}{\partial \zx^a} -  \bar{X}^a \frac{\partial Y^b}{ \partial x^a}\right)\frac{\partial}{\partial x^b} +  ( -1)^{\widetilde{X}}    \left ( X^a \frac{\partial \bar{Y}^b}{\partial \zx^a} -  \bar{X}^a \frac{\partial \bar{Y}^b}{ \partial x^a}\right)\frac{\partial}{\partial \zx^b}\,. $$
\end{example}
 \begin{example}\label{Exp:1dSUSY}
 Consider the linear supermanifold $\R^{1|1}$ which we equip with global coordinates $(t, \theta)$. We pick as a global basis for the vector fields  $\{ P := \partial_t , \, D := \partial_\theta - \theta \partial_t  \}$. The reader should immediately recognise this as the SUSY structure on $\R^{1|1}$, i.e., $\mathcal{D} =\Span\{ D \}$ is a maximally non-integrable distribution of rank $0|1$.  We can then define the odd involution as  $\rho(P) = D$ and $\rho(D) =P$. Then, following Example \ref{Exp:CanOddCon}, we write $X = X^P(t,\theta)P + X^{D}(t,\theta)D$ and define an odd connection as
 $$\nabla_X Y:= ( -1)^{\widetilde{X}}    \left ( X^P D(Y^P) -  X^DP(Y^P) \right)P +  ( -1)^{\widetilde{X}}    \left ( X^P D(Y^D) -  X^DP(Y^D) \right)D\,. $$
 \end{example}

We now want to examine if the set of all odd connections on a supermanifold has the structure of an affine space. Let $(\nabla, \, \rho)$ and $(\nabla', \, \rho')$ be odd connections on a supermanifold $M$. The only modification to the proof of Proposition \ref{Prop:AffineModule} is to check that $t \, \rho + (1-t)\, \rho'$ is itself an involution.  A direct calculation gives, for an arbitrary $X \in \Vect(M)$
 $$(t \, \rho + (1-t)\, \rho')^2 X = X+  (t - t^2)\big( \rho(\rho'(X)) +\rho'(\rho(X)) - 2X \big).$$
 For the affine combination to be an involution for all $t \in \R$ we require the one-dimensional Clifford--Dirac relation
 $$[\rho, \rho'] = 2\cdot \Id_{\Vect(M)}$$
 to hold. Here the bracket is the $\Z_2$-graded commutator bracket,  i.e., an anticommutator in the language of physics. The dimension of a one-dimensional Clifford algebra (as a vector space) is two.
\begin{definition}
A pair of odd connections $(\nabla, \,\rho)$ and $(\nabla', \,\rho')$ are said to be \emph{compatible} if and only if they satisfy the Clifford--Dirac relation
 $$[\rho, \rho'] = 2\cdot \Id_{\Vect(M)}.$$
\end{definition}
With the above observations and definition in place, we have the following result.
\begin{theorem}
The set of all pairwise compatible odd connections on  a supermanifold has the structure of an affine space.
\end{theorem}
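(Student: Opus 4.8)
The plan is to verify that affine combinations of pairwise-compatible odd connections are again odd connections, and that compatibility is preserved under such combinations, so that the axioms of an affine space (well-definedness of the affine-combination map, and the cocycle/consistency conditions) hold. The starting point is Proposition~\ref{Prop:AffineModule}: we already know that for any two odd quasi-connections $(\nabla,\rho)$ and $(\nabla',\rho')$ the pair $(t\nabla+(1-t)\nabla',\, t\rho+(1-t)\rho')$ satisfies the three axioms of Definition~\ref{Def:OddQuaCon}. So the only genuinely new point is to keep the involution property alive, and the computation just above the statement shows precisely that $t\rho+(1-t)\rho'$ is an involution for all $t\in\R$ if and only if the Clifford--Dirac relation $[\rho,\rho']=2\cdot\Id_{\Vect(M)}$ holds, i.e.\ exactly when $(\nabla,\rho)$ and $(\nabla',\rho')$ are compatible. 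This gives closure of the affine-combination operation on a \emph{compatible pair}.

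The first step I would carry out is to fix a family $\{(\nabla^{(i)},\rho^{(i)})\}$ of pairwise compatible odd connections and real numbers $t_i$ with $\sum_i t_i=1$ (a finite affine combination), and show the combination $(\sum_i t_i\nabla^{(i)},\sum_i t_i\rho^{(i)})$ is an odd connection. The odd-quasi-connection axioms follow by bilinearity exactly as in Proposition~\ref{Prop:AffineModule}. For the involution property I would compute $\big(\sum_i t_i\rho^{(i)}\big)^2 = \sum_i t_i^2 (\rho^{(i)})^2 + \sum_{i<j} t_it_j\big(\rho^{(i)}\rho^{(j)}+\rho^{(j)}\rho^{(i)}\big)$; using $(\rho^{(i)})^2=\Id$ and the Clifford--Dirac relation $\rho^{(i)}\rho^{(j)}+\rho^{(j)}\rho^{(i)}=2\,\Id$ for $i\ne j$, this collapses to $\big(\sum_i t_i^2 + 2\sum_{i<j}t_it_j\big)\Id = \big(\sum_i t_i\big)^2\Id = \Id$. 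The second step is to check that the new connection is compatible with each $(\nabla^{(i)},\rho^{(i)})$, and more generally that the whole enlarged collection stays pairwise compatible; the relevant anticommutator $\big(\sum_j t_j\rho^{(j)}\big)\rho^{(i)} + \rho^{(i)}\big(\sum_j t_j\rho^{(j)}\big)$ again reduces, via $(\rho^{(i)})^2=\Id$ and Clifford--Dirac, to $\big(2t_i + 2\sum_{j\ne i}t_j\big)\Id = 2\,\Id$. The third step is to observe that the displayed identity for $t\rho+(1-t)\rho'$ also shows that the \emph{difference} of two compatible odd connections is an odd banal quasi-connection (by Proposition~\ref{Prop:DifferBanal}, since they have different $\rho$'s one must instead argue directly that $\nabla-\nabla'$ is $C^\infty(M)$-bilinear once the $\rho$ and $\rho'$ boundary terms are tracked), giving the ``vector space of differences'' and confirming the affine-space structure: a transitive free action of the additive group of odd banal quasi-connections compatible in the appropriate sense.

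The main obstacle I anticipate is bookkeeping rather than conceptual: because distinct odd connections in a compatible family carry \emph{distinct} odd endomorphisms $\rho^{(i)}$, one cannot invoke Proposition~\ref{Prop:DifferBanal} verbatim (it assumes a common $\rho$), so the ``model vector space'' for the affine structure must be identified carefully --- it consists of pairs (difference of $\nabla$'s, difference of $\rho$'s) subject to a linearized Clifford--Dirac constraint, and one must check this really is a vector space acting simply transitively. A secondary subtlety is whether one wants arbitrary (possibly infinite, or $\R$-parametrized) affine combinations or only finite ones; I would state the result for finite affine combinations, which is what ``affine space'' requires, and note that the computation above extends verbatim since only $\sum t_i = 1$ and the pairwise Clifford--Dirac relations are used. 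I would close by remarking that compatibility is genuinely necessary: without it, even $t\rho+(1-t)\rho'$ fails to be an involution for $t\notin\{0,1\}$, so the set of \emph{all} odd connections is not an affine space --- only the pairwise-compatible ones are.
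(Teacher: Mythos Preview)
Your approach coincides with the paper's: invoke Proposition~\ref{Prop:AffineModule} for the odd-quasi-connection axioms, and use the Clifford--Dirac relation to guarantee that the affine combination of the $\rho$'s is again an involution. The paper's argument is in fact just the two-point computation displayed immediately before the theorem, together with the words ``With the above observations and definition in place, we have the following result''; no further proof is given.

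Your write-up is more thorough than the paper's. In particular, your Step~1 (finite affine combinations) and Step~2 (the new connection remains compatible with each $(\nabla^{(i)},\rho^{(i)})$) are correct and address closure questions that the paper passes over in silence; the identity $\big(\sum_i t_i\big)^2=1$ and the anticommutator computation $2t_i+2\sum_{j\ne i}t_j=2$ are exactly right.

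One caution on your Step~3: the claim that the difference of two compatible odd connections with \emph{distinct} $\rho,\rho'$ is an odd banal quasi-connection is false as stated, since $(\nabla_X-\nabla'_X)(fY)$ produces the non-vanishing boundary term $(\rho(X)-\rho'(X))f\cdot Y$. You catch this yourself in the ``obstacle'' paragraph, where you correctly describe the model vector space as pairs (difference of $\nabla$'s, difference of $\rho$'s) with the appropriate linearised constraint. For the theorem as the paper states and argues it, however, this identification of the translation space is not actually needed: closure under two-point affine combinations is all that is claimed, and your Steps~1--2 already exceed that.
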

\begin{remark}
It is clear that the set of odd connections on $M$ cannot be a $C^\infty_0(M)$-module in the same way as general odd quasi-connections. The involutive property of $\rho$ is not preserved by multiplication by an even function.
\end{remark}

\subsection{Induced Odd Connections}
It turns out that, in much the same way as with quasi-connections, odd connections and affine connections are not completely separate notions.
\begin{proposition}\label{Prop:CanGen}
Let $M$ be an $n|n$-dimensional supermanifold equipped with an affine connection $\bar{\nabla}$ and an odd involution $\rho: \Vect(M) \rightarrow \Vect(M)$. Then $(\nabla, \rho)$, were
$$\nabla:= \bar{\nabla} \circ (\rho, \, \Id_{\Vect(M)}),$$
is an odd connection on $M$.
\end{proposition}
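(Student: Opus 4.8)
The plan is to verify directly that the pair $(\nabla, \rho)$ with $\nabla_X Y := \bar\nabla_{\rho(X)} Y$ satisfies the three defining conditions of Definition \ref{Def:OddQuaCon}, and then to observe that since $\rho$ is by hypothesis an odd involution, the resulting odd quasi-connection is automatically an \emph{odd connection} in the sense fixed by the Nomenclature above. There is essentially nothing to prove about the involutivity of $\rho$ — it is an assumption — so the entire content is the Leibniz-type bookkeeping, which is routine given the sign conventions already established.

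\begin{proof}
Set $\nabla_X Y := \bar\nabla_{\rho(X)} Y$ for homogeneous $X, Y \in \Vect(M)$, extended bilinearly. Since $\bar\nabla$ is an affine (even) connection, $\widetilde{\bar\nabla_W Y} = \widetilde{W} + \widetilde{Y}$; applying this with $W = \rho(X)$ and using $\widetilde{\rho(X)} = \widetilde{X} + 1$ gives $\widetilde{\nabla_X Y} = \widetilde{X} + \widetilde{Y} + 1$, which is condition (1). For condition (2), since $\rho$ is an \emph{odd} $C^\infty(M)$-module endomorphism we have $\rho(f X) = (-1)^{\widetilde{f}} f\, \rho(X)$, and then $C^\infty(M)$-linearity of $\bar\nabla$ in its first slot yields
\begin{align*}
\nabla_{fX} Y = \bar\nabla_{\rho(fX)} Y = \bar\nabla_{(-1)^{\widetilde{f}} f\,\rho(X)} Y = (-1)^{\widetilde{f}} f\, \bar\nabla_{\rho(X)} Y = (-1)^{\widetilde{f}} f\, \nabla_X Y.
\end{align*}
For condition (3), apply the Leibniz rule for the affine connection $\bar\nabla$ in the second slot, namely $\bar\nabla_W (fY) = (W f)\, Y + (-1)^{\widetilde{W}\widetilde{f}} f\, \bar\nabla_W Y$, with $W = \rho(X)$, so that $\widetilde{W} = \widetilde{X} + 1$:
\begin{align*}
\nabla_X (fY) = \bar\nabla_{\rho(X)}(fY) = \big(\rho(X) f\big)\, Y + (-1)^{(\widetilde{X}+1)\widetilde{f}} f\, \bar\nabla_{\rho(X)} Y = \rho(X) f\, Y + (-1)^{(\widetilde{X}+1)\widetilde{f}} f\, \nabla_X Y,
\end{align*}
which is precisely condition (3) of Definition \ref{Def:OddQuaCon}. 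Hence $(\nabla, \rho)$ is an odd quasi-connection. Finally, $\rho$ is assumed to be an odd involution, $\rho^2 = \Id$, so by the Nomenclature convention $(\nabla, \rho)$ is an odd connection on $M$; equivalently, by Theorem \ref{Thm:TorCurTensors} its torsion and curvature are tensors. The existence of an odd involution $\rho$ forces $M$ to be $n|n$-dimensional, consistent with the hypothesis.
\end{proof}

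There is no real obstacle here; the only point requiring a moment's care is tracking the sign $(-1)^{\widetilde{f}}$ coming from pulling the even-or-odd function $f$ past the odd map $\rho$ in condition (2), and the shift $\widetilde{X} \mapsto \widetilde{X} + 1$ in the exponent of condition (3) caused by replacing $X$ with $\rho(X)$ inside $\bar\nabla$. Both are immediate from the stated conventions. (One could also phrase this more invariantly as $\nabla = \bar\nabla \circ (\rho, \Id_{\Vect(M)})$, which is the formulation in the statement; the computation is identical.)
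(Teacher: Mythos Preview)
Your proof is correct and follows essentially the same approach as the paper's own proof: a direct verification of the three defining conditions of an odd quasi-connection, using that $\bar\nabla$ is even, $C^\infty(M)$-linear in the first slot, and satisfies the Leibniz rule in the second. Your write-up is slightly more explicit about the sign bookkeeping, but the argument is identical in substance.
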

\begin{proof}
We just need to check the defining properties of an odd quasi-connection. The Grassmann parity is clear as an affine connection is an even map. $\nabla_{fX} Y =  \bar{\nabla}_{\rho(fX)}Y = (-1)^{\widetilde{f}} f \,\bar{\nabla}_{\rho(X)}Y = (-1)^{\widetilde{f}} f \,\nabla_X Y  $, establishes the second condition. The third condition similarly follows from a short calculation $\nabla_X fY = \bar{\nabla}_{\rho(X)} fY = \rho(X)f \, Y + (-1)^{(\widetilde{X} +1)\widetilde{f}} \, f \, \bar{\nabla}_{\rho(X)}Y= \rho(X)f \, Y + (-1)^{(\widetilde{X} +1)\widetilde{f}} \, f \, \nabla_{X}Y$.
\end{proof}
\begin{remark}
Clearly, if we relax the condition that $\rho$ is an involution we arrive at a general odd quasi-connection.
\end{remark}
\begin{definition}
Let $\bar{\nabla}$ be an affine connection on $M$. An odd connection $(\nabla, \rho )$ on $M$ is said to be \emph{canonically generated by} $\bar{\nabla}$ if and only if
$$\nabla= \bar{\nabla} \circ (\rho, \, \Id_{\Vect(M)}).$$
\end{definition}
\begin{example}
The canonical odd connection on $\R^{n|n}$ (see Example \ref{Exp:CanOddCon}) is an example of a canonically induced odd connection where the affine connection is the standard connection on $\R^{n|n}$ and the odd involution is the canonical one.
\end{example}
Directly from Proposition \ref{Prop:DifferBanal} we have the following result.
\begin{proposition}\label{Prop:ConAffDifBan}
Let $(\nabla, \, \rho)$ be an odd connection and $\bar{\nabla}$ be an arbitrary affine connection both the same supermanifold $M$. Then
$$B := \nabla -  \bar{\nabla} \circ (\rho, \, \Id_{\Vect(M)}) $$
is an odd banal quasi-connection.
\end{proposition}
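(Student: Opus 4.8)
The plan is to reduce the statement to Proposition \ref{Prop:DifferBanal} by first identifying $\bar{\nabla}\circ(\rho,\,\Id_{\Vect(M)})$ as an odd quasi-connection with endomorphism $\rho$. Concretely, I would apply Proposition \ref{Prop:CanGen} to the given affine connection $\bar{\nabla}$ together with the odd involution $\rho$ that comes attached to $\nabla$: this yields an odd connection $(\nabla',\rho)$ with $\nabla' := \bar{\nabla}\circ(\rho,\,\Id_{\Vect(M)})$. In particular $\nabla'$ satisfies all the axioms of Definition \ref{Def:OddQuaCon}, and — crucially — it carries the \emph{same} odd endomorphism $\rho$ as $\nabla$.

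Having done this, $(\nabla,\rho)$ and $(\nabla',\rho)$ are precisely a pair of odd quasi-connections on $M$ sharing the same odd endomorphism, so Proposition \ref{Prop:DifferBanal} applies directly and gives that $B = \nabla - \nabla'$ is an odd banal quasi-connection; equivalently, $(B,0)$ satisfies Definition \ref{Def:OddQuaCon} with zero endomorphism, i.e.\ $B$ is an odd tensor of type $(1,2)$. Since $B$ is exactly $\nabla - \bar{\nabla}\circ(\rho,\,\Id_{\Vect(M)})$, this is the claim.

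The only things requiring a word of care are bookkeeping rather than any genuine difficulty. One should note that $\nabla'$ does have Grassmann parity one, so that the difference $\nabla-\nabla'$ is meaningful as an odd object: this is immediate because $\bar{\nabla}$ is even while $\rho$ is odd, so precomposing in the first slot with $\rho$ shifts the parity by $1$, giving $\widetilde{\nabla'_X Y} = \widetilde{X}+\widetilde{Y}+1$. One should also observe that the anomalous derivative terms $\rho(X)f\,Y$ occurring in the third Leibniz axiom for $\nabla$ and for $\nabla'$ are literally identical, since the endomorphism is the same $\rho$ in both cases, and therefore cancel upon subtraction — this is exactly the one-line computation already displayed in the proof of Proposition \ref{Prop:DifferBanal}. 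I do not expect any real obstacle: the result is a direct corollary, and the substantive content has been packaged into Propositions \ref{Prop:CanGen} and \ref{Prop:DifferBanal}.
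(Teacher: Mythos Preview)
Your proposal is correct and is essentially the paper's own argument: the paper simply remarks that the result follows directly from Proposition~\ref{Prop:DifferBanal}, with the implicit identification (via Proposition~\ref{Prop:CanGen}) of $\bar{\nabla}\circ(\rho,\Id_{\Vect(M)})$ as an odd connection carrying the same endomorphism $\rho$. Your write-up just makes that implicit step explicit.
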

\begin{proposition}\label{Prop:ConGenII}
An odd connection $(\nabla, \rho)$ on a supermanifold $M$ is canonically generated by the affine connection $\bar{\nabla} = \nabla \circ (\rho, \, \Id_{\Vect(M)})$.
\end{proposition}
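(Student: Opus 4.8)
The plan is to take the candidate affine connection $\bar{\nabla}$ defined by $\bar{\nabla}_X Y := \nabla_{\rho(X)} Y$, verify directly that it satisfies the three defining properties of an affine connection on a supermanifold (evenness, sign-free $C^\infty(M)$-linearity in the first argument, Leibniz rule in the second argument, plus $\R$-bilinearity), and then use the involutivity $\rho^2 = \Id$ once more to recover the ``canonically generated'' relation $\nabla = \bar{\nabla}\circ(\rho,\Id_{\Vect(M)})$. First I would record the easy parts: $\R$-bilinearity of $\bar{\nabla}$ in both arguments is inherited from the bi-linearity of $\nabla$ together with the $\R$-linearity of $\rho$; and since $\rho$ is odd, $\widetilde{\rho(X)} = \widetilde{X}+1$, so property (1) of Definition \ref{Def:OddQuaCon} gives $\widetilde{\bar{\nabla}_X Y} = \widetilde{\rho(X)} + \widetilde{Y} + 1 = \widetilde{X}+\widetilde{Y}$, i.e.\ $\bar{\nabla}$ is even.

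Next I would check the two substantive identities. For the first argument, using that $\rho$ is an odd $C^\infty(M)$-module endomorphism, $\rho(fX) = (-1)^{\widetilde{f}} f\,\rho(X)$, and then by property (2) of Definition \ref{Def:OddQuaCon},
$$\bar{\nabla}_{fX} Y = \nabla_{\rho(fX)} Y = (-1)^{\widetilde{f}}\,\nabla_{f\rho(X)} Y = (-1)^{\widetilde{f}}(-1)^{\widetilde{f}} f\,\nabla_{\rho(X)} Y = f\,\bar{\nabla}_X Y;$$
the two sign factors cancel, which is exactly why $\bar{\nabla}$ has the sign-free first-slot linearity required of an affine connection. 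For the second argument, property (3) of Definition \ref{Def:OddQuaCon} gives
$$\bar{\nabla}_X (fY) = \nabla_{\rho(X)}(fY) = \rho(\rho(X))f\cdot Y + (-1)^{(\widetilde{\rho(X)}+1)\widetilde{f}} f\,\nabla_{\rho(X)} Y;$$
here involutivity enters twice: $\rho(\rho(X)) = X$, and $\widetilde{\rho(X)}+1 = \widetilde{X}$ modulo $2$, so the right-hand side is $X(f)\,Y + (-1)^{\widetilde{X}\widetilde{f}} f\,\bar{\nabla}_X Y$, the standard Leibniz rule. Hence $\bar{\nabla}$ is an affine connection. Finally, $(\bar{\nabla}\circ(\rho,\Id_{\Vect(M)}))(X,Y) = \bar{\nabla}_{\rho(X)} Y = \nabla_{\rho(\rho(X))} Y = \nabla_X Y$, again by $\rho^2 = \Id$, which is precisely the statement that $(\nabla,\rho)$ is canonically generated by $\bar{\nabla}$.

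I do not expect a genuine obstacle here: the whole argument is sign bookkeeping, and the only real ``content'' is the two observations isolated above --- that the pair of $(-1)^{\widetilde{f}}$ factors in the first-slot computation cancel, and that $\rho^2 = \Id$ is exactly what is needed both to turn the shifted exponent $(\widetilde{X}+1)\widetilde{f}$ of the odd-quasi-connection Leibniz rule into the $\widetilde{X}\widetilde{f}$ of the affine-connection Leibniz rule and to replace $\rho(\rho(X))$ by $X$ in the derivative term. It would also be worth remarking that this proposition is the exact converse of Proposition \ref{Prop:CanGen}, so the two together say that canonical generation sets up a bijection between odd connections with fixed involution $\rho$ and affine connections; and that the construction genuinely uses involutivity --- for a general odd quasi-connection $\rho$ need not be invertible, so $\nabla \circ (\rho,\Id_{\Vect(M)})$ would not even define an affine connection.
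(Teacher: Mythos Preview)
Your proof is correct and follows essentially the same approach as the paper's: verify that $\bar{\nabla}$ is an affine connection by checking the axioms (the paper just says ``a quick calculation similar to that used in the proof of Proposition~\ref{Prop:CanGen}'' where you spell it out), and then use $\rho^2=\Id$ to obtain $\bar{\nabla}_{\rho(X)}Y=\nabla_{\rho(\rho(X))}Y=\nabla_X Y$. The only cosmetic difference is that the paper frames the second step via Proposition~\ref{Prop:ConAffDifBan}, writing the difference $B=\nabla-\bar{\nabla}\circ(\rho,\Id)$ and showing $B=0$, but the underlying computation is identical to yours.
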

\begin{proof}
First, we need to show that  $\bar{\nabla}$ is an affine connection. A quick calculation similar to that used in the proof of Proposition \ref{Prop:CanGen} shows this is the case.  Using Proposition \ref{Prop:ConAffDifBan} and the fact that $\rho$ is an involution we observe that
$$B(X,Y) = \nabla_X Y - \bar{\nabla}_{\rho(X)}Y = \nabla_X Y - \nabla_{\rho(\rho(X))}Y  =  \nabla_X Y - \nabla_X Y =0,$$
for arbitrary $X$ and $Y \in \Vect(M)$.  This implies the result.
\end{proof}

\begin{statement}
Given an affine connection and an odd involution  one can canonically construct an odd connection. Moreover, any odd connection has a decomposition into an induced odd connection (with respect to any chosen affine connection) and an odd banal connection, i.e., an odd tensor of type $(1,2)$.
\end{statement}

\subsection{The Algebraic Bianchi Identity}
We further justify our definition of an odd connection and, in particular, the definitions of the torsion and curvature. We view the classical first or algebraic Bianchi identity as a compatibility condition between the  curvature and the torsion. Thus, there should, if our notions are consistent, be some similar compatibility for the case of odd connections.
\begin{theorem}\label{Thm:Bianchi1}
Let $(\nabla, \, \rho)$ be an odd connection on a supermanifold. The associated torsion and  curvature tensors $T$ and $R$, respectively, satisfy the following generalisation of the first (or algebraic) Bianchi identity,
\begin{align*}
 &(-1)^{\widetilde{X}(\widetilde{Z} +1)}\,R(X,Y)Z + (-1)^{\widetilde{Y}(\widetilde{X} +1)}\,R(Y,Z)X  + (-1)^{\widetilde{Z}(\widetilde{Y} +1)}\,R(Z,X)Y \\
 &=(-1)^{\widetilde{X}(\widetilde{Z} +1)}\, \nabla_X\big( T(Y,Z)\big) +(-1)^{\widetilde{Y}(\widetilde{X} +1)}\, \nabla_Y\big( T(Z,X)\big) +(-1)^{\widetilde{Z}(\widetilde{Y} +1)}\, \nabla_Z\big( T(X,Y)\big)\\
 &-(-1)^{\widetilde{X}(\widetilde{Z} +1) + \widetilde{Y}}\, T\big( X, \rho[\rho(Y), \rho(Z)]  \big)-(-1)^{\widetilde{Y}(\widetilde{X} +1) + \widetilde{Z}}\, T\big( Y, \rho[\rho(Z), \rho(X)]  \big)-(-1)^{\widetilde{Z}(\widetilde{Y} +1) + \widetilde{X}}\, T\big( Z, \rho[\rho(X), \rho(Y)]  \big),
 \end{align*}
for all $X,Y$ and $Z \in \Vect(M)$.
\end{theorem}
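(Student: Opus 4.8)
The plan is to transcribe the classical proof of the first Bianchi identity for an affine connection with torsion into the graded setting, carrying the odd endomorphism $\rho$ along. The key observation is that the Lie bracket of vector fields enters the definitions of $T$ and $R$ only through the \emph{odd bracket}
$$\langle X,Y\rangle := \rho\big([\rho(X),\rho(Y)]\big), \qquad \widetilde{\langle X,Y\rangle}=\widetilde X+\widetilde Y+1,$$
and that, since $\rho^2=\Id$, substituting $\rho(X),\rho(Y),\rho(Z)$ into the graded Jacobi identity for the commutator of vector fields and then applying $\rho$ yields a graded Jacobi identity for $\langle-,-\rangle$:
$$(-1)^{(\widetilde X+1)(\widetilde Z+1)}\langle X,\langle Y,Z\rangle\rangle+(-1)^{(\widetilde Y+1)(\widetilde X+1)}\langle Y,\langle Z,X\rangle\rangle+(-1)^{(\widetilde Z+1)(\widetilde Y+1)}\langle Z,\langle X,Y\rangle\rangle=0.$$
This is the first thing I would record.

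Next I would rewrite the two defining data in a convenient form. From Definition~\ref{Def:Torsion},
$$\nabla_XY=T(X,Y)-(-1)^{\widetilde X\widetilde Y}\nabla_YX-(-1)^{\widetilde X}\langle X,Y\rangle,$$
which trades the order of the arguments of $\nabla$ at the cost of a torsion term and a bracket term (the analogue of $\nabla_XY-\nabla_YX=T(X,Y)+[X,Y]$; the sign in front of $\nabla_YX$ is a \emph{plus} because our torsion is graded \emph{symmetric}). From Definition~\ref{Def:RieCurv}, writing out the graded commutator and using $\widetilde{\nabla_X}=\widetilde X+1$,
$$R(X,Y)Z=\nabla_X\nabla_YZ-(-1)^{(\widetilde X+1)(\widetilde Y+1)}\nabla_Y\nabla_XZ-\nabla_{\langle X,Y\rangle}Z.$$

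I would then form the cyclic sum $(-1)^{\widetilde X(\widetilde Z+1)}R(X,Y)Z+(-1)^{\widetilde Y(\widetilde X+1)}R(Y,Z)X+(-1)^{\widetilde Z(\widetilde Y+1)}R(Z,X)Y$, the weights being dictated by the statement, expand each curvature by the displayed second-order formula, and in each $\nabla_X\nabla_YZ$ apply the torsion identity to the \emph{inner} covariant derivative $\nabla_YZ$. Exactly as in the classical argument, the resulting ``pure'' second-order contributions $\nabla_X\nabla_ZY$ reproduce, after a cyclic relabelling of the summation variables, the $\nabla_Y\nabla_XZ$ terms already present, and they cancel against them --- this is precisely what the weights $(-1)^{\widetilde X(\widetilde Z+1)}$ are engineered to make work. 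What survives splits into three families: the terms $\nabla_X\big(T(Y,Z)\big)$, which after cyclic summation are exactly the first line of the right-hand side; the terms $\nabla_X\langle Y,Z\rangle$ together with the $\nabla_{\langle X,Y\rangle}Z$ terms coming from the curvature itself; and nothing else. To the second family I apply the torsion identity once more, now to $\nabla_X\langle Y,Z\rangle$, which produces a torsion term $T\big(X,\langle Y,Z\rangle\big)$, a term $\nabla_{\langle Y,Z\rangle}X$, and a double-bracket term $\langle X,\langle Y,Z\rangle\rangle$. The $\nabla_{\langle Y,Z\rangle}X$ contributions cancel the $\nabla_{\langle X,Y\rangle}Z$ contributions under the cyclic relabelling; the double-bracket contributions vanish by the graded Jacobi identity of the first step, once one notices that the weight with which they occur equals the Jacobi weight $(-1)^{(\widetilde X+1)(\widetilde Z+1)}$ times the cyclically symmetric sign $(-1)^{\widetilde X+\widetilde Y+\widetilde Z+1}$; and the surviving torsion terms $T\big(X,\langle Y,Z\rangle\big)$ assemble into the last line of the right-hand side. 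Collecting the first and third families gives the asserted identity.

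The only real obstacle is the sign bookkeeping: one must track Koszul signs through the compositions $\nabla_X\circ\nabla_Y$ (of operator parities $\widetilde X+1$ and $\widetilde Y+1$), through the graded \emph{symmetry} of $T$ (which flips several signs relative to the classical antisymmetric case), and through the nested odd bracket, and in particular one must check that the extra factor $(-1)^{\widetilde Y}$ appearing on the torsion terms of the right-hand side emerges correctly from the $(-1)^{\widetilde X}$ in front of $\langle X,Y\rangle$ in Definition~\ref{Def:Torsion}. I would organise the write-up by isolating the three families above and verifying the sign match family by family, invoking the graded Jacobi identity only at the very last step. In principle one could instead try to derive the identity from the classical first Bianchi identity for the affine connection $\bar\nabla=\nabla\circ(\rho,\Id_{\Vect(M)})$ of Proposition~\ref{Prop:ConGenII}, via $R(X,Y)Z=\bar R(\rho(X),\rho(Y))Z$; but the torsion $T$ does not reduce to the torsion of $\bar\nabla$ in a simple way --- the second slot of $T$ is not twisted by $\rho$ --- so the direct computation is the cleaner route.
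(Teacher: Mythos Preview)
Your proposal is correct and follows essentially the same route as the paper's proof: expand the weighted cyclic sum of curvatures, regroup the double covariant derivatives using the torsion identity, regroup once more to extract the $T\big(X,\rho[\rho(Y),\rho(Z)]\big)$ terms, and kill the residual nested brackets with the Jacobi identity. The only differences are organisational---you introduce the shorthand $\langle X,Y\rangle:=\rho([\rho(X),\rho(Y)])$ and state its graded Jacobi identity at the outset, and you apply the torsion identity to the inner $\nabla_YZ$ \emph{before} pairing terms, whereas the paper first pairs the six $\nabla\nabla$ terms into three sums $\nabla_X\big(\nabla_YZ+(-1)^{\widetilde Y\widetilde Z}\nabla_ZY\big)$ and then applies torsion; either order leads to the same cancellations.
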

\begin{corollary}
If the odd connection $(\nabla, \, \rho)$ in question is torsion-free, i.e., the torsion tensor vanishes,  then the first or algebraic Bianchi identity is
$$(-1)^{\widetilde{X}(\widetilde{Z} +1)}\,R(X,Y)Z + (-1)^{\widetilde{Y}(\widetilde{X} +1)}\,R(Y,Z)X  + (-1)^{\widetilde{Z}(\widetilde{Y} +1)}\,R(Z,X)Y = 0\,,$$
for all $X,Y$ and $Z \in \Vect(M)$.
\end{corollary}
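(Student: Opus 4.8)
The plan is to obtain the Corollary as an immediate specialization of Theorem~\ref{Thm:Bianchi1}: once the general torsion-inclusive Bianchi identity is in hand, the torsion-free case reduces to verifying that every term on its right-hand side vanishes when $T$ is the zero map. First I would recall Definition~\ref{Def:Torsionless}, which declares $(\nabla,\rho)$ torsion-free precisely when the associated torsion is the zero map, i.e.\ $T(A,B)=0$ for all $A,B\in\Vect(M)$. Then I would apply Theorem~\ref{Thm:Bianchi1} to the homogeneous triple $(X,Y,Z)$, leaving the left-hand cyclic curvature sum untouched and examining the six right-hand terms.

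These six terms fall into two families. The first three have the shape $\nabla_W\big(T(\,\cdot\,,\,\cdot\,)\big)$ with $W\in\{X,Y,Z\}$; since $T(\,\cdot\,,\,\cdot\,)$ is identically the zero vector field, and since $\nabla$ is bilinear — so that $\nabla_W$ is linear in its second slot and hence sends the zero field to the zero field — each of these vanishes. The remaining three have the shape $T\big(W,\rho[\rho(\,\cdot\,),\rho(\,\cdot\,)]\big)$; here the second argument $\rho[\rho(Y),\rho(Z)]$, and its cyclic partners, is a genuine element of $\Vect(M)$, being $\rho$ applied to the Lie bracket of two vector fields, so $T$ evaluated on such a pair is again zero by torsion-freeness. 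Consequently the whole right-hand side of the identity of Theorem~\ref{Thm:Bianchi1} collapses to zero, and what survives is exactly
$$(-1)^{\widetilde{X}(\widetilde{Z}+1)}\,R(X,Y)Z+(-1)^{\widetilde{Y}(\widetilde{X}+1)}\,R(Y,Z)X+(-1)^{\widetilde{Z}(\widetilde{Y}+1)}\,R(Z,X)Y=0,$$
which is the claimed identity; as it is inherited directly from Theorem~\ref{Thm:Bianchi1}, it holds for the same class of arguments $X,Y,Z\in\Vect(M)$.

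Honestly, there is no substantive obstacle here, since the real work was discharged in proving Theorem~\ref{Thm:Bianchi1}, and the present statement is a genuine corollary. The only two points warranting a line of care are the ones flagged above: that $\nabla_W$ annihilates the zero vector field (a consequence of bilinearity alone, not of the Leibniz-type property of Definition~\ref{Def:OddQuaCon}, so no sign bookkeeping or $\rho(W)f$ term can intervene), and that $\rho[\rho(\,\cdot\,),\rho(\,\cdot\,)]$ is a legitimate vector-field argument of the torsion map, so that torsion-freeness genuinely applies to it. For completeness I would note that a self-contained route exists — one could instead substitute the torsion-free relation $\nabla_X Y+(-1)^{\widetilde{X}\widetilde{Y}}\nabla_Y X=-(-1)^{\widetilde{X}}\,\rho\big([\rho(X),\rho(Y)]\big)$ directly into the cyclic curvature sum and simplify — but this merely re-derives Theorem~\ref{Thm:Bianchi1} en route and is strictly more laborious, so I would not pursue it.
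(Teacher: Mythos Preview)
Your proposal is correct and matches the paper's approach: the paper states the result as an immediate corollary of Theorem~\ref{Thm:Bianchi1} without a separate proof, and your argument—setting $T\equiv 0$ so that all six right-hand terms vanish—is exactly the intended specialization. The two points you flag (that $\nabla_W 0 = 0$ by bilinearity, and that $\rho[\rho(\cdot),\rho(\cdot)]$ is a legitimate vector-field argument of $T$) are the only things one could conceivably check, and you handle them correctly.
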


The proof of Theorem \ref{Thm:Bianchi1}  follows from a direct, but laborious computation along the same lines as the classical proof. We defer details  to   Appendix \ref{App:ProofAlgBian}.

\subsection{Extending the Odd Connection to Tensor Fields}
The space of tensor fields, understood as representations of $\textnormal{GL}(n|m)$, i.e., the structure group of the tangent bundle of a $n|m$-dimensional supermanifold, is not exhausted by covariant, contravariant and mixed tensors.  One needs to include Berezin densities to ``close'' the theory and these cannot be constructed from `na\"{\i}ve' tensors.  In this subsection, we will concentrate on  $(p,q)$-tensors and how to extend the odd connection to act upon them. This is, of course, done via the Leibniz rule, just as in the classical setting. We do this in the logical steps of first extending to functions and one-forms before deducing what happens to mixed tensor fields.\par
For functions, we take the natural modification of the classical definition, i.e., for any and all $f \in C^\infty(M)$ we define
\begin{equation}\label{Eqn:OddConFun}
\nabla_X f:= \rho(X).
\end{equation}
Moving on to one-forms, we use the coordinate basis and locally write $\alpha = \delta x^a \, \alpha_a(x)$, where $\widetilde{\delta x^a} = \widetilde{a}$. That is, we are using ``even'' one-forms, see \cite[Appendix A.1]{Voronov:2016} for details.  Under general coordinate changes $x^{a'} = x^{a'}(x)$, the differentials transform as
 $$\delta x^{a'} = \delta x^a \left (\frac{\partial x^{a'}}{\partial x^a}\right).$$
The duality condition between the coordinate basis of vector fields and one-forms is, as in the classical setting,  $\langle \partial_a , \delta x^b\rangle = \delta_a^{\;\; b}$. Thus the invariant pairing between vector fields and one-forms is locally expressed as $\langle X, \alpha\rangle = X^a\alpha_a (x) \in C^\infty(M)$.   We then define the action of the odd covariant derivative on a one-form via this pairing and the Leibniz rule,
$$\nabla_X \langle Y, \alpha\rangle = \rho(X) \langle Y, \alpha\rangle =\langle \nabla_X Y, \alpha\rangle + (-1)^{(\widetilde{X}+1)\widetilde{Y}}\, \langle Y, \nabla_X \alpha\rangle.$$
Thus, $\nabla_X \alpha$ is completely determined by
$$ \langle Y, \nabla_X \alpha\rangle = (-1)^{(\widetilde{X}+1)\widetilde{Y}}\, \big(  \rho(X) \langle Y, \alpha\rangle - \langle \nabla_X Y, \alpha\rangle   \big).$$
Using \eqref{Eqn:LocForOddQuaiCon} and the invariant paring we see that
\begin{align*}
Y^a(\nabla_X \alpha)_a & = (-1)^{(\widetilde{X} +1)\widetilde{Y}} \left((-1)^{\widetilde{X} + \widetilde{b}} X^b \rho_b^{\;\; c }\,\frac{\partial}{\partial x^c}(Y^a\alpha_a) - (-1)^{\widetilde{X} + \widetilde{b}}\, \frac{\partial Y^a}{\partial x^c} \,\alpha_a - (-1)^{\widetilde{X} + \widetilde{b} +(\widetilde{b} +1)(\widetilde{Y} + \widetilde{a})}\, X^b Y^a \Gamma_{ab}^{\; \; \; c} \alpha_c \right)\\
& = Y^a\left((-1)^{\widetilde{X}(\widetilde{a} +1) + \widetilde{a} + \widetilde{b}} \,\left( X^b \rho_b^{\;\; c }\,\frac{\partial\alpha_a }{\partial x^c}  -  X^b  \Gamma_{ab}^{\; \; \; c} \alpha_c\right) \right).
\end{align*}
This implies that locally and using the coordinate basis, the odd connection acting on a one form is given by
\begin{equation}\label{Eqn:OddConOneForm}
\nabla_X \alpha :=    (-1)^{\widetilde{X}(\widetilde{a} +1) + \widetilde{a} + \widetilde{b}} \, \delta x^a  \,\left(   X^b \rho_b^{\;\; c }\,\frac{\partial\alpha_a }{\partial x^c}  -  X^b  \Gamma_{ab}^{\; \; \; c} \alpha_c \right).
\end{equation}
We are now in a position to describe what happens to more general mixed tensors. A $(p,q)$-tensor field is a ($\Z_2$-graded homogeneous)  $C^\infty(M)$-multilinear map
$$ T : \otimes^p \, \Vect(M) ~  \otimes^q \,\Omega^1(M) \longrightarrow  C^\infty(M),$$
where we have, of course, employed the $\Z_2$-graded tensor product over the global functions on $M$. Note that we have assumed no symmetry in this definition  and that we are using the ``even'' conventions with the one-forms.  In terms of the coordinate basis, so locally,  we write
$$T = \delta x^{a_1} \delta x^{a_2} \cdots \delta x^{a_p} \,T_{a_p \cdots a_2 a_1}^{\;\;\;\;\;\;\;\;\;\;\; b_q \cdots b_2 b_1} \, \partial_{b_1}\partial_{b_2} \cdots \partial_{b_q},$$
where we neglect to write the tensor product explicitly.  It is straightforward to deduce that under a general coordinate transformation the components of a tensor transform as
$$T_{a'_p \cdots a'_2 a'_1}^{\;\;\;\;\;\;\;\;\;\;\;\; b'_q \cdots b'_2 b'_1} = (-1)^\chi \, \frac{\partial x^{a_1}}{\partial x^{a'_1}} \frac{\partial x^{a_2}}{\partial x^{a'_2}} \cdots \frac{\partial x^{a_p}}{\partial x^{a'_p}} \,T_{a_p \cdots a_2 a_1}^{\;\;\;\;\;\;\;\;\;\;\; b_q \cdots b_2 b_1} \,\frac{\partial x^{b'_1}}{\partial x^{b_1}} \frac{\partial x^{b'_2}}{\partial x^{b_2}} \cdots \frac{\partial x^{b'_q}}{\partial x^{b_q}}\,,$$
where the sign factor is given by
\begin{align*}
\chi &= (\widetilde{a}_1 + \widetilde{a}'_1 )(\widetilde{a}'_2 + \widetilde{a}'_3 + \cdots + \widetilde{a}'_p) +(\widetilde{a}_2 + \widetilde{a}'_2 )(\widetilde{a}'_3 + \cdots + \widetilde{a}'_p) + \cdots + (\widetilde{a}_{p-1} + \widetilde{a}'_{p-1} ) \widetilde{a}'_p\\
& +  (\widetilde{b}_2 + \widetilde{b}'_2 )\widetilde{b}'_1 + (\widetilde{b}_3 + \widetilde{b}'_3 )(\widetilde{b}'_1 +\widetilde{b}'_2 ) + \cdots + (\widetilde{b}_q + \widetilde{b}'_q )(\widetilde{b}'_1 +\widetilde{b}'_2 + \cdots + \widetilde{b}'_{q-1}).
\end{align*}

%
\begin{warning}
There are plenty of other conventions in the literature with regards to the ordering and position of indices of tensor fields. Note that we put the components of the tensor in the middle and this will effect various sign factors.
\end{warning}
Just as in the classical setting, we define the action of an odd connection on a mixed tensor field via the Leibniz rule. After a little rearranging one obtains the following definition.
\begin{definition}\label{Def:ExtOddConTen}
Let $(\nabla, \, \rho)$ be an odd connection on a supermanifold $M$. Furthermore, let $T$ be a $(p,q)$-tensor field on $M$, and $\{Y_1, \cdots ,Y_p \}$  and $\{ \alpha^1, \cdots , \alpha^q \}$ be collections of (homogeneous) arbitrary vector fields and one-forms, respectively. Then the \emph{odd covariant derivative of  $T$ in the direction of $X \in \Vect(M)$} is defined as
 \begin{align}
 \nonumber
 \big(\nabla_X T\big)(Y_1, Y_2, \cdots , Y_p \, ;\, \alpha^1, \alpha^2, \cdots , \alpha^q) & = (-1)^{(\widetilde{X}+1)(\widetilde{Y}_1 +\widetilde{Y}_2 + \cdots + \widetilde{Y}_p)}\,  \rho(X)\big(T(Y_1, Y_2, \cdots , Y_p \, ;\, \alpha^1, \alpha^2, \cdots , \alpha^q)\big)\\ \nonumber
 & -  (-1)^{(\widetilde{X}+1)(\widetilde{Y}_1 +\widetilde{Y}_2 + \cdots + \widetilde{Y}_p)} \, T(\nabla_X Y_1, Y_2, \cdots , Y_p \, ;\, \alpha^1, \alpha^2, \cdots , \alpha^q)\\ \nonumber
 & -  (-1)^{(\widetilde{X}+1)(\widetilde{Y}_2+ \widetilde{Y}_3 + \cdots + \widetilde{Y}_p)} \, T( Y_1,\nabla_X  Y_2, \cdots , Y_p \, ;\, \alpha^1, \alpha^2, \cdots , \alpha^q)\\ \nonumber
 & \vdots \\\nonumber
  & -  (-1)^{(\widetilde{X}+1)\widetilde{Y}_p} \, T( Y_1,  Y_2, \cdots , \nabla_X Y_p \, ;\, \alpha^1, \alpha^2, \cdots , \alpha^q)\\\nonumber
  & -  (-1)^{(\widetilde{X}+1)\widetilde{T}} \, T( Y_1,  Y_2, \cdots ,  Y_p \, ;\, \nabla_X \alpha^1, \alpha^2, \cdots , \alpha^q)\\\nonumber
   & -  (-1)^{(\widetilde{X}+1)(\widetilde{T}+ \widetilde{\alpha}^1)} \, T( Y_1,  Y_2, \cdots ,  Y_p \, ;\,  \alpha^1, \nabla_X \alpha^2, \cdots , \alpha^q)\\ \nonumber
   & \vdots\\
   & -  (-1)^{(\widetilde{X}+1)(\widetilde{T}+ \widetilde{\alpha}^1 + \widetilde{\alpha}^2 + \cdots + \widetilde{\alpha}^{q-1})} \, T( Y_1,  Y_2, \cdots ,  Y_p \, ;\,  \alpha^1, \alpha^2, \cdots ,  \nabla_X \alpha^q).
 \end{align}
 \end{definition}
As a specific example, consider a $(1,1)$-tensor, written locally as $T(Y;\alpha) = Y^aT_a^{\;\; b}\alpha_b$. The directly applying Definition \ref{Def:ExtOddConTen} together with \eqref{Eqn:LocForOddQuaiCon} and  \eqref{Eqn:OddConOneForm} we see that
\begin{align*}
\left( \nabla_X T\right)(Y;\alpha) & =(-1)^{(\widetilde{X}+1)\widetilde{Y}}\, \rho(X)\big( Y^aT_a^{\;\; b}\alpha_b \big ) - (-1)^{(\widetilde{X}+1)\widetilde{Y}}\,\big( \nabla_X Y\big)^a T_a^{\;\; b}\alpha_b - (-1)^{(\widetilde{X}+1)\widetilde{T}}\,Y^a T_a^{\;\; b}\big( \nabla_X \alpha\big)_b\\
& = (-1)^{(\widetilde{X}+ 1)\widetilde{Y}}\, X^c \rho_c^{\;\; d}\frac{\partial Y^a}{\partial x^d} T_a^{\;\; b} \alpha_b + (-1)^{\widetilde{X}(\widetilde{a}+ 1) \widetilde{a} + \widetilde{c}}\, Y^a X^c \rho_c^{\;\; d} \frac{\partial T_a^{\;\; b}}{\partial x^d} \alpha_b\\
& +  (-1)^{\widetilde{X}(\widetilde{T}+ \widetilde{b}+ 1) \widetilde{c} + \widetilde{b} + \widetilde{T}}\, Y^a T_a^{\;\; b} X^c \rho_c^{\;\; d} \frac{\partial  \alpha_b}{\partial x^d} -(-1)^{(\widetilde{X}+ 1)\widetilde{Y}}\, X^c \rho_c^{\;\; d}\frac{\partial Y^a}{\partial x^d} T_a^{\;\; b} \alpha_b \\
& - (-1)^{\widetilde{X}(\widetilde{a} +1)+ \widetilde{a} + \widetilde{c}}\, Y^a X^c \Gamma_{ac}^{\;\;\;d }T_b^{\;\; b}\alpha_b - (-1)^{\widetilde{X}(\widetilde{T}+ \widetilde{b}+ 1) \widetilde{c} + \widetilde{b} + \widetilde{T}}\, Y^a T_a^{\;\; b} X^c \rho_c^{\;\; d} \frac{\partial  \alpha_b}{\partial x^d}\\
&- (-1)^{\widetilde{X}(\widetilde{a} +1) + \widetilde{T} + \widetilde{c}(\widetilde{a} + \widetilde{d}+1) + \widetilde{d}} \, Y^a X^cT_a^{\;\; d}\Gamma_{d c}^{\;\;\; b}\alpha_b\\
& = (-1)^{\widetilde{X}(\widetilde{a}+1)} \, Y^a X^c \left((-1)^{\widetilde{a}+\widetilde{c}} \, \rho_c^{\;\; d} \frac{\partial T_a^{\;\; b}}{\partial x ^d} -  (-1)^{\widetilde{a}+\widetilde{c}} \, \Gamma_{ac}^{\;\;\; d}T_d^{\;\; b} + (-1)^{\widetilde{T} + \widetilde{c}(\widetilde{a} + \widetilde{d}+1) + \widetilde{d}} \, T_a^{\;\; d} \Gamma_{dc}^{\;\; \; b}\right)\alpha_b \, .
\end{align*}
\begin{definition}\label{Def:OddConCompG}
Let $G$ be a rank-2 covariant tensor  on $M$ (no symmetry or non-degeneracy is assumed). Then an odd connection  $(\nabla, \, \rho)$ is \emph{compatible with $G$} if and only if $\big(\nabla_X G\big)(Y,Z)=0$ for all (homogeneous) $X,Y$ and $Z \in \Vect(M)$.
\end{definition}
Using Definition \ref{Def:ExtOddConTen} it is clear that the compatibility condition can be written as
\begin{equation}\label{Eqn:OddConCompG}
\rho(X)\big( G(Y,Z)\big) = G(\nabla_X Y,Z) + (-1)^{(\widetilde{X}+1)\widetilde{Y}}\, G(Y, \nabla_X Z)\,.
\end{equation}

\subsection{Odd Divergence Operators}
Divergence operators in supergeometry are even maps $ \Vect(M) \rightarrow C^\infty(M)$ that can be defined in terms of a Berezin volume or an affine connection.  The two approaches are, of course, tightly related, just as they are in the classical setting (see \cite{Khudaverdian:2004b,Khudaverdian:2013, Kosmann-Schwarzbach:2002}).  For the definition of an odd divergence operator we are forced to generalise the definition of a divergence operator in terms of an affine connection.
\begin{definition}\label{Def:OddDivOp}
Let $(\nabla, \, \rho)$ be an odd connection on a supermanifold $M$. The associated \emph{odd divergence operator} is the odd map
$$\textnormal{Div}_{(\nabla,\rho)} : ~\Vect(M) \longrightarrow C^\infty(M)$$
defined in local coordinates as
$$\textnormal{Div}_{(\nabla,\rho)} \, X := (-1)^{\widetilde{a}(\widetilde{X}+1)}\big( \nabla_a X \big)^a = (-1)^{ \widetilde{a}(\widetilde{X} +1)}\, \left (  \rho_a^{\,\,b}\frac{\partial X^a}{\partial x^b} + (-1)^{(\widetilde{a} +1)(\widetilde{X} + \widetilde{b})} \, X^b \, \Gamma_{ba}^{\,\,\, \,a} \right),$$
for any and all homogeneous $X\in \Vect(M)$.
\end{definition}
We must check that an odd divergence operator is well-defined, i.e., does not depend on the coordinates used. From Proposition \ref{Prop:TransOddQuasiCon} we see that
$$(-1)^{\widetilde{a}'}\, \big( \nabla_{a'}X\big)^{b'} = (-1)^{\widetilde{a}}\, \left(\frac{\partial x^{a}}{\partial x^{a'}}\right) \big( \nabla_{a}X\big)^{b}\left(\frac{\partial x^{b'}}{\partial x^b}\right).$$
This implies the following
\begin{align*}
(-1)^{\widetilde{a}'(\widetilde{X} +1)}\, \big( \nabla_{a'}X\big)^{a'} & = (-1)^{ \widetilde{a}' \, \widetilde{X} + \widetilde{a}}\, \left(\frac{\partial x^{a}}{\partial x^{a'}}\right) \big( \nabla_{a}X\big)^{b}\left(\frac{\partial x^{a'}}{\partial x^b}\right)\\
&= (-1)^{ \widetilde{a}(\widetilde{X}+1)}\,  \big( \nabla_{a}X\big)^{b}\left(\frac{\partial x^{a'}}{\partial x^b}\right)\left(\frac{\partial x^{a}}{\partial x^{a'}}\right)\\
& = (-1)^{ \widetilde{a}(\widetilde{X}+1)}\,  \big( \nabla_{a}X\big)^a\,,
\end{align*}
and so we conclude that the definition of an odd divergence operator is sound.
\begin{remark}
The  definition of an odd divergence operator generalises to odd quasi-connections with no problem.
\end{remark}
\begin{proposition}\label{Prop:OddDivOpPropeties}
 Let $(\nabla, \, \rho)$ be an odd connection on a supermanifold $M$ and furthermore, let $\textnormal{Div}_{(\nabla,\rho)}$ be the associated odd divergence operator.  The following properties hold.
 \begin{enumerate}
 \item $\textnormal{Div}_{(\nabla,\rho)} (X+ \lambda\, Y) = \textnormal{Div}_{(\nabla,\rho)} X + \lambda \, \textnormal{Div}_{(\nabla,\rho)}  Y$,
 \item $\textnormal{Div}_{(\nabla,\rho)} (f\, X) = (-1)^{\widetilde{f}}\, f \, \textnormal{Div}_{(\nabla,\rho)} X  + (-1)^{\widetilde{X} \, \widetilde{f}}\, \rho(X)f$, \label{Eqn:FunPropDiv}
 \end{enumerate}
 for all $X$ and $Y \in \Vect(M)$ homogeneous and of the same degree,  $\lambda \in \R$ and $f \in C^\infty(M)$.
\end{proposition}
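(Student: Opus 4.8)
The plan is to verify the two properties directly from the local coordinate expression for the odd divergence operator given in Definition \ref{Def:OddDivOp}, exactly as one does for ordinary divergence operators, paying attention only to the sign bookkeeping forced by the $\Z_2$-grading. The first property is the easy one: since $\nabla$ is bi-linear over $\R$ in its arguments and the prefactor $(-1)^{\widetilde a(\widetilde X + 1)}$ only depends on the parity of $X$, the sum $X + \lambda Y$ is homogeneous of the same degree as $X$ and $Y$ by hypothesis, so the sign factor is common to all three terms and $\R$-linearity of $\bigl(\nabla_a(-)\bigr)^a$ in its argument gives the claim immediately. I would dispatch this in one line.

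For the second property, the main content, I would substitute $f X$ into the defining formula and apply property (3) of Definition \ref{Def:OddQuaCon} in the form $\nabla_a(fX) = \rho_a{}^b\,\partial_b f \cdot X + (-1)^{(\widetilde a + 1)\widetilde f}\, f\, \nabla_a X$ (this is the local shadow of the Leibniz rule; note $\rho(\partial_a) = \rho_a{}^b \partial_b$ up to the sign convention in \eqref{Eqn:LocForOddQuaiCon}). Taking the $a$-th component, contracting, and inserting the overall factor $(-1)^{\widetilde a(\widetilde{fX}+1)} = (-1)^{\widetilde a(\widetilde X + \widetilde f + 1)}$ splits the expression into two pieces. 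The second piece, after pulling $f$ past the sign factors, must reassemble into $(-1)^{\widetilde f} f\, \textnormal{Div}_{(\nabla,\rho)} X$; this requires checking that $(-1)^{\widetilde a(\widetilde X + \widetilde f + 1)} \cdot (-1)^{(\widetilde a+1)\widetilde f} \cdot (\text{sign from moving } f) = (-1)^{\widetilde f}\,(-1)^{\widetilde a(\widetilde X + 1)}$, which is a routine parity identity once one tracks that moving $f$ leftward past $\partial_b X^a$ etc.\ costs $(-1)^{\widetilde f(\widetilde X + \cdots)}$ factors that telescope. The first piece is $(-1)^{\widetilde a(\widetilde X + \widetilde f + 1)}\, \rho_a{}^b\,\frac{\partial f}{\partial x^b}\, X^a$; I need to show this equals $(-1)^{\widetilde X \widetilde f}\,\rho(X)f$. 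Here I would use that $\rho(X)f = X^a \rho_a{}^b\,\partial_b f$ up to sign (read off from \eqref{Eqn:LocForOddQuaiCon} with $Y = $ the identity insertion, or simply from $\rho(X) = X^a\rho(\partial_a)$ and the parity $\widetilde{\rho_a{}^b} = \widetilde a + \widetilde b + 1$), and then reconcile the sign $(-1)^{\widetilde a(\widetilde X + \widetilde f + 1)}$ against the sign incurred by commuting $X^a$ past $\rho_a{}^b \partial_b f$, the net of which should be precisely $(-1)^{\widetilde X \widetilde f}$ after using that the term is nonzero only when $\widetilde a + \widetilde b + 1 + \widetilde f + \widetilde X = 0$ (Grassmann parity of the whole scalar equals that of $fX$ minus one, i.e.\ $\widetilde X + \widetilde f + 1$, forcing $\widetilde b = \widetilde X + \widetilde a$ on nonzero summands).

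The main obstacle is simply not making a sign error: there are three independent places where a $(-1)$ enters --- the overall normalisation factor depending on $\widetilde a$, the Leibniz sign $(-1)^{(\widetilde a+1)\widetilde f}$ from Definition \ref{Def:OddQuaCon}(3), and the sign from permuting $f$ past graded objects --- and they must conspire to give the clean right-hand side. I would handle this by exploiting the coordinate-independence already established after Definition \ref{Def:OddDivOp} to work in a basis where $a$ ranges over homogeneous coordinates, and by repeatedly using the constraint that a summand vanishes unless the parities of its factors are compatible, which collapses several of the ambiguous exponents (e.g.\ replacing $\widetilde a(\widetilde X + \widetilde f + 1)$ by $\widetilde b(\widetilde X + \widetilde f + 1)$ or by $\widetilde X \widetilde f$ modulo the other constraints). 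Once the signs are pinned down the identity falls out; no conceptual difficulty remains beyond this accounting.
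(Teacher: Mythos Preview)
Your plan is correct and matches the paper's proof: both parts are handled by direct substitution into the local formula of Definition \ref{Def:OddDivOp}, using $\R$-linearity for (1) and the Leibniz rule of Definition \ref{Def:OddQuaCon}(3) for (2), followed by sign bookkeeping. The only remark is that your proposed device of invoking parity constraints on nonzero summands (``forcing $\widetilde b = \widetilde X + \widetilde a$'') is unnecessary overhead --- the paper simply commutes $X^a$ past $\rho_a{}^{\,b}\,\partial_b f$ using the known parity $\widetilde{\rho_a{}^{\,b}} = \widetilde a + \widetilde b + 1$ and the signs collapse directly to $(-1)^{\widetilde X\widetilde f}\rho(X)f$ without any case analysis.
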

\begin{proof}
As we have fixed the odd connection, we will use the shorthand $\textnormal{Div}$ for the odd divergence operator.
\begin{enumerate}
\item This follows directly from the $\R$-linearity of  odd connections (see Definition \ref{Def:OddQuaCon}) and the local definition of the odd divergence. Specifically,
\begin{align*}
\textnormal{Div}(X + \lambda \, Y) &= (-1)^{\widetilde{a}(\widetilde{X}+1)}\big( \nabla_a X  + \lambda \, Y\big)^a
 = (-1)^{\widetilde{a}(\widetilde{X}+1)}\big( \nabla_a X \big)^a + (-1)^{\widetilde{a}(\widetilde{Y}+1)}\big( \nabla_a (\lambda \, Y) \big)^a\\
&= \textnormal{Div} X + \lambda \, \textnormal{Div}  Y.
\end{align*}
\item Similarity, this follows from the Leibniz rule for odd connections (see Definition \ref{Def:OddQuaCon}).
\begin{align*}
\textnormal{Div}(f\,X) & =  (-1)^{\widetilde{a}(\widetilde{X}+1) + \widetilde{a} \widetilde{f}}\big( \nabla_a f\,X \big)^a
= (-1)^{\widetilde{a}(\widetilde{X}+1) + \widetilde{a} \widetilde{f}} \, \rho_a^{\,\, b}\frac{\partial f}{\partial x^b}\, X^a + (-1)^{\widetilde{a}(\widetilde{X}+1) + \widetilde{f}}\, f \big(\nabla_a X \big)^a\\
&= (-1)^{\widetilde{a}(\widetilde{X}+1) + \widetilde{f}}\, f \big(\nabla_a X \big)^a +  (-1)^{\widetilde{X} \widetilde{f} + \widetilde{X} +\widetilde{a}}  \,X^a \rho_a^{\,\, b}\frac{\partial f}{\partial x^b}
= (-1)^{\widetilde{f}}\, f \, \textnormal{Div}  X  + (-1)^{\widetilde{X} \, \widetilde{f}}\, \rho(X)f.
\end{align*}
\end{enumerate}
\end{proof}
For non-homogeneous  vector fields we extend the definition of the odd divergence via linearity. Note that property \eqref{Eqn:FunPropDiv} of Proposition \ref{Prop:OddDivOpPropeties} is the odd generalisation of the defining property of any divergence operator.

\begin{example}\label{Exp:OddDiv1dSUSY}
Continuing with Example \ref{Exp:1dSUSY}, the odd divergence operator with respect to the odd connection defined by  the basis $\{P,D \}$ is
$$\textnormal{Div}\,X =  D(X^P) - (-1)^{\widetilde{X}} \, P(X^D)\,,$$
where, as before, $X = X^P P + X^D D$.
\end{example}

\subsection{On the Existence of Odd Connections}\label{SubSec:Existence}
Proposition  \ref{Prop:ConAffDifBan} tells us that up to an odd Banal connection any odd connection is a canonically induced odd connection with respect to any specified affine connection.  Thus, it is without great loss of generality to consider canonically induced odd connections when it comes to the question of the existence of odd connections.\par
\begin{lemma}\label{Lem:ExtCon}
The set of affine connections on a (smooth) supermanifold $M$ is non-empty.
\end{lemma}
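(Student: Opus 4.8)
The plan is to mimic the classical construction of an affine connection via a partition of unity, carefully inserting the signs required by the supergeometric Leibniz rule. First I would fix a locally finite open cover $\{|U_\alpha|\}$ of $|M|$ by coordinate neighbourhoods, so that on each $|U_\alpha|$ we have coordinates $x^a_{(\alpha)}$ and hence the ``trivial'' affine connection $\bar\nabla^{(\alpha)}$ determined by declaring $\bar\nabla^{(\alpha)}_{\partial_a}\partial_b = 0$ in these coordinates; equivalently, in the notation of \eqref{Eqn:LocForOddQuaiCon} (with $\rho = \Id$ and parity shifted back to even), this is the connection with $\Gamma^{\,\,\,\,c}_{ba}=0$. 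Since the paper has already recorded (see the discussion after Proposition \ref{Prop:TransOddQuasiCon}) that the Christoffel symbols of an affine connection transform in essentially the classical way — with an inhomogeneous second-derivative term — the difference of two affine connections is a genuine $(1,2)$-tensor, and more importantly any convex combination $\sum_\alpha \gamma_\alpha \bar\nabla^{(\alpha)}$ with $\sum_\alpha \gamma_\alpha = 1$ is again an affine connection.

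The key steps, in order, are: (i) invoke the existence of a smooth partition of unity $\{\gamma_\alpha\}\subset C^\infty_0(M)$ subordinate to $\{|U_\alpha|\}$ with $\sum_\alpha \gamma_\alpha = 1$ — this is legitimate in the Berezin--Leites category by the cited \cite[Lemma 3.1.7 and Corollary 3.1.8]{Leites:1980}, and note crucially that the $\gamma_\alpha$ are \emph{even} functions, so multiplication by them preserves all parities; (ii) on each $|U_\alpha|$ extend $\gamma_\beta\bar\nabla^{(\beta)}$ by zero outside $|U_\beta|$, which makes sense because $\gamma_\beta$ is supported in $|U_\beta|$ and $\bar\nabla$ is a local operator; (iii) define $\bar\nabla := \sum_\alpha \gamma_\alpha \bar\nabla^{(\alpha)}$, and check that $\bar\nabla$ satisfies the three defining axioms of an affine connection (the even, $\rho=\Id$ specialisation of Definition \ref{Def:OddQuaCon}). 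Bilinearity and $C^\infty(M)$-linearity in the first slot are immediate since each $\bar\nabla^{(\alpha)}$ has them and $\gamma_\alpha$ is a function. For the Leibniz rule in the second slot, one computes
\begin{align*}
\bar\nabla_X(fY) &= \textstyle\sum_\alpha \gamma_\alpha\,\bar\nabla^{(\alpha)}_X(fY)
= \textstyle\sum_\alpha \gamma_\alpha\big( X(f)\,Y + (-1)^{\widetilde X\widetilde f} f\,\bar\nabla^{(\alpha)}_X Y\big)\\
&= \Big(\textstyle\sum_\alpha \gamma_\alpha\Big) X(f)\,Y + (-1)^{\widetilde X\widetilde f} f\,\bar\nabla_X Y
= X(f)\,Y + (-1)^{\widetilde X\widetilde f} f\,\bar\nabla_X Y,
\end{align*}
using $\sum_\alpha\gamma_\alpha=1$; the sign bookkeeping is routine because the $\gamma_\alpha$ carry zero parity and hence slip past every sign factor unscathed.

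I expect the only genuinely substantive point — rather than a routine verification — to be making sure the zero-extension in step (ii) is well-posed and that the resulting sum is a \emph{global} operator on $\Vect(M)$ rather than merely a collection of local ones; this is handled exactly as in the classical case by locality (already established for odd quasi-connections, hence a fortiori for affine connections) together with local finiteness of the cover, so that the sum is finite in a neighbourhood of every point. Everything else is the observation that affine connections form an affine space modelled on $(1,2)$-tensors (which the paper has effectively noted via the transformation law in Proposition \ref{Prop:TransOddQuasiCon}), so that a convex combination of connections is a connection. One could alternatively phrase the whole argument as: the sheaf of affine connections is a torsor over the sheaf of $(1,2)$-tensors, the latter is a fine sheaf (admits partitions of unity), hence has vanishing $H^1$, so the torsor has a global section — but the explicit partition-of-unity gluing is more self-contained and is what I would write out.
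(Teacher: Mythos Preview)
Your proposal is correct and is precisely the approach the paper has in mind: the paper's own proof merely says that partitions of unity exist on real smooth supermanifolds (citing \cite[Lemma 3.1.7]{Leites:1980}) and that one can then amend the classical existence argument, without spelling out the details. You have simply written out that amendment explicitly --- local trivial connections in coordinate charts, glued by an even partition of unity, with the Leibniz-rule verification --- so your argument is a faithful fleshing-out of the paper's sketch rather than a different route.
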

\begin{proof}
This is a well-established fact and so we will only highlight the main argument. As we are dealing with real smooth supermanifolds partitions of unity always exists (see for example \cite[Lemma 3.1.7]{Leites:1980}). One can then amend the classical proof of the existence of affine connections on smooth manifolds to the setting of smooth supermanifolds.
\end{proof}
\begin{lemma}\label{Lem:ExtInv}
Let $G$ be a (smooth) Lie supergroup of dimension $n|n$. Then the set of odd involutions of the $C^\infty(G)$-module of vector fields on $G$ is non-empty.
\end{lemma}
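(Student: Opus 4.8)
The plan is to exploit the fact that a Lie supergroup of dimension $n|n$ is parallelisable: the left-invariant vector fields furnish a global frame for $\Vect(G)$ as a free $C^\infty(G)$-module of rank $n|n$. Concretely, pick a homogeneous basis $\{e_1,\dots,e_n;\,\bar e_1,\dots,\bar e_n\}$ of the Lie superalgebra $\g = \Lie(G)$, with the $e_i$ even and the $\bar e_i$ odd, and extend them to left-invariant vector fields on $G$. These left-invariant fields form a global homogeneous basis of $\Vect(G)$, so every $X \in \Vect(G)$ can be written uniquely as $X = X^i e_i + \bar X^i \bar e_i$ with $X^i, \bar X^i \in C^\infty(G)$.

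The construction of $\rho$ then mimics Example \ref{Exp:CanOddCon}: first I would define $\rho$ on basis elements by declaring $\rho(e_i) := \bar e_i$ and $\rho(\bar e_i) := e_i$, and extend $C^\infty(G)$-linearly with the appropriate sign, i.e. $\rho(X) := (-1)^{\widetilde{X}}\big(X^i\,\bar e_i + (-1)^{?}\,\bar X^i\, e_i\big)$ — the precise sign to be fixed so that $\rho$ is genuinely $C^\infty(G)$-linear and odd; since $\rho$ is defined purely by its action on a free generating set, this is unambiguous. It is immediate that $\rho$ is odd (it swaps the even and odd parts of the frame) and that $\rho^2 = \Id_{\Vect(G)}$, because applying the swap twice returns each basis vector to itself; one only needs to bookkeep the parity signs carefully to confirm $\rho^2$ introduces no extra factor. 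Thus $\rho$ is an odd involution, so the set in question is non-empty.

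The only real content to verify is $C^\infty(G)$-linearity of $\rho$ together with consistency of the sign conventions, which is a short computation of the kind already carried out in Example \ref{Exp:CanOddCon} and in the transformation-rule proofs above; I would just point to that and leave the sign check to the reader. I do not expect a genuine obstacle here: the statement is essentially the observation that an $n|n$-dimensional Lie supergroup is $\Pi$-symmetric in Manin's sense, and the argument is "pick a frame and swap the even and odd halves." The one point worth a sentence of care is that the left-invariant frame really does generate $\Vect(G)$ freely over $C^\infty(G)$ — this is the standard parallelisability of Lie (super)groups — and that the resulting $\rho$ is well-defined independently of the choice of such a frame up to the obvious action, though well-definedness of a single $\rho$ (which is all the lemma asserts) needs no such independence.

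Remark for the write-up: this lemma and its proof are the $n=1$-frame analogue of what will presumably become Theorem \ref{Thm:OddConPara} on parallelisable supermanifolds, so I would phrase the proof so that it reads as the special case "take the left-invariant frame" of the more general "take any global frame with $n$ even and $n$ odd members" argument, and combine it with Lemma \ref{Lem:ExtCon} and Proposition \ref{Prop:CanGen} to conclude existence of an odd connection on $G$ (which is Theorem \ref{Thm:ExodConLieGrp}).
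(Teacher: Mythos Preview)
Your proposal is correct and follows essentially the same approach as the paper: both use the parallelisability of a Lie supergroup via left-invariant vector fields to identify $\Vect(G) \cong C^\infty(G)\otimes\mathfrak{g}$, and then observe that an $n|n$-dimensional super vector space admits an odd involution (swap the even and odd halves of a homogeneous basis), which extends $C^\infty(G)$-linearly. The paper's write-up is terser and does not fuss over the sign in the extension formula, so you can safely drop the $(-1)^{?}$ hedging and simply state that $\rho$ is defined on the free generators by $e_i\mapsto\bar e_i$, $\bar e_i\mapsto e_i$ and extended $C^\infty(G)$-linearly.
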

\begin{proof}
Lie supergroups admit a global frame for the module of vector fields (you get the same result as for Lie groups, which you state as the tangent bundle being trivial). In our case, for a Lie supergroup $G$,  $\Vect(G) = \cO_G(|G|) \otimes \mathfrak{g}$, where $\mathfrak{g}$ is the Lie superalgebra of the supergroup (see for example \cite[Proposition 2.9]{Boyer:1991}). The Lie superalgebra    is of dimension $n|n$, therefore it admits  an odd involution. Then, all Lie supergroups of dimension $n|n$ can be equipped with an odd involution, i.e., the module of vector fields is $\Pi$-\emph{symmetric} in the language of Manin \cite{Manin:1997} and others \cite{Bouarroudj:2012}.
\end{proof}
\begin{theorem}\label{Thm:ExodConLieGrp}
Let $G$ be a $n|n$-dimensional Lie supergroup. Then the set of odd connections on $G$ is non-empty.
\end{theorem}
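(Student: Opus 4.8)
The plan is to combine the two preceding lemmas with the construction of Proposition~\ref{Prop:CanGen}. By Lemma~\ref{Lem:ExtCon}, the supermanifold $G$ (being a smooth real supermanifold) admits at least one affine connection $\bar{\nabla}$. By Lemma~\ref{Lem:ExtInv}, since $G$ is a Lie supergroup of dimension $n|n$, the $C^\infty(G)$-module $\Vect(G)$ admits an odd involution $\rho: \Vect(G) \to \Vect(G)$ (concretely, one transports an odd involution of the Lie superalgebra $\g$ across the trivialisation $\Vect(G) \cong \cO_G(|G|) \otimes \g$ by $\cO_G(|G|)$-linear extension, and $\rho^2 = \Id$ because the fibrewise map squares to the identity). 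Then I would simply invoke Proposition~\ref{Prop:CanGen}: the pair $(\nabla, \rho)$ with $\nabla := \bar{\nabla} \circ (\rho, \Id_{\Vect(G)})$ is an odd connection on $G$. Hence the set of odd connections on $G$ is non-empty.

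I expect no serious obstacle here: this theorem is essentially a corollary assembled from Lemma~\ref{Lem:ExtCon}, Lemma~\ref{Lem:ExtInv}, and Proposition~\ref{Prop:CanGen}, all of which are already established in the excerpt. The only point that deserves an explicit sentence is that the $\cO_G(|G|)$-linear extension of an odd involution of $\g$ is genuinely an involution of the module $\Vect(G)$ (which follows because $\rho(f \otimes \xi) = (-1)^{\widetilde f}\, f \otimes \rho_\g(\xi)$ — or with whatever sign convention makes it $C^\infty$-linear — still satisfies $\rho^2 = \Id$ fibrewise), and that $\rho$ is odd as a consequence of $\rho_\g$ being odd on $\g$; both are immediate. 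I would keep the proof to two or three sentences, citing the relevant earlier results rather than redoing any computation, since Proposition~\ref{Prop:CanGen} already verifies all three defining properties of an odd (involutive) quasi-connection for exactly this construction.
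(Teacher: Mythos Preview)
Your proposal is correct and matches the paper's own proof essentially verbatim: the paper simply states that the theorem is a direct consequence of Lemma~\ref{Lem:ExtCon}, Lemma~\ref{Lem:ExtInv}, and Proposition~\ref{Prop:CanGen}. Your extra sentence explaining why the $\cO_G(|G|)$-linear extension of an odd involution of $\mathfrak{g}$ remains an odd involution on $\Vect(G)$ is a harmless (and arguably helpful) elaboration, but the core argument is identical.
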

\begin{proof}
A direct consequence of Lemma \ref{Lem:ExtCon}, Lemma \ref{Lem:ExtInv} and Proposition \ref{Prop:CanGen}.
\end{proof}

\begin{example}
The Lie supergroups $GL(m|m)$, $SL(m |m+1)$, $Osp(2m|2m)$, and $Q(m)$ are of dimension $2m^2|2m^2$,  $2m(m+1)|2m(m+1)$, $4m^2|4 m^2$ and $m^2|m^2$, respectively. Thus, they each admit odd connections.
\end{example}

The previous theorem generalises to supermanifolds that admit a global frame for their vector fields, but do not necessarily have the underlying structure of a Lie supergroup. Recall that any $X \in \Vect(M)$ defines for any point $p \in |M|$     an induced derivation of sections of the stalk at $p$ of the structure sheaf, denoted $X_p \in \textnormal{Der}\,\cO_p$. We define $X_p := \textnormal{ev}_p \circ \epsilon \circ X:  \cO_p \rightarrow \R$ which is a linear map that satisfied the Leibniz rule
$$X_p(st) = X_p(s) \, (\epsilon\,t)(p) + (-1)^{\widetilde{X} \, \widetilde{s}} \, (\epsilon\, s) \, X_p(t), $$
where $\epsilon : \cO_p \rightarrow C^\infty_p$ is the algebra morphism induced by the sheaf morphisms $\epsilon_{-} : \cO_M(-) \rightarrow C^\infty_{|M|} (-)$.  The map $\textnormal{ev}_p: C^\infty_{|M|} \rightarrow \R$ is the standard evaluation map. It is customary to define $\sT_p M := \textnormal{Der}\,\cO_p$ as the tangent space at $p$. This is, of course, a super vector space and  for every $p$ we have an isomorphism $\sT_p M \cong \vec{\R}^{n|m}$. Recall that   a \emph{parallelisation} of a supermanifold $M$ is a set $\{X_1, X_2, \cdots, X_n \,; \,Y_1, Y_2, \cdots , Y_m   \}$ of $n|m$ vector fields such that for every $p \in |M|$ the set of induced derivations  $\{(X_1)_p, (X_2)_p, \cdots, (X_n)_p \,; \,(Y_1)_p, (Y_2)_p, \cdots , (Y_m)_p   \}$ is a basis of the tangent space $\sT_p M \cong \vec{\R}^{n|m}$. A supermanifold is called \emph{parallelisable} if it admits a parallelisation. A choice of parallelisation establishes the isomorphism of $C^\infty(M)$-modules $\Vect(M) \stackrel{\sim}{\rightarrow} C^\infty(M)\otimes \vec{\R}^{n|m}$.
\begin{theorem}\label{Thm:OddConPara}
The set of odd connections on a $n|n$-dimensional parallelisable supermanifold $M$ is non-empty.
\end{theorem}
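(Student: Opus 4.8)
The plan is to reduce the statement to the two existence lemmas already established, exactly as was done for Lie supergroups in Theorem \ref{Thm:ExodConLieGrp}. By Proposition \ref{Prop:CanGen}, an odd connection on $M$ can be manufactured from two pieces of data: an affine connection $\bar\nabla$ on $M$ and an odd involution $\rho: \Vect(M) \rightarrow \Vect(M)$. Lemma \ref{Lem:ExtCon} already guarantees that an affine connection exists on any smooth supermanifold, so the only thing left to produce is the odd involution, and this is where parallelisability must be used.

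First I would fix a parallelisation $\{X_1, \dots, X_n \,;\, Y_1, \dots, Y_n\}$ of $M$ (note that the even and odd parts have equal cardinality $n$, since $M$ is $n|n$-dimensional and the $X_i$ are even while the $Y_j$ are odd). As recalled just before the statement, this gives an isomorphism of $C^\infty(M)$-modules $\Vect(M) \xrightarrow{\sim} C^\infty(M) \otimes \vec{\R}^{n|n}$, under which the $X_i$ and $Y_j$ become a homogeneous free generating set. I would then define $\rho$ on generators by $\rho(X_i) := Y_i$ and $\rho(Y_i) := X_i$ for each $i = 1, \dots, n$, and extend by $C^\infty(M)$-linearity with the appropriate sign rule for an odd map, i.e. $\rho(f\,Z) = (-1)^{\widetilde{f}} f\, \rho(Z)$. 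Since $\{X_i ; Y_j\}$ is a free generating set, this prescription is well-defined; $\rho$ is manifestly odd (it swaps an even generator with an odd one) and $\rho^2 = \Id$ on generators, hence $\rho^2 = \Id$ on all of $\Vect(M)$. Thus $\rho$ is an odd involution, i.e. $M$ is $\Pi$-symmetric.

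With $\bar\nabla$ from Lemma \ref{Lem:ExtCon} and $\rho$ as just constructed, Proposition \ref{Prop:CanGen} produces the odd connection $(\nabla, \rho)$ with $\nabla := \bar\nabla \circ (\rho, \Id_{\Vect(M)})$, so the set of odd connections on $M$ is non-empty. I do not anticipate a serious obstacle here: the argument is essentially the same as for Theorem \ref{Thm:ExodConLieGrp}, with the global frame coming from parallelisability rather than from the Lie supergroup structure (indeed Lemma \ref{Lem:ExtInv} is just the special case where the frame is the left-invariant one). The only point that requires a moment's care is checking that the $\rho$ defined on the frame really does extend to a well-defined $C^\infty(M)$-module endomorphism with the correct parity sign — but this is immediate from freeness of the generating set, and one should simply remark that the construction of $\rho$ depends on the chosen parallelisation (different parallelisations generally give different, though all compatible in pairs only when the Clifford--Dirac relation holds, odd involutions).
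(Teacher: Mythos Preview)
Your proof is correct and follows essentially the same route as the paper: construct the odd involution from a chosen parallelisation by swapping $X_i \leftrightarrow Y_i$, then invoke Lemma \ref{Lem:ExtCon} and Proposition \ref{Prop:CanGen} to produce the odd connection. The paper's version is terser, but your additional remarks on well-definedness via freeness and on the dependence of $\rho$ on the choice of frame are accurate elaborations rather than departures.
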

\begin{proof}
By definition $n|n$-dimensional parallelisable supermanifolds admit a global frame consisting of $n|n$ vector fields and so the set of odd involutions is non-empty. For example, if we choose some parallelisation $\{ X_i \,; Y_j\}$, where $i,j = 1,2, \cdots n$, then we can define a canonical odd involution associated with this choice, i.e., $\rho(X_i) = Y_i$ and $\rho(Y_j) = X_j$ is an odd involution.  The existence of odd connections  then  follows from Lemma \ref{Lem:ExtCon} and Proposition \ref{Prop:CanGen}.
\end{proof}
\begin{remark}
On a $n|n$-dimensional parallelisable supermanifold we can construct an odd Riemannian metric by setting $g(X_i, X_j) = g(Y_i, Y_j) =0$ and $g(X_i, Y_j) = \delta_{ij}$. This suggests that odd Riemannian metrics are not as ``unnatural'' as one might at first think. Moreover, we will use this metric in  Subsection \ref{SubSec:OddWeitCon}.
\end{remark}

\subsection{The Odd Connection on Super-Minkowski Spacetime}
We will restrict attention to $d =4$ and $\mathcal N =1$ super-Minkowski spacetime, which we will denote as $\textnormal{SMink}^{4|4}$. We will comment on other dimensions and extended supersymmetries at the end of this subsection.  As a supermanifold  $\textnormal{SMink}^{4|4} = \R^{4|4}$ and comes equipped with global coordinates $(x^\mu, \theta_\alpha)$, where $x^\mu$ transforms under the Lorentz group as a vector and $\theta_\alpha$ transforms as a Majorana spinor (we will follow the conventions of \cite[Section 2.3]{Bruce:2019b}). We will work in the manifestly real setting and so the Lorentzian metric is $\textnormal{diag}(-1, +1,+1, +1)$. This allows us to use the real Majorana of the Clifford algebra $\mathcal{C}l(3,1)$.  \par
The SUSY structure on $\textnormal{SMink}^{4|4}$ is the maximally non-integrable distribution spanned by the SUSY covariant derivatives
$$D^\alpha = \frac{\partial}{\partial \theta_\alpha}- \frac{1}{4}\theta_\beta(C\gamma^\mu)^{\beta \alpha}\frac{\partial}{\partial x^\mu}.$$
We chose the distribution spanned by $P_\mu = \frac{\partial }{\partial x^\mu}$ as the complementary distribution.   It is easy to see that these satisfy the super-translation algebra
\begin{align}\label{Eqn:SUSYALg}
&[D^\alpha, D^\beta] = - \frac{1}{2} (C\gamma^\mu)^{ \alpha \beta}P_\mu,
& [P_\mu, D_\alpha] =0, && [P_\mu, P_\nu] =0.
\end{align}
Any vector field on $\textnormal{SMink}^{4|4}$ decomposes as $X = X^\mu(x, \theta)P_\mu + X_\alpha(x, \theta)D^\alpha$. We then define an odd involution
$\rho :\Vect(\textnormal{SMink}^{4|4}) \rightarrow \Vect(\textnormal{SMink}^{4|4}) $ as (assuming $X$ is homogeneous)
$$\rho(X) = (-1)^{\widetilde{X}}\left( X^\mu \delta_{\mu \alpha}D^\alpha - X_\alpha \delta^{\alpha \mu}P_\mu \right).$$
\begin{definition}\label{Def:SUSYOddCon}
The \emph{SUSY odd connection} on $\textnormal{SMink}^{4|4}$ is defined as
$$\nabla_X Y := (-1)^{\widetilde{X}}\left( X^\mu \delta_{\mu \alpha}D^\alpha  Y^\nu - X_\alpha \delta^{\alpha \mu}P_\mu Y^\nu \right)P_\nu + (-1)^{\widetilde{X}}\left( X^\mu \delta_{\mu \alpha}D^\alpha Y_\beta - X_\alpha \delta^{\alpha \mu}P_\mu Y_\beta \right)D^\beta,$$
for all homogeneous $X \in \Vect(\textnormal{SMink}^{4|4})$ and all  $Y \in \Vect(\textnormal{SMink}^{4|4})$.
\end{definition}
\begin{remark}
The SUSY odd connection is similar to but not identical to the canonical odd connection on $\R^{4|4}$ as given in Example \ref{Exp:CanOddCon}. The reader should also compare this with Example \ref{Exp:1dSUSY}, which is, of course,  the corresponding $1|1$-dimensional case.
\end{remark}
\begin{proposition}\label{Prop:SMinkFlat}
The SUSY odd connection on  $\textnormal{SMink}^{4|4}$ is flat (see Definition \ref{Def:Flat}).
\end{proposition}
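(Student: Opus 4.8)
The plan is to exploit the single structural feature that makes this connection so rigid: with respect to the global frame $\{P_\mu, D^\alpha\}$ the SUSY odd connection annihilates every frame element. Indeed, reading Definition \ref{Def:SUSYOddCon} with $Y = P_\nu$ (so $Y^\lambda = \delta^\lambda_\nu$ and $Y_\beta = 0$) and with $Y = D^\beta$ (so $Y^\lambda = 0$ and $Y_\gamma = \delta^\beta_\gamma$), all the derivatives $D^\alpha Y^\nu$, $P_\mu Y^\nu$, $D^\alpha Y_\beta$, $P_\mu Y_\beta$ appearing there vanish, hence $\nabla_W P_\nu = 0$ and $\nabla_W D^\beta = 0$ for \emph{every} $W \in \Vect(\textnormal{SMink}^{4|4})$. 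Equivalently, by Proposition \ref{Prop:ConGenII} the pair $(\nabla,\rho)$ is canonically generated by $\bar{\nabla} := \nabla\circ(\rho,\Id)$, and the observation above identifies $\bar{\nabla}$ with the Weitzenb\"{o}ck connection of that frame, i.e. the affine connection for which $P_\mu$ and $D^\alpha$ are all parallel; the curvature of such a connection is a tensor that vanishes on the frame, hence $\bar{\nabla}$ is flat. One can then either quote the identity $R(X,Y)Z = \bar{R}(\rho(X),\rho(Y))Z$ (which follows from $\nabla_X = \bar{\nabla}_{\rho(X)}$ and $\rho^2 = \Id$) or argue directly as below.

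Concretely, the computation I would carry out is: first note that $\rho$ is an involution, since by Definition \ref{Def:SUSYOddCon}'s preceding formula $\rho(P_\mu) = \delta_{\mu\alpha}D^\alpha$ and $\rho(D^\alpha) = \delta^{\alpha\mu}P_\mu$, so $\rho^2 = \Id$ because $\delta_{\mu\alpha}\delta^{\alpha\nu} = \delta^\nu_\mu$; thus $(\nabla,\rho)$ is a genuine odd connection and $R$ is a tensor (Theorem \ref{Thm:TorCurTensors}). Next, writing an arbitrary $Z \in \Vect(\textnormal{SMink}^{4|4})$ in the frame as $Z = Z^\nu P_\nu + Z_\beta D^\beta =: Z^A e_A$, property (3) of Definition \ref{Def:OddQuaCon} together with $\nabla_X e_A = 0$ gives $\nabla_X Z = \rho(X)(Z^A)\, e_A$, and iterating, $\nabla_X \nabla_Y Z = \rho(X)\big(\rho(Y)(Z^A)\big)\, e_A$. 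Therefore the $\Z_2$-graded commutator is $[\nabla_X,\nabla_Y]Z = [\rho(X),\rho(Y)](Z^A)\, e_A$, the bracket on the right being the Lie bracket of the vector fields $\rho(X),\rho(Y)$ acting as derivations on the component functions. On the other hand, applying the same formula to the vector field $W := \rho[\rho(X),\rho(Y)]$ and using $\rho^2 = \Id$ yields $\nabla_{\rho[\rho(X),\rho(Y)]}Z = \rho\big(\rho[\rho(X),\rho(Y)]\big)(Z^A)\, e_A = [\rho(X),\rho(Y)](Z^A)\, e_A$. Subtracting, $R(X,Y)Z = 0$ for all homogeneous $X,Y$ and all $Z$, which is the claim.

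I do not expect a genuine obstacle here; the only thing that needs care is the Grassmann sign bookkeeping, and the point that makes the cancellation automatic is that $\nabla_X$ and $\rho(X)$ both carry parity $\widetilde{X}+1$, so the sign $(-1)^{(\widetilde{X}+1)(\widetilde{Y}+1)}$ governing $[\nabla_X,\nabla_Y]$ is exactly the one governing $[\rho(X),\rho(Y)]$, while the factor $(-1)^{(\widetilde{X}+1)\widetilde{f}}$ in Definition \ref{Def:OddQuaCon}(3) never gets a chance to appear once $\nabla_X e_A = 0$ kills the second Leibniz term. It is worth recording that the argument uses nothing about the explicit structure constants of the super-translation algebra \eqref{Eqn:SUSYALg}: flatness holds purely because $\{P_\mu, D^\alpha\}$ is a global frame that $\rho$ permutes, which is precisely the mechanism behind the odd Weitzenb\"{o}ck connection on a parallelisable $n|n$-dimensional supermanifold discussed later; the specific brackets in \eqref{Eqn:SUSYALg} will instead be what is needed for the torsion computation in Proposition \ref{Prop:SMinkTorsion}.
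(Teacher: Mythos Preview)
Your proof is correct and follows essentially the same route as the paper's: both arguments write $Z = Z^A e_A$ in the global frame $\{P_\mu, D^\alpha\}$, use that $\nabla_X Z = \rho(X)(Z^A)\,e_A$ (equivalently $\nabla_X e_A = 0$), and then observe that $[\nabla_X,\nabla_Y]Z$ and $\nabla_{\rho[\rho(X),\rho(Y)]}Z$ both equal $[\rho(X),\rho(Y)](Z^A)\,e_A$ by virtue of $\rho^2=\Id$. Your additional framing via the associated affine connection $\bar{\nabla}$ and the link to the odd Weitzenb\"{o}ck picture is a nice piece of context, but it is supplementary rather than a different method.
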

\begin{proof}
Let us, for brevity, set $e_a = (P_\mu , D^\alpha)$ and so an arbitrary vector field we write as $Z = Z^a e_a$. First observe, directly from the definition of the SUSY odd connection (Definition \ref{Def:SUSYOddCon}) and the fact that $\rho$ is an involution, that
$$\nabla_{\rho([\rho(X) , \rho(Y)])}Z = \big(\rho(X)(\rho(Y)Z^a)\big)e_a - (-1)^{(\widetilde{X}+1)(\widetilde{Y}+1) } \big(\rho(Y)(\rho(X)Z^a)\big)e_a.$$
Then from the definition of the  curvature (Definition \ref{Def:RieCurv}) we see that
\begin{align*}
R(X,Y)Z &= \big(\rho(X)(\rho(Y)Z^a)\big)e_a - (-1)^{(\widetilde{X}+1)(\widetilde{Y}+1) } \big(\rho(Y)(\rho(X)Z^a)\big)e_a\\
& -\nabla_{\rho([\rho(X) , \rho(Y)])}Z = \big(\rho(X)(\rho(Y)Z^a)\big)e_a,
\end{align*}
and so $R(X,Y)Z =0$ for arbitrary vector fields $X, Y$ and $Z \in \Vect(\textnormal{SMink}^{4|4})$. Thus, the SUSY odd connection is flat.
\end{proof}
\begin{proposition}\label{Prop:SMinkTorsion}
The SUSY odd connection on  $\textnormal{SMink}^{4|4}$ has non-vanishing torsion (see Definition \ref{Def:Torsion}).
\end{proposition}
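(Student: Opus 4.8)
The plan is to exhibit a single pair of (in fact, homogeneous basis) vector fields on which the torsion $T$ is non-zero; since $T$ is a tensor (Theorem~\ref{Thm:TorCurTensors}, as $\rho$ is an involution here), it suffices to evaluate it on the global frame $\{P_\mu, D^\alpha\}$. Recall from Definition~\ref{Def:Torsion} that
$$T(X,Y) = \nabla_X Y + (-1)^{\widetilde X\widetilde Y}\,\nabla_Y X + (-1)^{\widetilde X}\,\rho\big([\rho(X),\rho(Y)]\big).$$
First I would take $X = Y = P_\mu$, both even. Writing $P_\mu = Z^a e_a$ with constant components, Definition~\ref{Def:SUSYOddCon} gives $\nabla_{P_\mu} P_\nu = 0$ (all derivatives of constant components vanish), so the first two terms of $T(P_\mu, P_\nu)$ drop out. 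The only surviving contribution is $\rho\big([\rho(P_\mu),\rho(P_\nu)]\big)$. Since $\rho(P_\mu) = \delta_{\mu\alpha}D^\alpha$, this equals $\delta_{\mu\alpha}\delta_{\nu\beta}\,\rho\big([D^\alpha, D^\beta]\big)$, and by the super-translation algebra \eqref{Eqn:SUSYALg} we have $[D^\alpha, D^\beta] = -\tfrac12 (C\gamma^\lambda)^{\alpha\beta} P_\lambda$, whence
$$T(P_\mu, P_\nu) = -\tfrac12\,\delta_{\mu\alpha}\delta_{\nu\beta}(C\gamma^\lambda)^{\alpha\beta}\,\rho(P_\lambda) = -\tfrac12\,\delta_{\mu\alpha}\delta_{\nu\beta}(C\gamma^\lambda)^{\alpha\beta}\delta_{\lambda\gamma}D^\gamma.$$

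The remaining step is to check this is not identically zero. This reduces to the statement that the matrices $C\gamma^\lambda$ (equivalently $\gamma^\lambda$) are not all zero, which is immediate: the gamma matrices of $\mathcal{C}l(3,1)$ are invertible, hence non-zero, so there exist indices $\mu,\nu,\lambda$ with $(C\gamma^\lambda)^{\mu\nu}\neq 0$ (after identifying spinor and vector indices via the $\delta$'s), and thus $T(P_\mu, P_\nu)\neq 0$. One can also phrase it index-free: $T(P_\mu, P_\nu)$ picks out (up to the identification $\rho$) the super-bracket $[D^\alpha, D^\beta]$, which is non-trivial precisely because the SUSY distribution is maximally non-integrable — indeed this non-integrability is exactly what prevents the torsion from vanishing.

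I do not anticipate a genuine obstacle here; the computation is short and the only thing to be careful about is sign bookkeeping in Definition~\ref{Def:Torsion}, but since $X=Y=P_\mu$ are even and the $\nabla$-terms vanish, the signs are trivial ($(-1)^{\widetilde X} = (-1)^{\widetilde X\widetilde Y} = 1$). The one modelling point worth a sentence in the writeup is that we are implicitly using the identification of the vector index $\mu$ and the spinor index $\alpha$ furnished by the (fixed, invertible) matrix $\delta_{\mu\alpha}$ built into the definition of $\rho$ and $\nabla$; this is what lets $\rho$ be a bijection between the $P$- and $D$-spans and is legitimate because both are $4$-dimensional. Hence the SUSY odd connection on $\textnormal{SMink}^{4|4}$ has non-vanishing torsion, concentrated entirely in the ``$PP$-block'' $T(P_\mu,P_\nu) \propto (C\gamma^\lambda)_{\mu\nu}D_\lambda$, which is the odd-connection shadow of the super-translation relation $[D,D]\sim P$.
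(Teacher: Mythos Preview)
Your argument is correct and follows essentially the same approach as the paper: evaluate the torsion tensor on the global frame $\{P_\mu, D^\alpha\}$, use that $\nabla$ annihilates constant-component vector fields, and read off the non-zero $T(P_\mu,P_\nu)$ from the super-translation algebra. The paper additionally records $T(P_\mu,D^\alpha)$ and $T(D^\alpha,D^\beta)$ for completeness, but your single computation already establishes the claim.
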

\begin{proof}
As the torsion is a tensor it is sufficient to check its action on  pairs of $P_\mu$ and $D^\alpha$. Direct calculation using the super-translation algebra \eqref{Eqn:SUSYALg} gives
\begin{align*}
& T(P_\mu , P_\nu) =  - \frac{1}{2} (C\gamma^\delta)^{\alpha \beta} \delta_{\beta \nu} \delta_{ \alpha \mu} \delta_{\delta \gamma} D^\gamma,\\
& T(P_\mu , D^\alpha) = - \frac{1}{4} \delta_{\mu \beta} (C\gamma^\nu)^{\beta \alpha}P_\nu \\
& T(D^\alpha, D^\beta) =0.
\end{align*}
Clearly, not all of these vanish and we conclude that the torsion is non-zero.
\end{proof}
\begin{remark}
We observe that, and this is not at all surprising, that the non-zero components of the torsion are essentially $ -\half(C \gamma^\mu)^{\alpha \beta}$, which is just the non-vanishing structure constant of the super-translation algebra.
\end{remark}
\begin{definition}
The \emph{odd divergence operator} on  $\textnormal{SMink}^{4|4}$ is defined as
$$\textnormal{Div}\, X = D^\alpha(X^\mu) \delta_{\mu \alpha} - (-1)^{\widetilde{X}} \,\delta^{\alpha \mu}P_\mu(X_\alpha)\,,$$
for any (homogeneous) vector field $X =  X^\mu P_\mu + X_\alpha D^\alpha$.
\end{definition}

It is clear that, as one requires an equal number of even and odd coordinates, that the constructions given above do not generalise directly to $d = 4$, $\mathcal{N} \geq 2$ extended super-Minkowski space-times. Requiring an equal number of even and odd coordinates places restrictions on the dimension of the underlying Minkowski space-time. For example, $d=1$, $\mathcal{N}=1$ super-Minkowski space-time has dimension $1|1$ and so, as we have seen (Example \ref{Exp:1dSUSY} and Example \ref{Exp:OddDiv1dSUSY}), the construction of the  odd SUSY connection generalises to this case. Importantly, the dimension of the real irreducible spin representation in one dimension is one, see Freed \cite[page 48]{Freed:1999} for details of the dimensions of real spin representations.   The case for $d=2$ is slightly more complicated as we have two one-dimensional irreducible spin representations, and  so choosing $\mathcal{N} = (1,1), (2,0)$ and $(0,2)$ will allow the construction of an odd SUSY connection.  In comparison, $d = 3$ has a two-dimensional real irreducible spin representation. Thus, the $d=3$ case does not permit the direct construction of an odd SUSY connection.  Assuming that $d \leq 11$, we have exhausted the list of possible dimensions and number of supersymmetries that one can directly construct an odd connection. Simply put: in other dimensions, it is impossible to have an equal number of  even coordinates and odd spinor coordinates.

\subsection{Odd Weitzenb\"{o}ck Connections}\label{SubSec:OddWeitCon}
The SUSY odd connection (see Definition \ref{Def:SUSYOddCon}) is built from just an odd involution on the module of vector fields. The same is true of the canonical odd connection on $\R^{n|n}$ (see Example \ref{Exp:CanOddCon}). Moreover, we see that the curvature of these connections is zero (see Proposition \ref{Prop:SMinkFlat}), while the torsion is not zero (see Proposition \ref{Prop:SMinkTorsion}). This is very reminiscent of  the notion of a  Weitzenb\"{o}ck connection, as used in teleparallel gravity and related theories where gravity is ``all torsion and no curvature".  We also remark that  Weitzenb\"{o}ck connections make an appearance in Double Field Theory   (see for example \cite{Penas:2019} and references therein). These considerations lead to the following notion.
\begin{definition}\label{Def:OddWeitzCon}
Let $M$ be a $n|n$-dimensional parallelisable supermanifold and let $\{Z_\alpha \}$ be a chosen  parallelisation. Furthermore, let $\rho : \Vect(M) \rightarrow \Vect(M)$ be an odd involution. The \emph{odd Weitzenb\"{o}ck connection} on $M$ generated by $\rho$ and $\{Z_\alpha \}$ is the odd connection defined as
$$\nabla_X(Y^\alpha Z_\alpha):= \rho(X)(Y^\alpha)\, Z_\alpha.$$
\end{definition}
\begin{proposition}\label{Prop:WeitzIndPara}
The odd  Weitzenb\"{o}ck connection on a $n|n$-dimensional parallelisable supermanifold $M$ generated by $\rho$ is independent of the chosen  parallelisation.
\end{proposition}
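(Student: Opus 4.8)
The plan is to compare the odd Weitzenb\"ock connections attached to two parallelisations of $M$ and to show they coincide. First I would fix parallelisations $\{Z_\alpha\}$ and $\{Z'_\alpha\}$ of $M$, labelling indices so that $Z_\alpha$ and $Z'_\alpha$ carry the same parity $\widetilde{\alpha}$ for every $\alpha$ (possible since each parallelisation consists of $n$ even and $n$ odd vector fields), and record the change of frame: there is an invertible supermatrix $A=(A_\alpha^{\;\beta})$ of functions, with $\widetilde{A_\alpha^{\;\beta}}=\widetilde{\alpha}+\widetilde{\beta}$, such that $Z_\alpha=A_\alpha^{\;\beta}Z'_\beta$. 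A homogeneous $W\in\Vect(M)$ is then written in both frames as $W=Y^\alpha Z_\alpha={Y'}^{\beta}Z'_\beta$ with ${Y'}^{\beta}=Y^\alpha A_\alpha^{\;\beta}$.

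Next I would unwind Definition \ref{Def:OddWeitzCon} in both frames. Writing $\nabla$ and $\nabla'$ for the two odd Weitzenb\"ock connections generated by the single fixed odd involution $\rho$, I would expand $\nabla'_X W=\rho(X)({Y'}^{\beta})Z'_\beta$ using the graded Leibniz rule for the vector field $\rho(X)$, which has parity $\widetilde{X}+1$, applied to the product $Y^\alpha A_\alpha^{\;\beta}$. The summand in which $\rho(X)$ differentiates $Y^\alpha$ reassembles $A_\alpha^{\;\beta}Z'_\beta=Z_\alpha$ and so reproduces $\nabla_X W=\rho(X)(Y^\alpha)Z_\alpha$, leaving a single correction term
$$\nabla'_X W-\nabla_X W=(-1)^{(\widetilde{X}+1)(\widetilde{W}+\widetilde{\alpha})}\,Y^\alpha\,\rho(X)\!\big(A_\alpha^{\;\beta}\big)\,Z'_\beta.$$
By Proposition \ref{Prop:DifferBanal} this difference is automatically an odd banal quasi-connection, i.e.\ an odd $(1,2)$-tensor; hence it is enough to show it is the zero map, and evaluating on $W=Z_\gamma$ reduces the whole statement to the claim that $\rho(X)\!\big(A_\gamma^{\;\beta}\big)=0$ for all $X$, $\gamma$, $\beta$.

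The step I expect to be the main obstacle is exactly this vanishing. Since $\rho$ is an involution it is surjective on $\Vect(M)$, so $\rho(X)$ runs over all vector fields as $X$ does, and the condition says that the transition functions $A_\gamma^{\;\beta}$ are annihilated by every element of $\Vect(M)$, i.e.\ are locally constant. The real content of the proposition therefore lies in pinning down which parallelisations are being compared — namely those compatible with the fixed $\rho$, in the sense that $\rho$ carries each admissible frame to itself by the constant interchange of its even and odd blocks (as for the coordinate frame on $\R^{n|n}$ in Example \ref{Exp:CanOddCon} and for the frame $\{P_\mu,D^\alpha\}$ on super-Minkowski spacetime) — and the crux of the proof is to verify that the transition supermatrix between any two such admissible frames has locally constant entries, whence the correction term vanishes and $\nabla'=\nabla$. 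Once this is established, the flatness (cf.\ Proposition \ref{Prop:SMinkFlat}) and the explicit form of the torsion (cf.\ Proposition \ref{Prop:SMinkTorsion}) of the odd Weitzenb\"ock connection are automatically independent of the chosen parallelisation.
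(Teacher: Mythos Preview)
Your computational reduction coincides with the paper's: write the difference $\nabla'_X W-\nabla_X W$ in terms of the transition matrix $A$, obtain (up to sign) the single correction term $Y^\alpha\,\rho(X)(A_\alpha^{\;\beta})\,Z'_\beta$, and observe that the two connections agree precisely when every $A_\alpha^{\;\beta}$ is annihilated by all vector fields, i.e.\ is locally constant. Where you diverge from the paper is in how this constancy is secured.

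The paper does not invoke any $\rho$-compatibility condition on frames. It argues that a choice of parallelisation fixes an isomorphism $\Vect(M)\cong C^\infty(M)\otimes\vec{\R}^{\,n|n}$ and that a \emph{change} of parallelisation is therefore a grading-preserving automorphism of the super vector space $\vec{\R}^{\,n|n}$; hence $A$ is taken to be an invertible matrix with \emph{real} (constant) entries, and $\rho(X)(A_\alpha^{\;\beta})=0$ is immediate. In effect the paper is comparing only frames related by a constant $\mathrm{GL}(n|n,\R)$ transformation, and the whole proof is the one-line observation that $\rho(X)$ kills constants.

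Your instinct that this step is the crux is well placed: if ``parallelisation'' is read in the ordinary sense of an arbitrary global frame, the entries of $A$ lie in $C^\infty(M)$ and need not be constant --- for instance on $\R^{1|1}$ with the canonical $\rho$ of Example~\ref{Exp:CanOddCon}, the coordinate frame $\{\partial_t,\partial_\theta\}$ and the frame $\{P,D\}$ of Example~\ref{Exp:1dSUSY} yield distinct odd Weitzenb\"ock connections (one sends $(\partial_t,\partial_\theta)\mapsto 0$, the other to $\partial_t$). However, your proposed resolution --- pass to ``$\rho$-admissible'' frames and prove that any two such frames are related by a locally constant supermatrix --- is neither the paper's route nor actually carried out in your proposal. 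So the approach is not wrong, but it is incomplete exactly at the point you yourself flag: to match the paper you should take $A$ to have constant real entries from the outset, and if you wish to go beyond the paper you would have to genuinely supply the $\rho$-compatibility argument you only sketch.
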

\begin{proof}
As a choice of parallelisation corresponds to an isomorphism $\Vect(M)\stackrel{\sim}{ \rightarrow} C^\infty(M) \otimes \vec{\R}^{n|n}$, changes of the parallelisation  correspond to grading preserving automorphisms of the super vector space  $\vec{\R}^{n|n}$. Thus, if we have two parallelisations $\{Z_\alpha\}$ and  $\{Z_{\beta'}\}$, then there is an invertible matrix $A$ with real entries, such that   $Z_{\beta'} = A_{\beta'}^{\,\,\, \alpha} Z_\alpha$. This, in turn, implies that the components of the vector fields transform via the inverse matrix, i.e.,  $Y^{\beta'} = Y^\alpha A_{\alpha}^{\,\, \beta'}$. Then we observe that
\begin{align*}
\nabla_X Y & = \rho(X)(Y^{\beta'}Z_{\beta'}) = \rho(X)(Y^{\beta'})Z_{\beta'}\\
&= \rho(X)( Y^\alpha A_{\alpha}^{\,\, \beta'})A_{\beta'}^{\,\,\, \gamma} Z_{\gamma} = \rho(X)( Y^\alpha) A_{\alpha}^{\,\, \beta'}A_{\beta'}^{\,\,\, \gamma} Z_{\gamma}\\
& = \rho(X)(Y^{\alpha}Z_{\alpha}),
\end{align*}
where we have used the fact that the components of $A^{-1}$ are constants.
\end{proof}
The above proposition tells us that an odd  Weitzenb\"{o}ck connection is completely defined by a choice of odd involution and is does not depend on the choice  parallelisation.  For the remaining part of this subsection, we will for brevity denote a supermanifold equipped with an odd Weitzenb\"{o}ck connection as a pair $(M, \nabla)$.
\begin{proposition}\label{Prop:ExistWeitz}
 Let $M$ be a $n|n$-dimensional parallelisable supermanifold $M$. Then the set of odd  Weitzenb\"{o}ck connections on $M$ is non-empty.
\end{proposition}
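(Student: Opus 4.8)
The plan is to reduce this statement to results already established in the excerpt, since essentially all the machinery is in place. The key observation is that Proposition \ref{Prop:WeitzIndPara} shows an odd Weitzenb\"ock connection depends only on the choice of odd involution $\rho$ (not on the parallelisation used to write it down), and Definition \ref{Def:OddWeitzCon} produces such a connection from any pair $(\rho, \{Z_\alpha\})$ on an $n|n$-dimensional parallelisable supermanifold. So the whole content of Proposition \ref{Prop:ExistWeitz} collapses to: (i) such an $M$ admits an odd involution, and (ii) the prescription of Definition \ref{Def:OddWeitzCon} genuinely yields an odd connection. Both of these are already available.

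First I would recall that, by hypothesis, $M$ is $n|n$-dimensional and parallelisable, so it admits a global frame $\{X_1,\dots,X_n\,;\,Y_1,\dots,Y_n\}$ consisting of $n$ even and $n$ odd vector fields. Exactly as in the proof of Theorem \ref{Thm:OddConPara}, this frame determines a canonical odd involution $\rho$ by $\rho(X_i) = Y_i$, $\rho(Y_j) = X_j$, extended $C^\infty(M)$-linearly; one checks $\rho^2 = \Id_{\Vect(M)}$ on the frame and hence everywhere by linearity, and $\rho$ is odd since it swaps the even and odd parts of the frame. Thus the set of odd involutions on $\Vect(M)$ is non-empty. Next, fixing this $\rho$ and the parallelisation $\{Z_\alpha\} := \{X_i\,;\,Y_j\}$, Definition \ref{Def:OddWeitzCon} defines $\nabla_X(Y^\alpha Z_\alpha) := \rho(X)(Y^\alpha)\,Z_\alpha$; one verifies the three axioms of Definition \ref{Def:OddQuaCon} directly — parity is $\widetilde X + \widetilde Y + 1$ because $\rho$ is odd, $C^\infty(M)$-linearity in the first slot and the graded Leibniz rule in the second slot follow from the corresponding properties of $\rho(X)$ acting as a derivation on the component functions $Y^\alpha$ — together with $\rho^2 = \Id$, so that $(\nabla,\rho)$ is indeed an odd connection. (Alternatively, one can simply invoke Proposition \ref{Prop:CanGen} applied to $\rho$ and the affine connection $\bar\nabla$ that is the flat connection associated with the frame $\{Z_\alpha\}$, for which $\bar\nabla\circ(\rho,\Id)$ recovers precisely the formula in Definition \ref{Def:OddWeitzCon}.) Hence the constructed pair lies in the set of odd Weitzenb\"ock connections on $M$, which is therefore non-empty.

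I do not expect any genuine obstacle here: the statement is essentially a corollary of Definition \ref{Def:OddWeitzCon} and Proposition \ref{Prop:WeitzIndPara}, with the only real input being that an $n|n$-dimensional parallelisable supermanifold carries an odd involution, which was already exploited in the proof of Theorem \ref{Thm:OddConPara}. If anything, the mildest care needed is to note that the existence of a global $n|n$-frame (equivalently, the isomorphism $\Vect(M)\xrightarrow{\sim} C^\infty(M)\otimes\vec{\R}^{n|n}$) really does supply both an even part and an odd part of the right ranks, so that the swap map $\rho$ makes sense; this is immediate from the definition of parallelisability recalled just before Theorem \ref{Thm:OddConPara}. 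So the proof is short: exhibit $\rho$, apply Definition \ref{Def:OddWeitzCon} (or Proposition \ref{Prop:CanGen}), and conclude.
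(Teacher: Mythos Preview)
Your proposal is correct and follows essentially the same approach as the paper: the paper's proof simply notes that the set of odd involutions on an $n|n$-dimensional parallelisable supermanifold is non-empty (as established in the proof of Theorem \ref{Thm:OddConPara}) and that, by Proposition \ref{Prop:WeitzIndPara}, this is all that is needed to define an odd Weitzenb\"{o}ck connection. Your version is considerably more detailed---explicitly constructing $\rho$ from the frame and verifying the axioms of Definition \ref{Def:OddQuaCon}---but the underlying argument is the same.
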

\begin{proof}
This is clear as the set of odd involutions on a $n|n$-dimensional parallelisable supermanifold is non-empty and Proposition \ref{Prop:WeitzIndPara} tells us that this all that is needed to define an odd  Weitzenb\"{o}ck connection.\\
\end{proof}
In general, the torsion on an odd   Weitzenb\"{o}ck connection will be non-zero. However, just as in the classical setting, the curvature  is zero.
\begin{proposition}
An odd   Weitzenb\"{o}ck connection on a $n|n$-dimensional parallelisable supermanifold $M$ is flat (see Definition \ref{Def:Flat}).
\end{proposition}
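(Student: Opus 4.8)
The plan is to compute the curvature directly, exploiting the fact that on a vector field presented in the chosen parallelisation the odd Weitzenb\"ock connection acts merely by applying $\rho(X)$ to the component functions. First I would write an arbitrary $Z \in \Vect(M)$ as $Z = W^\alpha Z_\alpha$ with $W^\alpha \in C^\infty(M)$; then by Definition \ref{Def:OddWeitzCon} one has $\nabla_X Z = \rho(X)(W^\alpha)\,Z_\alpha$, so $\nabla_X Z$ is again presented in the frame $\{Z_\alpha\}$, now with component functions $\rho(X)(W^\alpha)$. In other words, under the isomorphism $\Vect(M)\cong C^\infty(M)\otimes\vec\R^{n|n}$ afforded by the parallelisation, $\nabla_X$ is simply the operator that applies $\rho(X)$ to each component.

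Iterating this observation gives $\nabla_X \nabla_Y Z = \rho(X)\big(\rho(Y)(W^\alpha)\big)\,Z_\alpha$, i.e. $\nabla_X\nabla_Y$ acts on frame components as the composite $\rho(X)\circ\rho(Y)$ of vector fields regarded as operators on $C^\infty(M)$. Since $\rho$ is odd, $\nabla_X$ carries parity $\widetilde X + 1 = \widetilde{\rho(X)}$, so the $\Z_2$-graded commutator $[\nabla_X,\nabla_Y]$ acts on components exactly as the vector-field commutator $[\rho(X),\rho(Y)]$, yielding
$$[\nabla_X, \nabla_Y]Z \;=\; \big([\rho(X), \rho(Y)](W^\alpha)\big)\,Z_\alpha.$$
For the correction term, set $V := \rho\big([\rho(X),\rho(Y)]\big)$; then, using $\rho^2 = \Id$,
$$\nabla_{\rho[\rho(X),\rho(Y)]}Z \;=\; \nabla_V Z \;=\; \rho(V)(W^\alpha)\,Z_\alpha \;=\; \big([\rho(X),\rho(Y)](W^\alpha)\big)\,Z_\alpha.$$
Subtracting the two displays gives $R(X,Y)Z = [\nabla_X,\nabla_Y]Z - \nabla_{\rho[\rho(X),\rho(Y)]}Z = 0$ for all homogeneous $X,Y$ and all $Z\in\Vect(M)$, which by bilinearity is flatness. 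Alternatively, since the curvature of an odd connection is a tensor by Theorem \ref{Thm:TorCurTensors}, it would suffice to check vanishing on the frame $\{Z_\alpha\}$, which is precisely the same computation with the $W^\alpha$ constant.

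I expect no serious obstacle here: the calculation is essentially forced once one observes that $\nabla_X$ acts as $\rho(X)$ on frame components, so that the curvature measures the failure of $\rho[\rho(X),\rho(Y)]$ to equal $[\rho(X),\rho(Y)]$ — which vanishes because $\rho$ is an involution. The only point requiring care is the parity bookkeeping, namely verifying that $\nabla_X$ and $\rho(X)$ carry the same Grassmann degree so that their graded commutators genuinely correspond; this argument mirrors, and slightly generalises, the proof of flatness of the SUSY odd connection in Proposition \ref{Prop:SMinkFlat}.
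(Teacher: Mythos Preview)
Your proof is correct and follows exactly the approach the paper takes: the paper simply states that the argument is identical to that of Proposition~\ref{Prop:SMinkFlat} upon minor notational changes, and your computation --- writing $Z$ in the chosen frame, observing that $\nabla_X$ acts as $\rho(X)$ on components, and then matching $[\nabla_X,\nabla_Y]Z$ with $\nabla_{\rho[\rho(X),\rho(Y)]}Z$ via $\rho^2=\Id$ --- is precisely that argument spelled out.
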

\begin{proof}
The proof is identical to the proof of Proposition \ref{Prop:SMinkFlat} upon minor notational changes.\\
\end{proof}
In complete parallel with the classical setting of smooth manifolds, the Christoffel symbols of an odd   Weitzenb\"{o}ck connection are given in terms of local vierbein fields and their derivative.  To recall the notion, let us start with a $n|m$-dimensional parallelisable supermanifold $M$ and let $\{Z_\alpha\}$ be a choice of parallelisation. Locally we define the vierbeins $E_\alpha^{\,\, \,a}(x)$, where $\widetilde{E_\alpha^{\,\, \,a}} = \widetilde{a} + \widetilde{\alpha}$, via
$$Z_\alpha =  (-1)^{\widetilde{a} \, \widetilde{\alpha}} \,E_\alpha^{\,\, \,a}(x)\frac{\partial}{\partial x^a}\,. $$
The sign factor is included for convenience. Dual to the global frame for the vector fields is a global basis for the one-forms, which we denote as $\{\omega^\alpha\}$, which consists of $n$ even one-forms and $m$ odd one-forms.  The co-vierbeins are similarly defined locally as
$$\omega^\alpha = (-1)^{\widetilde{a} \, \widetilde{\alpha}} \, \delta x^a \, E_a^{\,\,\, \alpha}(x),$$
where we have chosen the convention that $\widetilde{\delta x^a} = \widetilde{a}$, and so $\widetilde{E_a^{\,\, \,\alpha}} = \widetilde{a} + \widetilde{\alpha}$. We have the standard orthonormality conditions which are directly deduced from $\omega^\alpha(Z_\beta) = \delta^\alpha_{\,\,\, \beta}$ and $\delta x^a(\partial_b) = \delta^a_{\,\,\, b} $,
\begin{align}\label{Eqn:VierOrtNorm}
E_a^{\,\,\, \alpha}E_\beta^{\,\, \,a} = \delta^\alpha_{\,\,\, \beta}, && E_\alpha^{\,\, \,a}E_b^{\,\,\, \alpha} = \delta^a_{\,\,\, b}.
\end{align}
\begin{proposition}
The Christoffel symbols of an odd  Weitzenb\"{o}ck connection are given by
$$\Gamma_{ba}^{\,\,\,\, c} = (-1)^{\widetilde{b}(\widetilde{c} +1) + \widetilde{d}(\widetilde{c} + \widetilde{\alpha})}\, \rho_a^{\,\, d} \, E_\alpha^{\,\,\, c}\left( \frac{\partial E_b^{\,\, \alpha}}{\partial x^d} \right ) .$$
\end{proposition}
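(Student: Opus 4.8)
The plan is to compute the local Christoffel symbols of the odd Weitzenb\"ock connection directly from its defining formula $\nabla_X(Y^\alpha Z_\alpha) = \rho(X)(Y^\alpha)\,Z_\alpha$ and then match with the general local form \eqref{Eqn:LocForOddQuaiCon}. First I would write an arbitrary vector field in the coordinate basis as $Y = Y^b\,\partial_b$ and, using the co-vierbeins, express its frame components via $Y^\alpha = (-1)^{\widetilde b\,\widetilde\alpha}\,Y^b E_b^{\,\,\alpha}$ (up to the chosen sign conventions in the excerpt). Substituting $Z_\alpha = (-1)^{\widetilde a\,\widetilde\alpha}E_\alpha^{\,\,a}\partial_a$ back in, the defining relation becomes
\begin{align*}
\nabla_X Y = \rho(X)\!\left((-1)^{\widetilde b\,\widetilde\alpha}Y^b E_b^{\,\,\alpha}\right)(-1)^{\widetilde c\,\widetilde\alpha}E_\alpha^{\,\,c}\,\partial_c .
\end{align*}

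Next I would expand $\rho(X)$ in coordinates as $\rho(X) = (-1)^{\widetilde X + \widetilde b}X^b\rho_b^{\,\,d}\partial_d$ (reading $\rho$ off \eqref{Eqn:LocForOddQuaiCon} with $\Gamma=0$), apply the Leibniz rule to the product $Y^b E_b^{\,\,\alpha}$, and collect the two resulting terms. The term where the derivative hits $Y^b$ should, after using the orthonormality relations \eqref{Eqn:VierOrtNorm} $E_b^{\,\,\alpha}E_\alpha^{\,\,c} = \delta_b^{\,\,c}$, reproduce exactly the $\rho_a^{\,\,b}\,\partial Y^c/\partial x^b$ piece of \eqref{Eqn:LocForOddQuaiCon}, confirming consistency; the term where the derivative hits $E_b^{\,\,\alpha}$ gives the Christoffel contribution. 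Comparing that second term, coefficient by coefficient, with the $Y^b\,\Gamma_{ba}^{\,\,\,c}$ piece of \eqref{Eqn:LocForOddQuaiCon} yields the stated formula
$$\Gamma_{ba}^{\,\,\,\,c} = (-1)^{\widetilde b(\widetilde c+1) + \widetilde d(\widetilde c + \widetilde\alpha)}\,\rho_a^{\,\,d}\,E_\alpha^{\,\,c}\!\left(\frac{\partial E_b^{\,\,\alpha}}{\partial x^d}\right).$$

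The main obstacle will be the sign bookkeeping: I need to carefully track the Koszul signs produced when $\rho(X)$ (an odd-shifted operator, since $\widetilde{\rho_b^{\,\,d}} = \widetilde b + \widetilde d + 1$) is moved past $X^b$, $Y^b$, and $E_b^{\,\,\alpha}$, and when the coordinate components are reordered to match the index placement conventions fixed earlier in this subsection (components placed in the middle, $\widetilde{\delta x^a}=\widetilde a$, and the sign convention $Z_\alpha = (-1)^{\widetilde a\widetilde\alpha}E_\alpha^{\,\,a}\partial_a$). A useful consistency check along the way is to verify that the parity of the claimed $\Gamma_{ba}^{\,\,\,c}$ equals $\widetilde a + \widetilde b + \widetilde c + 1$ as required below \eqref{Eqn:LocForOddQuaiCon}: indeed $\widetilde{\rho_a^{\,\,d}} = \widetilde a + \widetilde d + 1$ and $\widetilde{E_\alpha^{\,\,c}} = \widetilde c + \widetilde\alpha$ and $\widetilde{\partial E_b^{\,\,\alpha}/\partial x^d} = \widetilde b + \widetilde\alpha + \widetilde d$ sum (mod $2$) to $\widetilde a + \widetilde b + \widetilde c + 1$, as it should. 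Since the curvature vanishes (shown in the preceding proposition) and the computation is otherwise a routine, if delicate, rewriting, no deeper idea is needed beyond this careful substitution and sign-matching.
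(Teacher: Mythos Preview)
Your proposal is correct and amounts to the same local computation as the paper, just organised slightly differently. The paper first observes that the defining formula together with the Leibniz rule forces $\nabla_X Z_\alpha = 0$ for every frame vector $Z_\alpha$ (the parallel-frame condition characteristic of Weitzenb\"{o}ck connections), and then reads off $\Gamma$ from that single equation; you instead expand $\rho(X)(Y^b E_b^{\,\,\alpha})\,E_\alpha^{\,\,c}\partial_c$ directly and match against \eqref{Eqn:LocForOddQuaiCon}. Both routes use the same ingredients (Leibniz rule, orthonormality \eqref{Eqn:VierOrtNorm}, and sign bookkeeping), and your parity check is a good sanity test; the paper's organisation has the advantage of isolating the conceptual content $\nabla_X Z_\alpha = 0$ before diving into the index gymnastics.
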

\begin{proof}
the proof follows in more-or-less the same way as the classical case. Directly from the Leibniz rule we see that
$$\nabla_X(Y^\alpha Z_\alpha) = \rho(X)Y^\alpha \, Z_\alpha + (-1)^{(\widetilde{X}+1)(\widetilde{Y} + \widetilde{\alpha})} \, Y^\alpha \, \nabla_X Z_\alpha,$$
and hence the final term must vanish, i.e., $\nabla_X Z_\alpha =0$. It is this result that will determine the Christoffel symbols. Locally  using the vierbeins and the fact that $X \in \Vect(M)$ is arbitrary we see that the previous result amounts to
$$ (-1)^{\widetilde{c} \, \widetilde{\alpha}} \, \rho_a^{\,\,\, b} \frac{\partial E_\alpha^{\,\,\, c}}{\partial x^b} = - (-1)^{(\widetilde{a}+1)(\widetilde{\alpha} + \widetilde{b}) + \widetilde{b} \, \widetilde{\alpha}} \, E_\alpha^{\,\,\, b}\Gamma_{ba}^{\,\,\,\, c}\,.$$
Now we multiply by $E_d^{\,\,\, \alpha}$ from the right, pull it through the Christoffel symbol and use the orthonormality of the vierbeins and co-vierbeins  to get
\begin{equation}\label{Eqn:WeitChrsA}
\Gamma_{ba}^{\,\,\,\, c} = - (-1)^{\widetilde{d}(\widetilde{c} +1)}\, \rho_a^{\,\, d}\, \left( \frac{\partial E_\alpha^{\,\,\, c}}{\partial x^d} \right) E_b^{\,\,\, \alpha}.
\end{equation}
Using the orthonormality, it is clear that
$$\left(  \frac{\partial E_\alpha^{\,\,\, c}}{\partial x^d} \right) E_b^{\,\,\, \alpha} = - (-1)^{\widetilde{d}(\widetilde{c} + \widetilde{\alpha})}\, E_\alpha^{\,\,\, c} \left( \frac{\partial E_b^{\,\,\, \alpha}}{\partial x^d} \right).$$
Substituting this into \eqref{Eqn:WeitChrsA} produces the desired result
$$\Gamma_{ba}^{\,\,\,\, c} = (-1)^{\widetilde{b}(\widetilde{c} +1) + \widetilde{d}(\widetilde{c} + \widetilde{\alpha})}\, \rho_a^{\,\, d}   \, E_\alpha^{\,\,\, c}\left( \frac{\partial E_b^{\,\, \alpha}}{\partial x^d} \right ) .$$
\end{proof}

\begin{definition}\label{Def:OddMetric}
Let $M$ be an $n|n$-dimensional parallelisable supermanifold and let  $\{ X_i \,; Y_j\}$, where $i,j = 1,2, \cdots n$ be a chosen  parallelisation.  The \emph{induced odd Riemannian metric} on $M$ is defined as $g(X_i, X_j) = g(Y_i, Y_j) =0$ and $g(X_i, Y_j) = \delta_{ij}$.
\end{definition}
\begin{proposition}\label{Prop:OddWeitComOddMet}
An odd  Weitzenb\"{o}ck connection on a supermanifold $M$ (see Definition \ref{Def:OddWeitzCon}) is compatible with the induced odd metric (see Definition \ref{Def:OddMetric} and  Definition \ref{Def:OddConCompG}).
\end{proposition}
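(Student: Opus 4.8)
The cleanest route is to reduce the claim to the compatibility condition \eqref{Eqn:OddConCompG} evaluated on basis vector fields. By Proposition \ref{Prop:WeitzIndPara} the odd Weitzenb\"{o}ck connection $\nabla$ depends only on the choice of odd involution $\rho$, not on the parallelisation; so I am free to work with the particular parallelisation $\{X_i\,; Y_j\}$ of Definition \ref{Def:OddMetric}, the one whose frame is $\rho$-adapted in the sense that $\rho(X_i) = Y_i$ and $\rho(Y_j) = X_j$ (such a frame exists since a parallelisation determines $\rho$ up to the induced involution, and the statement only asks for compatibility with \emph{the} induced metric of \emph{that} parallelisation). Since $g$, $\nabla$ and $\rho$ are all $C^\infty(M)$-(bi)linear in the appropriate slots modulo the Leibniz correction already accounted for in \eqref{Eqn:OddConCompG}, it suffices to verify \eqref{Eqn:OddConCompG} when $Y$ and $Z$ each range over the frame $\{X_i\,; Y_j\}$, and when $X$ is arbitrary.

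The key computational input is that the Weitzenb\"{o}ck connection annihilates the chosen frame: $\nabla_X X_i = \nabla_X Y_j = 0$ for all $X$. This is immediate from Definition \ref{Def:OddWeitzCon} writing a frame field as $Z_\alpha = \delta^\beta_{\ \alpha} Z_\beta$ with constant components, exactly as in the displayed computation inside the proof of the preceding proposition ($\nabla_X Z_\alpha = 0$). Given this, the right-hand side of \eqref{Eqn:OddConCompG} vanishes identically on frame arguments: $G(\nabla_X Z_\alpha, Z_\beta) + (-1)^{(\widetilde X+1)\widetilde{Z_\alpha}} G(Z_\alpha, \nabla_X Z_\beta) = 0$. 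So the whole statement collapses to checking that the left-hand side $\rho(X)\big(g(Z_\alpha, Z_\beta)\big)$ also vanishes for all frame pairs. But by Definition \ref{Def:OddMetric} each $g(Z_\alpha, Z_\beta)$ is one of the \emph{constants} $0$ or $\delta_{ij}$, hence $\rho(X)$ — a derivation-type operator, being an odd endomorphism composed into the covariant-derivative-on-functions rule $\nabla_X f = \rho(X)f$ of \eqref{Eqn:OddConFun} — kills it. (More precisely $\rho(X)$ applied to a constant function is zero since $\rho(X)$ is an ordinary vector field on $M$.) Therefore both sides of \eqref{Eqn:OddConCompG} vanish on all frame arguments, and by multilinearity the identity holds for all homogeneous $X, Y, Z \in \Vect(M)$, which is precisely the assertion that $\nabla$ is compatible with $g$ in the sense of Definition \ref{Def:OddConCompG}.

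The only genuinely delicate point is the reduction from arbitrary $Y,Z$ to frame elements: one must confirm that the Leibniz-type anomalies generated on both sides of \eqref{Eqn:OddConCompG} when $Y \rightsquigarrow f\,Y$ or $Z \rightsquigarrow f\,Z$ cancel in the same way they do in the classical proof of metric compatibility, so that verifying the identity on a frame suffices. This is a routine sign-bookkeeping check using the third axiom of Definition \ref{Def:OddQuaCon} together with \eqref{Eqn:OddConCompG} itself read as the $(0,2)$-tensor Leibniz rule; I would spell it out only to the extent of noting that the term $\rho(X)f \cdot g(Y,Z)$ appears with matching sign on both sides. I expect this to be the main (mild) obstacle; everything else is immediate from $\nabla_X Z_\alpha = 0$ and the constancy of the metric components, so the proof in the paper can legitimately be one or two lines — most likely "The proof follows as in the classical case, using $\nabla_X Z_\alpha = 0$ and that the components $g(Z_\alpha, Z_\beta)$ are constant."
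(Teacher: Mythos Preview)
Your proposal is correct and uses essentially the same ingredients as the paper: the vanishing $\nabla_X Z_\alpha = 0$ and the constancy of $g(Z_\alpha, Z_\beta)$ in the chosen frame. The only organisational difference is that the paper expands arbitrary $Y_1 = Y_1^\alpha Z_\alpha$, $Y_2 = Y_2^\beta Z_\beta$ directly and computes $\rho(X)\big(g(Y_1,Y_2)\big)$ via the Leibniz rule on the coefficients, thereby absorbing your ``Leibniz anomaly'' check into the single calculation rather than treating it as a separate reduction step; your parenthetical about a $\rho$-adapted frame is unnecessary (neither the paper's argument nor yours actually uses it), but it does no harm.
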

\begin{proof}
 This follows from the  corresponding statement for Weitzenb\"{o}ck connections and the fact that an odd connection can canonically be obtained from an affine. Alternatively, this follows from direct calculation.  We write the chosen parallelisation  as $\{Z_\alpha\}$ and write for $Y_1 = Y_1^\alpha Z_\alpha$ and $Y_2 = Y_2^\beta Z_\beta$ for two homogeneous but otherwise arbitrary vector fields. Then via direct calculation
\begin{align*}
\rho(X) \big (g(Y_1, Y_2) \big )  &=  \rho(X)\big( g(Y_1^\alpha Z_\alpha, Y_2^\beta Z_\beta) \big) =  (-1)^{\widetilde{\alpha}(\widetilde{Y}_2 + \widetilde{\beta})} \, \rho(X)\big( Y_1^\alpha Y_2^\beta \,  g( Z_\alpha ,Z_\beta) \big)\\
&=  (-1)^{\widetilde{\alpha}(\widetilde{Y}_2 + \widetilde{\beta})} \,  (\rho(X)Y_1^\alpha) Y_2^\beta \,  g( Z_\alpha ,Z_\beta)  + (-1)^{\widetilde{\alpha}(\widetilde{Y}_2 + \widetilde{\beta}) + (\widetilde{X}+1)(\widetilde{Y}_1 + \widetilde{\alpha})} \,  Y_1^\alpha(\rho(X) Y_2^\beta) \,  g( Z_\alpha ,Z_\beta) \\
& = g(\nabla_X Y_1, Y_2) + (-1)^{(\widetilde{X}+1)\widetilde{Y}_1} \, g(Y_1, \nabla_X Y_2),
\end{align*}
where we have used the fact that the odd Riemannian metric is constant in the chosen basis. Comparing this with \eqref{Eqn:OddConCompG} we see establish the proposition.\\
\end{proof}

\section{Closing Remarks}
Starting from  quite general  odd quasi-connections, we constructed the notion of an odd connection on a supermanifold. Importantly, for odd connections, the torsion and  curvature are tensors, a property we expect any reasonable generalised notion of an affine connection to have. In particular, the tensorial property of the torsion and curvature guarantee that they can locally be written in terms of their components with respect to a chosen basis, say the coordinate basis.  We explored the relationship  between odd connections and affine connections on a supermanifold  by realising that, up to a tensorial term, any given odd connection can be induced by an arbitrary affine connection. This was then used to tackle the question of the existence of odd connections. While affine connections always exist, there are severe restrictions on the supermanifolds that admit odd connections. The existence of an odd involution of the module of vector fields on the supermanifold is essential in the definition of an odd connection.  For example, $n|n$-dimensional Lie supergroups and, more generally, $n|n$-dimensional parallisable supermanifolds always admit odd connections.  The prototypical Lie supergroups here is $\textnormal{GL}(m|m)$, which is of dimension $2m^2|2m^2$.  Moreover, we have shown that $\mathcal{N} =1$ super-Minkowski spacetime comes equipped with a natural odd connection. This gives a slightly different perspective on the geometric nature of  supersymmetry.  The example of super-Minkowski spacetime leads to odd Weitzenb\"{o}ck connections on $n|n$-dimensional parallisable supermanifolds and it was shown that these odd connections depend only on the choice of odd involution. \par
 Applications in physics are probably limited due to the need for an equal number even and odd dimensions alongside the supermanifolds being parallisable. However, one can construct novel field theories using odd connections. For example, if $(M,g)$  is a parallisable even Riemannian supermanifold of dimension $n|n$, then one can construct an action for an odd scalar field $\Psi(x)$, which we write locally as (being slack with an overall sign)
$$S[\Psi]= \pm \int_M D[x]  \,\sqrt{|g|}  ~ (-1)^{\widetilde{b}}\, \half  g^{ab}(x)\nabla_b \Psi(x) \nabla_a \Psi(x),$$
where $\nabla$ is the odd connection canonically generated by the Levi-Civita connection (for example) and a choice of odd involution. Note that using a odd connection is essential here as otherwise the action would be identically zero: we would be contracting something symmetric in its indices with something antisymmetric in its indices.  Other theories can similarly defined such as even scalar field theories on an odd Riemannian supermanifold or a symplectic/Poisson  supermanifold. While such theories deserve further attention, to do so is outside the scope of this paper.

\section*{Acknowledgements}
The authors thank the anonymous referee for their valuable comments that have served to improve the overall presentation of this work.  J. Grabowski acknowledges research founded by the  Polish National Science Centre grant HARMONIA under the contract number 2016/22/M/ST1/00542.

\newpage

\appendix

\section{}

\subsection{Proof of the Algebraic Bianchi Identity}\label{App:ProofAlgBian}
\begin{proof}[Proof of Theorem \ref{Thm:Bianchi1}] The proof is via direct computation following the proof of  the standard algebraic Bianchi identity. Let $X,Y$ and $Z \in \Vect(M)$ be homogeneous. Then directly from the definition of the  curvature (Definition \ref{Def:RieCurv}) we have
\begin{align*}
&(-1)^{\widetilde{X}(\widetilde{Z} +1)}\,R(X,Y)Z + (-1)^{\widetilde{Y}(\widetilde{X} +1)}\,R(Y,Z)X  + (-1)^{\widetilde{Z}(\widetilde{Y} +1)}\,R(Z,X)Y \\
& =(-1)^{\widetilde{X}(\widetilde{Z} +1)}\, \nabla_X \nabla_Y Z - (-1)^{\widetilde{X}(\widetilde{Z} +1) + (\widetilde{X}+1)(\widetilde{Y}+1)}\,\nabla_X \nabla_Y Z  - (-1)^{\widetilde{X}(\widetilde{Z} +1)}\, \nabla_{\rho[\rho(X), \rho(Y)]}Z\\
& + (-1)^{\widetilde{Y}(\widetilde{X} +1)}\, \nabla_Y \nabla_Z X - (-1)^{\widetilde{Y}(\widetilde{X} +1) + (\widetilde{Y}+1)(\widetilde{Z}+1)}\,\nabla_Z \nabla_Y X  - (-1)^{\widetilde{Y}(\widetilde{X} +1)}\, \nabla_{\rho[\rho(Y), \rho(Z)]}X\\
& + (-1)^{\widetilde{Z}(\widetilde{Y} +1)}\, \nabla_Z \nabla_X Y - (-1)^{\widetilde{Z}(\widetilde{Y} +1) + (\widetilde{Z}+1)(\widetilde{X}+1)}\,\nabla_X \nabla_Z Y  - (-1)^{\widetilde{Z}(\widetilde{Y} +1)}\, \nabla_{\rho[\rho(Z), \rho(X)]}Y \\
\\
&= (-1)^{\widetilde{X}(\widetilde{Z} +1)}\,\nabla_X \big(\nabla_YZ +(-1)^{\widetilde{Z}\widetilde{Y}}\, \nabla_Z Y \big) - (-1)^{\widetilde{X}(\widetilde{Z}+1)}\,\nabla_{\rho[\rho(X), \rho(Y)]}Z\\
 &+ (-1)^{\widetilde{Y}(\widetilde{X} +1)}\,\nabla_Y \big(\nabla_Z X +(-1)^{\widetilde{X}\widetilde{Z}}\, \nabla_X Z \big) - (-1)^{\widetilde{Y}(\widetilde{X}+1)}\,\nabla_{\rho[\rho(Y), \rho(Z)]}X\\
 &+ (-1)^{\widetilde{Z}(\widetilde{Y} +1)}\,\nabla_Z \big(\nabla_X Y +(-1)^{\widetilde{Y}\widetilde{X}}\, \nabla_Y X \big) - (-1)^{\widetilde{Z}(\widetilde{Y}+1)}\,\nabla_{\rho[\rho(X), \rho(Y)]}Z,\\
\\
&\textnormal{now using the definition of the torsion (Definition \ref{Def:Torsion}) we rewtite this as}\\
\\
&= (-1)^{\widetilde{X}(\widetilde{Z} +1)}\,\nabla_X \big(T(Y,Z) - (-1)^{\widetilde{Y}} \rho[\rho(Y), \rho(Z)] \big) - (-1)^{\widetilde{X}(\widetilde{Z}+1)}\,\nabla_{\rho[\rho(X), \rho(Y)]}Z\\
&+ (-1)^{\widetilde{Y}(\widetilde{X} +1)}\,\nabla_Y \big((T(Z,X) - (-1)^{\widetilde{Z}} \rho[\rho(Z), \rho(X)] \big) - (-1)^{\widetilde{Y}(\widetilde{X}+1)}\,\nabla_{\rho[\rho(Y), \rho(Z)]}X\\
&+ (-1)^{\widetilde{Z}(\widetilde{Y} +1)}\,\nabla_Z \big((T(X,Y) - (-1)^{\widetilde{X}} \rho[\rho(X), \rho(Y)] \big) - (-1)^{\widetilde{Z}(\widetilde{Y}+1)}\,\nabla_{\rho[\rho(Z), \rho(X)]}Y,\\
\\
&= (-1)^{\widetilde{X}(\widetilde{Z} +1)}\, \nabla_X\big( T(Y,Z)\big)+ (-1)^{\widetilde{Y}(\widetilde{X} +1)}\, \nabla_Y\big( T(Z,X)\big)+ (-1)^{\widetilde{Z}(\widetilde{Y} +1)}\, \nabla_Z\big( T(X,Y)\big)\\
&- (-1)^{\widetilde{X}(\widetilde{Z} +1) + \widetilde{Y}}\big( \nabla_X \rho[\rho(Y), \rho(Z) ]  + (-1)^{\widetilde{X}(\widetilde{Y} +\widetilde{Z} +1)}  \, \nabla_{\rho[\rho(Y), \rho(Z)]}X\big)\\
&- (-1)^{\widetilde{Y}(\widetilde{X} +1) + \widetilde{Z}}\big( \nabla_Y \rho[\rho(Z), \rho(X) ]  + (-1)^{\widetilde{Y}(\widetilde{Z} +\widetilde{X} +1)}  \, \nabla_{\rho[\rho(Z), \rho(X)]}Y\big)\\
&- (-1)^{\widetilde{Z}(\widetilde{Y} +1) + \widetilde{X}}\big( \nabla_Z \rho[\rho(X), \rho(Y) ]  + (-1)^{\widetilde{Z}(\widetilde{X} +\widetilde{Y} +1)}  \, \nabla_{\rho[\rho(X), \rho(Y)]}Z\big),\\
\\
&\textnormal{using the definition of the torsion (Definition \ref{Def:Torsion}) again and a little rearanging}\\
\\
&= (-1)^{\widetilde{X}(\widetilde{Z} +1)}\, \nabla_X\big( T(Y,Z)\big)+ (-1)^{\widetilde{Y}(\widetilde{X} +1)}\, \nabla_Y\big( T(Z,X)\big)+ (-1)^{\widetilde{Z}(\widetilde{Y} +1)}\, \nabla_Z\big( T(X,Y)\big)\\
&-(-1)^{\widetilde{X}(\widetilde{Z} +1) + \widetilde{Y}}\, T\big( X, \rho[\rho(Y), \rho(Z)]  \big)-(-1)^{\widetilde{Y}(\widetilde{X} +1) + \widetilde{Z}}\, T\big( Y, \rho[\rho(Z), \rho(X)]  \big)\\
&-(-1)^{\widetilde{Z}(\widetilde{Y} +1) + \widetilde{X}}\, T\big( Z, \rho[\rho(X), \rho(Y)]  \big)\\
&+ \rho \big( [\rho(X),[\rho(Y), \rho(Z)]] - [[\rho(X), \rho(Y)] , \rho(Z)]-(-1)^{(\widetilde{X}+1)(\widetilde{Y}+1)} \, [\rho(Y),[\rho(X), \rho(Z)]]\big).
\end{align*}
The final term vanishes due to the Jacobi identity (here written in Loday--Leibniz form).\\
\end{proof}
\vfill

\begin{thebibliography}{10}
\begin{small}

\bibitem{Arnlind:2020}
J.~Arnlind, K.~Ilwale \& G. Landi,
On q-deformed Levi-Civita connections, \href{https://arxiv.org/abs/2005.02603}{\texttt{arXiv:2005.02603 [math.QA]}}.


\bibitem{Aronwitt:1976}
R.~Arnowitt \& P.~Nath, Riemannian geometry in spaces with Grassman coordinates, \href{https://doi.org/10.1007/BF00762016}{\emph{Gen Relat Gravit}} \textbf{7} (1976), 89--103.

\bibitem{Batalin:2008}
I.A.~Batalin \& K. Bering,
Odd scalar curvature in anti-Poisson geometry,
\href{https://doi.org/10.1016/j.physletb.2008.03.066}{\emph{Physics Letters B}}
\textbf{663} (2008), 132--135.


\bibitem{Berezin:1976}
F.A.~Berezin \& D.A.~Leites,
 Supermanifolds, \href{http://www.mathnet.ru/php/archive.phtml?wshow=paper&jrnid=dan&paperid=39282&option_lang=eng}{\emph{Soviet Math. Dokl.}} \textbf{16} (1975), no. 5, 1218–1222 (1976).

\bibitem{Bouarroudj:2012}    S.~Bouarroudj, P.~Ya.~Grozman, D.~A.~Leites \& I.~M.~Shchepochkina,
Minkowski superspaces and superstrings as almost real-complex supermanifolds,
\href{https://link.springer.com/article/10.1007/s11232-012-0141-3}{\emph{Theoretical and Mathematical Physics}}\textbf{173} (2012), 1687--1708.


\bibitem{Boyer:1991}
C.P.~Boyer \& O.A.~S\'{a}nchez-Valenzuela,
Lie supergroup actions on supermanifolds,
\href{https://doi.org/10.2307/2001621}{\emph{Trans. Amer. Math. Soc.}} \textbf{323} (1991), no. 1, 151--175.

\bibitem{Bruce:2019}
A.J.~Bruce,
Connections adapted to non-negatively graded structures,
\href{https://doi.org/10.1142/S021988781950021X}{\emph{Int. J. Geom. Methods Mod. Phys.}} \textbf{16} (2019), no. 2, 1950021, 32 pp.


\bibitem{Bruce:2019b}
A.J.~Bruce,
On a $\Z_2^n$-graded version of supersymmetry,
\href{https://doi.org/10.3390/sym11010116}{\emph{Symmetry}} \textbf{11} (2019), no. 1,  paper no. 116, 20 p.


\bibitem{Cantrijn:2003}
F. Cantrijn \& B. Langerock, Generalised connections over a vector bundle map, \href{https://doi.org/10.1016/S0926-2245(02)00164-X}{\emph{Diff. Geom. Appl.}}  \textbf{18} (2003), no. 3, 295--317.

\bibitem{Carmeli:2011}
C.~Carmeli, L.~Caston \& R.~Fioresi,  \href{http://www.ems-ph.org/books/book.php?proj_nr=132}{Mathematical foundations of supersymmetry}, EMS Series of Lectures in Mathematics, \emph{European Mathematical Society} (EMS), Z\"{u}rich, 2011, xiv+287 pp. ISBN: 978-3-03719-097-5.

\bibitem{Ciupala:2003}
C.~Ciupal\u{a},
Linear connections on almost commutative algebras,
\href{http://emis.impa.br/EMIS/journals/AMUC/_vol-72/_no_2/_ciupala/ciupala.html}{\emph{Acta Math. Univ. Comen., New Ser.}} \textbf{72}  (2003), No. 2, 197--206.


\bibitem{Duplij:1991}
S.~Duplij,
On the semigroup nature of superconformal symmetry,
\href{https://doi.org/10.1063/1.529038}{\emph{J. Math. Phys.}} \textbf{32} (1991), 2959.

\bibitem{Duplij:1997}
S.~Duplij,
Noninvertible $\textrm{N}=1$ superanalog of complex structure,
\href{https://doi.org/10.1063/1.531882}{\emph{J. Math. Phys.}} \textbf{38} (1997), 1035.


\bibitem{Etayo:1993}
F.~Etayo, A coordinate-free survey on pseudo-connections,
\emph{Rev. Acad. Canar. Cienc.} \textbf{5} (1993), No. 1, 125--137.


\bibitem{Freed:1999}
D.S.~Freed,
\href{https://bookstore.ams.org/fls/}{Five lectures on supersymmetry}, \emph{American Mathematical Society, Providence, RI}, 1999. viii+119 pp. ISBN: 0-8218-1953-4.

\bibitem{Geyer:2004}
B.~Geyer \& P.M.~Lavrov, Basic properties of Fedosov supermanifolds, \emph{TSPU Vestnik}, \textbf{44N7}(2004), 62--68.

\bibitem{Gualtieri:2010}
 M.~Gualtieri, \href{https://doi.org/10.1093/acprof:oso/9780199534920.001.0001}{Branes on Poisson Varieties}, in: The Many Facets of Geometry: A Tribute to Nigel Hitchin,  \emph{Oxford Scholarship Online}, 2010, ISBN-13: 9780199534920.


\bibitem{Kazinski:2005}
P.O.~Kazinski, S.L.~Lyakhovich \& A.A.~Sharapov,
Lagrange structure and quantization,
\href{https://doi.org/10.1088/1126-6708/2005/07/076}{\emph{J. High Energy Phys.}}  \textbf{076} (2005), no. 7, 42 pp. 

\bibitem{Khudaverdian:1991}
O.M.~Khudaverdian,
Geometry of superspace with even and odd brackets,
\href{https://doi.org/10.1063/1.529209}{\emph{J. Math. Phys.}} \textbf{32} (1991), no. 7, 1934--1937.

\bibitem{Khudaverdian:2004}
H.M.~Khudaverdian,
Semidensities on odd symplectic supermanifolds,
\href{https://doi.org/10.1007/s00220-004-1083-x}{\emph{Comm. Math. Phys.} } \textbf{247} (2004), no. 2, 353--390.

\bibitem{Khudaverdian:2002}
H.M.~Khudaverdian \& Th.Th.~Voronov,
On odd Laplace operators,
\href{https://doi.org/10.1023/A:1021671812079}{\emph{Lett. Math. Phys.}} \textbf{62} (2002), no. 2, 127--142.

\bibitem{Khudaverdian:2004b}
H.M.~Khudaverdian \& Th.Th.~Voronov,
On odd Laplace operators. II, \href{https://doi.org/10.1090/trans2/212/09}{\emph{Geometry, topology, and mathematical physics}}, 179--205, Amer. Math. Soc. Transl. Ser. 2, 212, Adv. Math. Sci., 55, \emph{Amer. Math. Soc., Providence, RI}, 2004.

\bibitem{Khudaverdian:2013}
H.M.~Khudaverdian \& Th.Th.~Voronov,
Geometry of differential operators of second order, the algebra of densities, and groupoids,
\href{https://doi.org/10.1016/j.geomphys.2012.10.008}{\emph{J. Geom. Phys.}} \textbf{64} (2013), 31--53.


\bibitem{Kosmann-Schwarzbach:2002}
Y.~Kosmann-Schwarzbach \& J.~Monterde,
Divergence operators and odd Poisson brackets,
\href{https://doi.org/10.5802/aif.1892}{\emph{Ann. Inst. Fourier}}, \textbf{52} (2) (2002), 419--456.


\bibitem{Leites:1980}
D.A.~Leites,  Introduction to the theory of supermanifolds,   \href{https://doi.org/10.1070/RM1980v035n01ABEH001545}{\emph{Russ. Math. Surv.}} \textbf{35}  (1980), no. 1, 1--64.


\bibitem{Leuther:2011}
T.~Leuther and F.~Radoux,	
Natural and Projectively Invariant Quantizations on Supermanifolds,
\href{https://doi.org/10.3842/SIGMA.2011.034}{\emph{SIGMA}} \textbf{7} (2011), 034, 12 pages.


\bibitem{Lyakhovich:2004}
S.L.~Lyakhovich \& A.A.~Sharapov,
Characteristic classes of gauge systems,
\href{https://doi.org/10.1016/j.nuclphysb.2004.10.001}{\emph{Nuclear Phys. B}} \textbf{703} (2004), no. 3, 419--453.

\bibitem{Mackenzie:1987}
K.C.H.~Mackenzie,
\href{https://doi.org/10.1017/CBO9780511661839}{Lie groupoids and Lie algebroids in differential geometry},
London Mathematical Society Lecture Note Series, 124. \emph{Cambridge University Press, Cambridge}, 1987. xvi+327 pp. ISBN: 0-521-34882-X

\bibitem{Manin:1997}
 Y.I.~Manin, \href{http://link.springer.com/book/10.1007\%2F978-3-662-07386-5}{Gauge field theory and complex geometry}, Second edition, Fundamental Principles of Mathematical Sciences, 289. \emph{Springer-Verlag}, Berlin, 1997. xii+346 pp. ISBN: 3-540-61378-1

\bibitem{Mangiarotti:2000}
L.~Mangiarotti \& G.~Sardanashvily,
\href{https://doi.org/10.1142/9789812813749}{Connections in classical and quantum field theory}, \emph{World Scientific Publishing Co., Inc.}, River Edge, NJ, 2000. x+504 pp. ISBN: 981-02-2013-8


\bibitem{Monterde:1996}
J.~Monterde \&  O.A.~S\'{a}nchez-Valenzuela,
The exterior derivative as a Killing vector field,
\href{https://doi.org/10.1007/BF02761099}{\emph{Israel J. Math.}} \textbf{93} (1997), 157--170.

\bibitem{Penas:2019}
V.A.~Penas,
Deformed Weitzenb\"{o}ck Connections and Double Field Theory,
 	\href{ https://doi.org/10.1002/prop.201800077}{\emph{Fortsch. Phys.}} \textbf{67} (2019), no.3, 1800077.


\bibitem{Quillen:1985}
D.~Quillen, Superconnections and the Chern character,
\href{https://doi.org/10.1016/0040-9383(85)90047-3}{\emph{Topology}} \textbf{24} (1985), no. 1, 89--95.

\bibitem{Salam:1974}
A.~Salam \& J.~Strathdee, Super-gauge transformations,
\href{https://doi.org/10.1016/0550-3213(74)90537-9}{\emph{Nucl. Phys. B}} \textbf{76}  (1974), 477--482.


\bibitem{Schwarz:1993}
A.S.~Schwarz,
Geometry of Batalin-Vilkovisky quantization,
\href{http://projecteuclid.org/euclid.cmp/1104253279}{\emph{Comm. Math. Phys.}} \textbf{155} (1993), no. 2, 249--260.


\bibitem{Varadrajan:2004}
 V.S.~Varadarajan, \href{http://bookstore.ams.org/cln-11}{Supersymmetry for mathematicians: an introduction},
Courant Lecture Notes in Mathematics, 11. New York University, Courant Institute of Mathematical Sciences, New York; \emph{American Mathematical Society}, Providence, RI, 2004. viii+300 pp. ISBN: 0-8218-3574-2.



\bibitem{Voronov:2002}
Th.Th.~Voronov,
Graded manifolds and Drinfeld doubles for Lie bialgebroids, in \emph{Quantization, Poisson brackets and beyond (Manchester, 2001)}, 131--168,
Contemp. Math., \textbf{315}, \emph{Amer. Math. Soc., Providence, RI}, 2002. 


\bibitem{Voronov:2016}
Th.Th.~Voronov,
On volumes of classical supermanifolds,
\href{https://doi.org/10.1070/SM8705}{\emph{Sb. Math.}} \textbf{207} (2016), no. 11-12, 1512--1536. 

\bibitem{Wong:1962}
Y-C.~Wong, Linear connections and quasi-connections on a differentiable manifold,
    \href{https://doi.org/10.2748/tmj\%2F1178244203}{\emph{Tohoku Math. J. (2)}} Volume 14, Number 1 (1962), 48--63.

\end{small}
\end{thebibliography}
\end{document}